\def\cB{{\ca B}}
\def\cD{{\ca D}}
\def\cE{{\ca E}}
\def\cP{{\ca P}}
\def\cS{{\ca S}}
\def\cW{{\ca W}}
\def\mD{{\mathcal D}}
\def\mE{{\mathcal E}}
\def\mW{{\mathcal W}}
\def\bC{{\mathbb C}}           %%%  complex numbers and so on
\def\bI{{\mathbb I}}
\def\bR{{\mathbb R}}
\def\bS{{\mathbb S}}
\def\gD{{\mathfrak D}}
\def\gF{{\mathfrak F}}
\def\gG{{\mathfrak G}}
\def\gS{{\mathfrak S}}
\def\gs{{\mathfrak s}}
\def\gw{{\mathfrak w}}
\def\gg{{\mathfrak g}}
\def\beq{\begin{eqnarray}}
\def\eeq{\end{eqnarray}}
\def\pa{\partial}
\def\at{\left(}               %%  open (
\def\ct{\right)}              %%  close )
\newcommand{\ca}[1]{{\cal #1}}         %%  calligraphic
\def\alp{\alpha}
\def\be{\beta}
\def\ga{\gamma}
\def\de{\delta}
\def\ep{\varepsilon}
\def\si{\sigma}
\def\om{\omega}
\def\th{\theta}
\def\Ga{\Gamma}
\def\De{\Delta}
\def\La{\Lambda}
\def\Si{\Sigma}
\def\Om{\Omega}
\newcommand{\wick}[1]{:\!{#1}\!:}
\newcommand{\supp}{\text{supp }}
\newtheorem{theorem}{Theorem}[section]
\newtheorem{proposition}{Proposition}[section]
\newtheorem{lemma}{Lemma}[section]
\newtheorem{definition}{Definition}[section]
\newcommand{\se}[1]{\section{#1}}
\def\vsp{\vspace{0.2cm}}
\def\vspp{\vspace{0.1cm}}
\def\sse #1 {\vsp\ifhmode{\par}\fi\refstepcounter{subsection}
  \noindent {\bf\thesubsection}. {\em #1}.\quad
  \addcontentsline{toc}{subsection}{\protect\numberline{\thesubsection} #1}%
  }
\def\ssb #1 {\vsp\ifhmode{\par}\fi\refstepcounter{subsection}
  \noindent {\bf\thesubsection.} {\bf #1.}\quad
  \addcontentsline{toc}{subsection}{\protect\numberline{\thesubsection} #1}%
  }
\def\ssa #1 {\ifhmode{\par}\fi\refstepcounter{subsection}
  \noindent {\bf\thesubsection.} {\bf #1.}\quad
  \addcontentsline{toc}{subsection}{\protect\numberline{\thesubsection} #1}%
  }
\def\remark #1 {\vsp\vspp\ifhmode{\par}\fi\noindent\noindent {\bf Remark.} {#1}\vsp\vspp\par}
\def\remarks #1 {\vsp\vspp\ifhmode{\par}\fi\noindent\noindent {\bf Remarks.} {#1}\vsp\vspp\par}
\begin{document}

\hfill{\sl September 2010}
\par
\bigskip
\par
\rm

%%%%%%%%%%%%%   Title %%%%%%%%%%%%%%%%%%%%%%%%%%

\par
\bigskip
\LARGE
\noindent
{\begin{center}\bf Approximate KMS states for scalar and spinor fields\\in Friedmann-Robertson-Walker spacetimes\end{center}}
\bigskip
\bigskip
\par
\rm
\normalsize

%%%%%%%%%%%%%%%%%%%%%%%%%%%%%%%%%%%%%%%%%%%%%
%%%%%%%%%%%% Authors %%%%%%%%%%%%%%%%%%%%%%%%

\noindent {\bf Claudio Dappiaggi$^{1,a}$}, {\bf Thomas-Paul Hack$^{1,b}$}
and {\bf Nicola Pinamonti$^{2,c}$}\\
\par
\small
\noindent $^1$
II. Institut f\"ur theoretische Physik, Universit\"at Hamburg,
Luruper Chaussee 149,
D-22761 Hamburg, Germany.

\noindent $^2$
Dipartimento di Matematica, Universit\`a di Roma ``Tor Vergata'', Via della Ricerca Scientifica,
I-00133 Roma, Italy.\smallskip
%Dipartimento di Fisica Nucleare e Teorica, Universit\`a di Pavia and  Istituto Nazionale di Fisica Nucleare
%Sezione di Pavia, via A.Bassi 6 I-27100 Pavia, Italy.\smallskip

\noindent $^a$  claudio.dappiaggi@desy.de, $^b$ thomas-paul.hack@desy.de, $^c$  pinamont@mat.uniroma2.it\\
 \normalsize

\par

\rm\normalsize
\noindent {\small Version of \today}

%\linespread{1.5}
\rm\normalsize

%%%%%%%%%%%% Date %%%%%%%%%%%%%%%%%%%%%%%%%%

\par
\bigskip

\noindent
\small
{\bf Abstract}.
We construct and discuss Hadamard states for both scalar and Dirac spinor fields in a large class of
spatially flat Friedmann-Robertson-Walker spacetimes characterised by an initial phase either of exponential or of power-law
expansion. The states we obtain can be interpreted as being in thermal equilibrium at the time when the scale factor $a$ has a specific value $a=a_0$. In the case $a_0=0$, these states fulfil a strict KMS condition on the boundary of the spacetime, which is either a cosmological horizon, or a Big Bang hypersurface. Furthermore, in the conformally invariant case, they are conformal KMS states on the full spacetime. However, they provide a natural notion of an approximate KMS state also in the remaining cases, especially for massive fields. On the technical side, our results are based on a bulk-to-boundary reconstruction technique already successfully applied in the
scalar case and here proven to be suitable also for spinor fields. The potential applications of the states we
find range over a broad spectrum, but they appear to be suited to discuss in particular thermal phenomena such as the cosmic neutrino background or the quantum state of dark matter.
\normalsize

\tableofcontents

\se{Introduction}
The mathematically rigorous formulation of quantum field theory on globally hyperbolic curved spacetimes has undoubtedly witnessed terrific progress  during the past years. Our understanding of the quantization scheme  within the algebraic framework initially allows us to specify which quantum states are physically relevant for any free field theory; based on these, one can then construct both Wick polynomials and interacting field theories on non-trivial backgrounds in a perturbative manner
\cite{BF00, HW01, HW02, BFV, HW03, BrFe09}.

Hence, one of the main challenges is to apply this mathematical scheme to concrete problems, and the first and foremost examples are certainly found in the realm of Cosmology. According to the standard modern paradigm, the Universe can be described by a homogeneous and isotropic solution of Einstein's equations, a Friedmann-Robertson-Walker (FRW) spacetime. This requirement entails that the metric is fully determined by a single dynamical quantity, the so-called scale factor $a(t)$, and a parameter which fixes the topology of the constant-time hypersurfaces to be spherical, flat, or hyperbolic. In view of recent measurements of the evolution of the Hubble function, we only consider the case of flat spatial sections -- so-called flat FRW spacetimes -- in the present work. Furthermore, we demand that the scale factor of the FRW metric displays an inflationary behaviour of either exponential or power-law type at early times. Our aim is to show that, under these circumstances, it is always possible to single out a so-called Hadamard state for both scalar and spinor quantum fields. The Hadamard condition is of particular relevance since it guarantees that the UV behaviour of the state mimics the one of the Minkowski vacuum, which in turn assures that the quantum fluctuations of observables such as the smeared components of the stress-energy tensor are bounded.

The scenario of a free scalar field has been extensively studied in the past and it has been found that, on a FRW spacetime with flat spatial sections, a Hadamard state invariant under the metric isometries can be singled out in various ways, see \cite{JuSc, Olbermann, Muri, Schlemmer} and \cite{DMP2, DMP3}. Here, we will focus on the approach first introduced in the last two last cited papers, where the notion of a distinguished asymptotic Hadamard ground state was discussed by means of a so-called bulk-to-boundary approach. This is a procedure which calls for identifying a preferred null, differentiable, codimension 1 submanifold of the full spacetime -- the boundary -- on which it is possible to construct a $*$-algebra of observables which contains the bulk one via an injective $*$-homomorphism. The merit of this construction is that the peculiar geometric structure of the auxiliary submanifold allows for identifying a boundary ground state whose pull-back in the bulk is both invariant under isometries and, more importantly, of Hadamard form.

The first aim of this work is to extend this procedure in order to obtain additional physically interesting states for free fields. As first realised in \cite{DMP4}, the structure of the boundary is such that it is also possible to naturally construct states for the boundary algebra of observables which are thermal, that is, they fulfil a KMS condition with respect to a suitable boundary translation. We will show that this idea can be implemented successfully in a cosmological framework and for scalar fields, yielding states which are both Hadamard in all cases and conformal KMS states in the case of massless fields conformally coupled to the scalar curvature. More interestingly, it turns out that, even though the KMS condition is not exactly fulfilled for massive theories, approximate thermodynamic relations always hold at early times. Hence, the states we find can be interpreted as being asymptotic thermal states in the early universe, where the departure from strict thermal equilibrium is controlled by the magnitude of the scale factor times the mass. We generalise these results even further by introducing Hadamard states which fail to satisfy a KMS condition on the boundary, {\it i.e.} at $a(t)=0$, but rather fulfil an approximate KMS condition at $a(t)=a_0>0$.

The second main goal of this paper is to prove that the bulk-to-boundary procedure can be extended to the case of free massive Dirac fields. We show that it is possible to construct a suitable boundary algebra of observables encompassing the bulk one also in this case, and that one can naturally identify preferred states on this boundary algebra whose pullback is always of Hadamard type and either a strict conformal KMS state for massless fields, or an approximate KMS state in the case of non-vanishing mass.

We want to stress that our analysis is not performed only out of a mere mathematical interest, but that there actually cogent physical reasons to study the topics we discuss in this work. On the one hand, one of the most interesting applications of the algebraic quantisation scheme to Cosmology is the analysis of solutions to the semiclassical Einstein equations for massive fields. Indeed, it has been found in \cite{DFP} that a scale factor driven by a massive scalar field in a Hadamard state naturally evolves to a late time de Sitter solution, thereby offering an explanation for dynamical dark energy. It is certainly mandatory to analyse this phenomenon also for fields of higher spin and to see if more observed features of the cosmological evolution can be modelled by the sole effects of quantum fields in more general Hadamard states. In fact, the very states constructed in this paper have already been analysed in this manner to confirm that the backreaction of free Dirac quantum fields on the spacetime curvature is qualitatively of the same nature as the one of free scalar fields, and it has been found that the energy density of dark matter, usually considered as an ad-hoc classical quantity, can be accurately modelled by free quantum fields in an approximate KMS state of the kind introduced in this work \cite{dhmp}.

On the other hand, an important object of study in modern cosmology is the cosmic neutrino background, which is usually considered to have a temperature of approximately 1.9 Kelvin (see for example \cite{Kolb}). However, since it is now well-known that neutrinos are massive and, hence, do not fulfil a conformally invariant equation of motion, it is impossible to associate to them an exact conformal KMS state. Therefore, the approximate KMS states for massive Dirac fields we introduce seem to be natural candidates to discuss the cosmic neutrino background accurately in terms of quantum field theory in curved spacetimes.

The paper is organised as follows: section 2 starts with a recollection of the geometrical properties of the spacetimes we consider and closes with a study of the properties of both a classical scalar and a classical Dirac field theory living thereon. It is particularly emphasised how to solve the Dirac equation in a flat Friedmann-Robertson-Walker spacetime in terms of a vector-valued, {\it diagonal}, second order hyperbolic partial differential operator. Section 3 instead is focused on the construction of the bulk-to-boundary correspondence. After recollecting how this mechanism works for scalar fields, we continue to fully develop it for spinors. Particularly, we prove the existence of an injective $*$-homomorphism from the bulk field algebra into the boundary counterpart. Section 4 is the core of the paper and it is here shown how to assign distinguished states for a free field theory on the boundary which are satisfying an exact KMS condition. Afterwards, the bulk counterpart is constructed, and it is shown that both in the scalar and in the Fermionic case the outcome is a Hadamard state which is in addition thermal when the classical dynamics of the bulk field is conformally invariant. In case conformal invariance is broken, we show that the result still fulfils approximate thermodynamic relations and that this even holds for states which are not exact KMS states on the boundary, but fulfil an approximate conformal KMS condition at finite times. Section 5 contains the conclusions, while the appendices are focused on reviewing the definition of Dirac fields on a four-dimensional globally hyperbolic spacetime and on proving the fall-off behaviour at infinity of classical Dirac modes in the considered flat FRW spacetimes.

\section{From the Geometry to the Classical Field Theory}\label{geom}
This section is bipartite: in the first part, encompassing the next two subsections, we recollect some facts already discussed in \cite{DMP2, DMP3, Nicola}. Particularly, we define the class of spacetimes we are interested in and recollect their geometric properties. Furthermore, we sketch the behaviour of a classical real scalar field living thereon. In the second part, we focus on classical Dirac spinors and provide a detailed account of their dynamics on cosmological spacetimes.

\subsection{Friedmann-Robertson-Walker Spacetimes with a Lightlike Cosmological Boundary}
In this paper, we shall consider spacetimes $M$ as being four-dimensional Hausdorff, connected,
smooth manifolds endowed with a Lorentzian smooth metric $g$ whose signature is $(-,+,+,+)$. Particularly,
we shall be interested in homogeneous and isotropic solutions of Einstein's equations with
{\em flat spatial sections}. Thus, the metric $g_{FRW}$ has the so-called Friedmann-Robertson-Walker form:
\beq\label{metric}
ds^2=-dt^2+a^2(t)\left[dr^2+r^2(d\theta^2+\sin^2\theta d\varphi^2\right],
\eeq
which is here written in spherical coordinates. The coordinate $t$, also known as {\bf cosmological time}, ranges {\it a
priori} over an open interval $I\subseteq\bR$ which we later constrain, whereas $a(t)\in C^\infty(I,\bR^+
)$, $\bR^+$ being the strictly positive real numbers. If one introduces the so-called {\bf conformal time} $\tau$
out of the defining differential relation $d\tau\doteq a^{-1}(t)dt$, the metric becomes:
\beq\label{metric2}
ds^2=a^2(\tau)\left[-d\tau^2+dr^2+r^2(d\theta^2+\sin^2\theta d\varphi^2)\right],
\eeq
which is manifestly a conformal rescaling of the Minkowski one.

This rather simple consideration initially prompted our attention towards this class of spacetimes, since the
flat background is the prototype of a large class of solutions of Einstein's equations with vanishing
cosmological constant, namely, the asymptotically flat spacetimes (see \cite{DMP} or \cite{Wald} for a recollection of the definition). From a geometrical point of view, these are
rather distinguished since, by means of a conformal compactification process, they can be endowed with a notion of
(conformal) past or future null boundary, usually indicated as $\Im^\pm$. This rather special structure was
successfully used in \cite{DMP} (see also references therein) as a tool to set up a bulk-to-boundary
correspondence ultimately yielding the possibility to construct a quasi-free invariant Hadamard state for a massless scalar field conformally
coupled to scalar curvature.

Therefore, it seemed natural to wonder if a similar bulk-to-boundary correspondence and related results could be obtained also for FRW spacetimes.
To this avail, the first question to answer is, under which conditions on the scale factor $a(t)$
past or future null-infinity can be genuinely associated to $(M,g_{FRW})$. Indeed, if we require that the functional form of $a(t)$ and its domain $I\ni t$ are such that the domain of $a(\tau)$ includes $(-\infty, -\tau_0)$ ($(\tau_0, \infty)$) for some $\tau_0<\infty$, then one can meaningfully assign to $(M,g_{FRW})$ conformal past (future) null infinity by means of the conformal equivalence with Minkowski spacetime. As we would somehow like to consider the field theoretical constructions on null infinity as ``initial conditions'' rather than ``final conditions'' for the field theories in the bulk spacetimes, we shall restrict attention to cases where $(M,g_{FRW})$ possesses a meaningful notion of past null infinity. However, all our constructions and results can be trivially extended to FRW spacetimes with future null infinity. Four simple examples of cosmological spacetimes with past null infinity are:
\begin{itemize}
 \item[a)] $a(t)=\left(\frac{\ga}{-t}\right)^n$, $n>0$, $\ga>0$, $t\in I=(-\infty,0)$ $\quad\Rightarrow\quad$ $a(\tau)=\left(\frac{\ga}{-\tau (n+1)}\right)^{\frac{n}{n+1}}$
\item[b)] $a(t)=\frac{t}{\ga}$, $\ga>0$,  $t\in I=(0,\infty)$ $\quad\Rightarrow\quad$ $a(\tau)=e^{\frac{\tau}{\ga}}$
\item[c)] $a(t)=\left(\frac{t}{\ga}\right)^n$, $n>1$, $\ga>0$,  $t\in I=(0,\infty)$ $\quad\Rightarrow\quad$ $a(\tau)=\left( \frac{\ga}{-\tau(n-1)}\right)^{\frac{n}{n-1}}$
\item[d)] $a(t)=e^{\frac{t}{\ga}}$, $\ga>0$,  $t\in I=(-\infty,\infty)$ $\quad\Rightarrow\quad$ $a(\tau)=\frac{\ga}{-\tau}$
\end{itemize}
Since only the behaviour of $a(t)$ at the lower bound of $I\ni t$ is essential, one can immediately obtain a large class of flat FRW spacetimes which posses past null infinity by requiring that the functional behaviour of $a(t)$ is asymptotically of one of the above mentioned kinds in the early past. The FRW spacetimes one obtains in this manner from the cases b), c), and d) are of special cosmological relevance, as they describe a cosmic history characterised by an early stage of either {\bf de Sitter inflation} -- d) -- or {\bf power-law inflation} -- b) and c). This is the physical motivation why we shall restrict to these cases in the following. The technical motivation is the fact that, as found in \cite{DMP2, Nicola}, in these spacetimes it is possible to establish a bulk-to-boundary relation for {\em massive} field theories, while, in FRW spacetimes which are asymptotically of class a), this is only possible for {\em massless} theories. To understand the reason behind this in simple terms, we briefly describe how past null infinity can be attached to $(M, g_{FRW})$ if the above described requirements are met.

Switching to the coordinates $U=\tan^{-1}(\tau+r)$ and $V=\tan^{-1}(\tau-r)$, the FRW metric reads
\beq\label{Einsteinform} ds^2=\frac{a^2(\tau(U,V))}{\cos^2U\cos^2 V}\left[-dUdV+\frac{\sin^2(U-V)}{4}(d\theta^2+\sin^2\theta d\varphi^2)\right],
\eeq where the factor in the square brackets is the line element proper to the metric of the {\bf Einstein static universe} $(M_E, g_E)$. As a result, flat FRW spacetimes can be conformally embedded into this spacetime, and the locus $U=-\frac\pi2$ in $M_E$ -- corresponding to the "boundary" $\tau+r=-\infty$ of a FRW spacetime with the domain of $a(\tau)$ containing $(-\infty, -\tau_0)$ -- can be meaningfully considered as a smooth hypersurface $\Im^-$ diffeomorphic to $\bR\times \bS^2$, {\it i.e.} past null infinity. A bulk-to-boundary construction as the one we seek to employ always requires to extend the definition of a field in the original spacetime -- here $(M,g_{FRW})$ -- to a field on the boundary, namely, $\Im^-$. How this can be achieved crucially depends on the behaviour of $g_{FRW}$ on past null infinity. In spacetimes which are asymptotically de Sitter, one finds that the prefactor of the square brackets in \eqref{Einsteinform} becomes constant on $\Im^-$. Hence, in this case, $(M,g_{FRW})$ can be extended beyond $\Im^-$, and the bad behaviour of $g_{FRW}$ on $\Im^-$ turns out to be rooted in an unlucky choice of coordinates. In this case, $\Im^-$ turns out to be a {\bf horizon} of the FRW spacetime. In contrast, in case a), the prefactor in \eqref{Einsteinform} diverges on $\Im^-$, while it vanishes in the remaining cases, {i.e.} in spacetimes with an early power-law inflation. Accordingly, $\Im^-$ describes the {\bf Big Bang} hypersurface in the power-law case. On the field theoretic side, the failure of $\frac{a^2(\tau(U,V))}{\cos^2U\cos^2 V}$ to be finite on $\Im^-$ is inherited by solutions of the classical field equations and forces us to introduce suitable conformal rescalings of the field in order to be able to map a theory in a finite manner from $(M,g_{FRW})$ to $\Im^-$. As a result, a field with mass $m$ becomes a field with mass $$\sqrt{\frac{a^2(\tau(U,V))}{\cos^2U\cos^2 V}}\,m\,.$$
Its is now clearly visible why a bulk-to-boundary construction for massive fields can only be established in spacetimes which are asymptotically of de Sitter or power-law inflationary type.

For the technical details of our analysis it will be convenient to discuss all constructions on the level of conformal time $\tau$. Hence, let us recall the related requirements on the scale factor $a(\tau)$ of the flat FRW spacetimes we wish to work with in the following, {\it viz.},
\beq\label{scalef}
a(\tau)=\frac{\gamma^{1+\de}}{(-\tau)^{1+\de}}+O\left(\frac{1}{(-\tau)^{2+\de+\epsilon}}\right),
\qquad\frac{da(\tau)}{d\tau}=-\frac{(1+\de)\gamma^{1+\de}}{(-\tau)^{2+\de}}+O\left(\frac{
1}{(-\tau)^{3+\de+\epsilon}}\right),\;\de\ge 0\,.
\eeq
Here, the formulae are meant to hold asymptotically for $\tau\to-\infty$. Furthermore, $\epsilon$ is a small positive constant that has to be added in order to obtain the desired regularity for the classical solutions near null infinity in the case $\de=0$. In favour of notational simplicity, we consider the special case $a(t)\sim t\ga^{-1}\Leftrightarrow a(\tau)\sim e^{\tau\ga^{-1}}$ to be represented by the limiting case $\de\to\infty$ in the above formulae.

As the interplay between bulk symmetries and the vector $\pa_\tau$ with the geometric structure of $\Im^-$ will play a pivotal role in the following, we shall now recall a few more details on the geometry of $\Im^-$. In the asymptotic power-law case, $\Im^-$ has manifestly the same properties as the past conformal boundary of an asymptotically flat spacetime, extensively discussed in \cite{DMP} and the references therein. In the asymptotic de Sitter case, the structure of $\Im^-$ is different due to the fact that it is a horizon, a full discussion can be found in \cite{DMP2}. In more detail, setting
$$\Om\doteq 2 \cos U = \frac{2}{\sqrt{1+(\tau-r)^2}}\,,$$
we can easily compute that $d\Om|_ {\Im^-}\neq 0$ and that the line element $\widetilde ds^2$ of the metric
$$\widetilde g\doteq \frac{\Om^2}{a^2}\,g_{FRW}$$ restricted to $\Im^-$ is of {\bf Bondi form}, namely,
$$\widetilde ds^2|_{\Im^-}=2d\Om dv+d\bS^2\,.$$
Here, $v\doteq\tau+r$ and $d\bS^2$ denotes the standard measure on the two-sphere $\bS^2$, $d\theta^2+\sin^2\theta d\varphi^2$. One can check that $\pa_\tau$ fulfils the conformal Killing equation
$$\mathcal{L}_{\partial_\tau}g_{FRW}=-2(\partial_\tau \ln(a))g_{FRW}$$ and is, hence, a {\bf conformal Killing vector} of the FRW spacetime. This vector turns out to be tangent to $\Im^-$ and its restriction to $\Im^-$ equals $\frac12 \pa_v$, whose integral curves in turn generate $\Im^-$. Altogether, we find that a conformal Killing vector $\pa_\tau$ of the bulk FRW spacetime becomes a proper Killing vector on its boundary $\Im^-$ endowed with the Bondi metric. However, the full symmetry group $G_{\Im^-}$ of $\Im^-$ is much larger than that. In the asymptotic power-law case, it is given by symmetry group of the conformal boundary of Minkowski spacetime, the so-called {\bf Bondi-Metzner-Sachs group} (see for instance \cite{DMP}). In the asymptotic de Sitter case, however, the situation is different since the FRW spacetime can be extended to and beyond $\Im^-$ {\em without} a conformal rescaling\footnote{Note that the conformal transformation relating $g_{FRW}$ and $\widetilde g$ which made $\widetilde g$ regular and of Bondi form on $\Im^-$ is proportional to the identity on $\Im^-$ in the asymptotic de Sitter case. Nevertheless, we have not omitted it in favour of a uniform notation.}. In fact, it turns that the symmetry group in this case is constituted by the set of diffeomorphisms of $\bR\times\bS^2$ which map the point $(v,\theta,\varphi)$ to $(e^{\alp(\theta,\varphi)}v+\be(\theta,\varphi), R(\theta,\varphi))$ with $\alp,\be\in C^\infty(\bS^2)$ and $R\in SO(3)$ \cite{DMP2}. A direct inspection of this formula reveals that the so-called {\bf horizon symmetry group} is given by the iterated semidirect product
\beq\label{group}
G_{\Im^-}=SO(3)\ltimes(C^\infty(\bS^2)\ltimes C^\infty(\bS^2)).
\eeq

\subsection{Classical Scalar Fields on FRW Spacetimes}\label{classicalscalar}

This subsection shall offer a brief introduction of both the symplectic space and the mode decomposition of a classical scalar field on a FRW spacetime with flat spatial sections. Although most, if not all, the material has already appeared elsewhere, we feel that it is worth devoting a few lines to this topic as a guideline for the discussion of the more complicated case of Dirac fields.

Hence, let us consider the field $\phi:M\to\bR$, whose dynamics are ruled by the conformally coupled {\bf Klein-Gordon equation}
\beq\label{eqmotionscalar}
P \phi =0 \;, \qquad P = -\Box + \frac{1}{6} R + m^2\;,
\eeq
where $R$ is the scalar curvature, while $m^2\geq 0$ is the squared mass. Since all FRW spacetimes as in \eqref{metric} are globally hyperbolic, one can associate to \eqref{eqmotionscalar} a meaningful Cauchy problem. Particularly, each smooth and compactly supported initial datum yields a solution $\phi\in C^\infty(M)$. The linear set of such solutions can be characterised as $\cS(M)=\{\phi_f\;|\;P\phi_f=0\;\textrm{and}\;\exists f\in C^\infty_0(M),\;\phi_f=Ef\}$. Here,  $E=E^--E^+:C^\infty_0(M)\to C^\infty(M)$ stands for the causal propagator, the difference between the advanced and the retarded fundamental solution. Furthermore, for each $\phi_f\in\cS(M)$, $\supp\phi_f\subseteq J^+(\supp(f))\cup J^-(\supp(f))$, and $\cS(M)$ forms a symplectic space if endowed with the following weakly non-degenerate {\bf symplectic form},
$$
\si_M(\phi_f,\phi_g)\doteq \int\limits_{\bR^3} \at {\phi_f}\frac{d\phi_g}{d\tau} -  \frac{d\phi_f}{d\tau} \phi_g  \ct a(\tau)^2d^3x=E(f,g).
$$
Notice that $\si_M$ is independent of $\tau$ since it is in general independent of the Cauchy surface constituting the integration domain.

Although the above characterisation of the space of classical solutions of \eqref{eqmotionscalar} is, apart from the special but inessential choice of Cauchy surface in the definition of $\si_M$, valid on every globally hyperbolic spacetime, we can provide a more detailed characterisation of $\cS(M)$ in the present case. To wit, by making use of the big isometry group of $(M,g_{FRW})$, we can specify each element of $\cS(M)$ via an explicit decomposition in Fourier modes. Particularly, the following two propositions have been proven in \cite{DMP2, DMP3}:

\begin{proposition}\label{KGmodes}
For every $k\in [0,\infty)$, let $T_k(\tau)$ be a complex solution of the differential equation
\beq\label{TK}
(\pa^2_\tau +k^2+a^2m^2)T_{k} = 0\,,
\eeq
which fulfils the following normalisation condition
$$
T_k\pa_\tau\overline{T_k} - \overline{T_k}\pa_\tau T_k \equiv i\;.
$$
Suppose, furthermore,  that the functions $k\mapsto T_k(\tau)$ and
$k\mapsto \pa_\tau T_k(\tau)$
are both polynomially bounded  for large $k$ uniformly in $\tau$ and in $L^2([0,\underline{k}],kdk)$ for every $\underline{k}>0$.
Under these hypotheses, every element $\phi$ of $\cS(M)$ can be decomposed in modes as
\beq\label{bosondecomposition}
\phi(\tau,\vec{x})= \int\limits_{\bR^3}
\at \phi_{\vec{k}}(\tau,\vec{x}) \widetilde{\phi}(\vec{k})      + \overline{\phi_{\vec{k}}(\tau,\vec{x}) \widetilde{\phi}(\vec{k})}
\ct
d^3k,\eeq
where $\phi_{\vec{k}}(\tau,\vec{x})$ is solution of the conformally coupled Klein-Gordon equation of the form
\beq\label{bosonmodes}
\phi_{\vec{k}}(\tau,\vec{x}) = \frac{ T_{k}(\tau)   e^{i\vec{k}\vec{x}}}{(2\pi)^{3/2}a(\tau)}.
\eeq
\end{proposition}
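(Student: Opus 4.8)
The plan is to exploit the conformal flatness of $(M,g_{FRW})$ to turn the conformally coupled equation \eqref{eqmotionscalar} into a flat wave equation, and then to diagonalise the spatial dynamics by a Fourier transform. Recall that in four dimensions the operator $-\Box+\frac16 R$ is conformally covariant. Writing $g_{FRW}=a^2\eta$ with $\eta$ the Minkowski metric in the coordinates $(\tau,\vec x)$ and setting $\chi\doteq a\,\phi$, this covariance gives $\left(-\Box+\frac16 R\right)\phi=a^{-3}(-\Box_\eta)\chi$, where $\Box_\eta=-\pa_\tau^2+\nabla^2$ and $R_\eta=0$ has been used. Hence $P\phi=0$ is equivalent to
\begin{equation*}
\left(\pa_\tau^2-\nabla^2+a^2m^2\right)\chi=0\,.
\end{equation*}
First I would derive this identity with care, tracking the conformal weight of $\phi$, so that $\chi$ obeys an honest Minkowski-type equation with a $\tau$-dependent mass $a(\tau)m$.

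Next I would Fourier transform in the spatial variables at fixed $\tau$. Because every $\phi\in\cS(M)$ is of the form $\phi=Ef$ with $f\in C^\infty_0(M)$, its restriction to each slice $\{\tau=\mathrm{const}\}$ is smooth with spatially compact support, so $\widehat\chi(\tau,\vec k)\doteq(2\pi)^{-3/2}\int_{\bR^3}\chi(\tau,\vec x)e^{-i\vec k\vec x}\,d^3x$ is well defined, jointly smooth and rapidly decreasing in $\vec k$. Applying the transform to the reduced equation yields the ordinary differential equation $\pa_\tau^2\widehat\chi+(k^2+a^2m^2)\widehat\chi=0$, which is exactly \eqref{TK} with $k=|\vec k|$. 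Since the coefficients depend on $\vec k$ only through $k$, the $\tau$-profile of $\widehat\chi(\tau,\vec k)$ lies in the two-dimensional solution space of \eqref{TK}, spanned by $T_k$ and $\overline{T_k}$: these are linearly independent because their Wronskian $T_k\pa_\tau\overline{T_k}-\overline{T_k}\pa_\tau T_k$ equals $i\neq 0$ by the normalisation, and this Wronskian is constant in $\tau$ as \eqref{TK} has no first-order term. I would therefore write $\widehat\chi(\tau,\vec k)=c_1(\vec k)T_k(\tau)+c_2(\vec k)\overline{T_k}(\tau)$ and recover the coefficients by the Wronskian formula $c_1(\vec k)=\tfrac1i\bigl(\widehat\chi\,\pa_\tau\overline{T_k}-\overline{T_k}\,\pa_\tau\widehat\chi\bigr)$; as both $\widehat\chi$ and $T_k$ solve \eqref{TK}, this expression is manifestly independent of $\tau$, which will be essential for identifying a $\tau$-independent symbol $\widetilde\phi$.

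Finally I would impose reality of $\phi$, equivalently of $\chi$ since $a>0$, which through $\widehat\chi(\tau,-\vec k)=\overline{\widehat\chi(\tau,\vec k)}$ forces $c_2(\vec k)=\overline{c_1(-\vec k)}$. Setting $\widetilde\phi(\vec k)\doteq c_1(\vec k)$, substituting back, dividing by $a$, and performing the change of variables $\vec k\to-\vec k$ in the $\overline{T_k}$-term identifies the negative-frequency contribution with the complex conjugate of the positive-frequency one, recasting $\phi=\chi/a$ precisely in the claimed form \eqref{bosondecomposition} with modes \eqref{bosonmodes}. I expect the main obstacle to be analytic rather than algebraic: one must justify that $\pa_\tau$ commutes with the Fourier integral and, more importantly, that the reconstructed integral $\int_{\bR^3}\widetilde\phi(\vec k)\,\phi_{\vec k}\,d^3k$ converges and returns $\phi$. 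This is exactly where the hypotheses intervene — the polynomial boundedness of $k\mapsto T_k$ and $k\mapsto\pa_\tau T_k$, together with the rapid decay of $\widehat\chi$ inherited from the compact support of $f$, controls the large-$k$ region, while the condition $T_k\in L^2([0,\underline k],k\,dk)$ secures integrability of the modes near $k=0$, where $T_k$ could otherwise be singular.
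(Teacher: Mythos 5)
Your argument is correct and is essentially the standard proof of this statement: the paper itself does not prove the proposition but defers to \cite{DMP2, DMP3}, where the decomposition is obtained in just this way --- conformal rescaling $\chi=a\phi$ to a flat wave equation with time-dependent mass $a(\tau)m$, spatial Fourier transform reducing to the ODE \eqref{TK}, expansion in the basis $\{T_k,\overline{T_k}\}$ (linearly independent by the constant, non-vanishing Wronskian), and the reality condition identifying the two coefficient functions. You also correctly locate where each hypothesis enters: polynomial boundedness in $k$ against the rapid decay of $\widehat\chi$ for the large-$k$ region, and the $L^2([0,\underline{k}],k\,dk)$ condition for integrability near $k=0$.
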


\noindent As shown in \cite{DMP3, Nicola} modes $T_k$ fulfilling the assumptions of the above proposition can be concretely constructed in the class of spacetimes we consider via a converging perturbation series; moreover, they can be chosen in such a way that the following initial conditions are satisfied
$$
\lim_{\tau\to-\infty} e^{ik\tau}T_k(\tau) = \frac{1}{\sqrt{2k}} \;, \qquad
\lim_{\tau\to-\infty} e^{ik\tau}\pa_\tau T_k(\tau) = -i\sqrt{\frac{k}{2}} \;.
$$

\noindent The second proposition deals instead with the invertibility of the decomposition \eqref{bosondecomposition}.

\begin{proposition}
Under the hypotheses of the preceding proposition, for every $\phi\in\cS(M)$, \eqref{bosondecomposition} can be inverted as
$$
\widetilde{\phi}(\vec{k})=i\si_M\left(\overline{\phi_{\vec{k}}},\phi\right)\,.
$$
Furthermore, $\widetilde{\phi}(\vec{k})$ is square integrable and, as $|\vec{k}|$ diverges, it decays faster then any inverse power of $|\vec{k}|$.
\end{proposition}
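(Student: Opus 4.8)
The plan is to establish the two orthogonality relations satisfied by the modes $\phi_{\vec k}$ under $\si_M$, to deduce the inversion formula by a bilinear expansion, and finally to read off the regularity of $\widetilde\phi$ from an explicit representation on a Cauchy surface.

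First I would compute, for an arbitrary value of $\tau$, the elementary pairings $\si_M(\overline{\phi_{\vec k}},\phi_{\vec k'})$ and $\si_M(\overline{\phi_{\vec k}},\overline{\phi_{\vec k'}})$. Writing $\phi_{\vec k}=(2\pi)^{-3/2}(T_k/a)\,e^{i\vec k\vec x}$ and carrying out the $\vec x$-integration produces the factors $(2\pi)^3\delta^3(\vec k\mp\vec k')$ multiplied by a temporal Wronskian. The decisive algebraic observation is that the weight $a^2$ in $\si_M$ cancels the two factors $a^{-1}$ carried by the modes and that the terms containing $\pa_\tau a$ cancel as well, so the temporal factor collapses to $\overline{T_k}\pa_\tau T_k-T_k\pa_\tau\overline{T_k}$. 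The normalisation condition fixes this to be $-i$, giving $\si_M(\overline{\phi_{\vec k}},\phi_{\vec k'})=-i\,\delta^3(\vec k-\vec k')$, whereas the second pairing vanishes identically since its temporal factor is antisymmetric in a single function. That these expressions are in fact independent of $\tau$ is precisely the property of $\si_M$ remarked upon after its definition.

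Since $\si_M$ extends to a complex-bilinear form, inserting the decomposition \eqref{bosondecomposition} into $i\si_M(\overline{\phi_{\vec k}},\phi)$ and exchanging the $\vec k'$-integration with the integration defining $\si_M$ yields
\[
i\si_M(\overline{\phi_{\vec k}},\phi)=i\int_{\bR^3}\Bigl[\si_M(\overline{\phi_{\vec k}},\phi_{\vec k'})\,\widetilde\phi(\vec k')+\si_M(\overline{\phi_{\vec k}},\overline{\phi_{\vec k'}})\,\overline{\widetilde\phi(\vec k')}\Bigr]d^3k'=\widetilde\phi(\vec k),
\]
by the relations of the previous step. For the decay statements I would then evaluate this identity on a fixed Cauchy surface $\Si_{\tau_0}=\{\tau=\tau_0\}$; abbreviating $g_k\doteq T_k/a$ one obtains
\[
\widetilde\phi(\vec k)=\frac{i\,a(\tau_0)^2}{(2\pi)^{3/2}}\Bigl(\overline{g_k}\,\widehat{\pa_\tau\phi}(\vec k)-\pa_\tau\overline{g_k}\,\widehat\phi(\vec k)\Bigr),
\]
where $\widehat{\,\cdot\,}$ denotes the spatial Fourier transform of the Cauchy datum on $\Si_{\tau_0}$. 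As $\phi=Ef$ with $f\in C^\infty_0(M)$, the support of $\phi$ meets $\Si_{\tau_0}$ in a compact set, so $\phi(\tau_0,\cdot)$ and $\pa_\tau\phi(\tau_0,\cdot)$ are smooth and compactly supported and their transforms $\widehat\phi,\widehat{\pa_\tau\phi}$ are Schwartz. The coefficients $g_k,\pa_\tau g_k$ inherit from $T_k,\pa_\tau T_k$ a polynomial bound for large $|\vec k|$, so the products decay faster than any inverse power of $|\vec k|$, which is the asserted rapid decay. Square integrability then follows by splitting the $\vec k$-integral: the rapid decay handles $|\vec k|\ge\underline k$, while near the origin the hypothesis $g_k,\pa_\tau g_k\in L^2([0,\underline k],k\,dk)$ together with the boundedness of $\widehat\phi,\widehat{\pa_\tau\phi}$ controls $\widetilde\phi$, and since $k^2\le\underline k\,k$ on $[0,\underline k]$ this suffices for integrability against the measure $d^3k$, whose radial part is $k^2\,dk$.

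The only genuinely delicate point is the interchange of integrations in the second step, which must be justified by Fubini using the polynomial bounds on $T_k,\pa_\tau T_k$ and the decay of $\widetilde\phi$, together with the fact that the orthogonality relations are distributional. The cleanest way to make everything rigorous, and the route I would ultimately follow, is to bypass the $\delta$-function manipulations altogether: evaluating \eqref{bosondecomposition} and its $\tau$-derivative on $\Si_{\tau_0}$ expresses the pair $(\widehat\phi,\widehat{\pa_\tau\phi})$ linearly in terms of $\widetilde\phi(\vec k)$ and $\overline{\widetilde\phi(-\vec k)}$, with coefficient determinant equal to $i/a(\tau_0)^2$ by the normalisation condition, hence nonvanishing. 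Solving this $2\times2$ system directly reproduces both the inversion formula and the Cauchy-surface representation used for the decay estimates.
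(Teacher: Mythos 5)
Your proposal is correct, and it follows essentially the route the authors intend: the paper itself defers the proof of this scalar proposition to \cite{DMP2, DMP3}, but the proof it does supply for the Dirac analogue (Proposition \ref{diracprop}) is exactly your argument --- a direct computation for the inversion formula, followed by the observation that the Cauchy data of $\phi=Ef$ are smooth and compactly supported, so their spatial Fourier transforms are rapidly decreasing, and the coefficients $T_k/a$, $\pa_\tau(T_k/a)$ are polynomially bounded for large $k$ and in $L^2([0,\underline{k}],k\,dk)$ near $k=0$. Your closing remark, replacing the distributional $\delta$-manipulations by the explicit $2\times2$ system on a Cauchy surface with Wronskian determinant $i/a(\tau_0)^2$, is a clean and valid way to make the interchange of integrals rigorous.
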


\noindent Notice that, since the symplectic form does not depend on $\tau$ and since $\phi_{\vec{k}}$ is a solution of \eqref{eqmotionscalar}, $\widetilde{\phi}(\vec{k})$ is time-independent as well. This observation and the preceding discussion entail
$$
\si_M(\phi_1,\phi_2) = -i\int\limits_{\bR^3}
\left(\, \overline{\widetilde{\phi}_1(\vec{k})} \widetilde{\phi}_2(\vec{k})
- {\widetilde{\phi}_1(\vec{k})} \overline{\widetilde{\phi}_2(\vec{k})} \,\right)
d^3k,
$$
which is once more manifestly time-independent.

To conclude this section and as a later useful tool, we recall that also the causal propagator assumes a somehow more manageable form when decomposed in modes, namely \cite{LuRo},
$$
E(\tau_x,\vec{x},\tau_y,\vec{y}) =
-i
\int\limits_{\bR^3}
\overline{\phi_{\vec{k}}(\tau_x,\vec{x})}
\phi_{\vec{k}}(\tau_y,\vec{y})-
\phi_{\vec{k}}(\tau_x,\vec{x})
\overline{\phi_{\vec{k}}(\tau_y,\vec{y})}\;d^3 k\;.
$$

\subsection{Classical Dirac Fields on FRW Spacetimes}\label{classicalpart}

\noindent We shall now focus on the analysis of the classical dynamics of Dirac spinors living on the class of FRW spacetimes we consider. The full discussion of the well-posedness of this dynamical system requires the introduction of several additional structures and, hence, a rather lengthy detour. To avoid such a detour, we point the interested reader either to \cite{DHP}, where an extensive review is available, or to the appendix \ref{appdirac} which contains a small resum\'e of the employed notations and conventions.

As described in the above-mentioned appendix, a Dirac spinor $\psi$ (respectively cospinor $\psi^\prime$) in a given spacetime $M$ is a smooth global section of the Dirac bundle $DM$ (respectively dual Dirac bundle $DM'$). As discussed in lemma 2.1 of \cite{DHP}, $DM$ is well-defined and trivial whenever $M$ is globally hyperbolic and simply connected, thus, on a FRW spacetime in particular. In other words, in $(M,g_{FRW})$, the standard picture of (co)spinors being vector-valued functions  $\psi:M\to\bC^4$ and $\psi^\prime:M\to\bC^4$ is valid. That said, we call {\em dynamically allowed} any $\psi$ which fulfils the so-called {\bf Dirac equation}
$$D\psi=(-\displaystyle{\not}\nabla+m\bI_4)\psi=0,$$
where $\bI_4$ stands for the four dimensional identity matrix, whereas $\displaystyle{\not}\nabla\doteq\gamma^a e_a^\mu\nabla_\mu$. Here $\ga^a$, $a\in\{0,1,2,3\}$, are $4\times 4$ complex matrices which form a representation of the Clifford algebra $Cl(1,3)$ and which are usually referred to as $\ga-$matrices. Due to the employed signature, the explicit form of the $\ga$-matrices slightly differs from the standard one; if we denote by $\si_i$ the standard Pauli matrices, we can set
\beq\label{repgamma}
\ga_0\doteq  i
\at\; \begin{matrix} \bI_2 & 0\\ 0 &-\bI_2 \end{matrix}
\ct\;
, \qquad \ga_j\doteq i \begin{pmatrix} 0 & \si_j  \\ -\si_j & 0  \end{pmatrix}\  \;.\qquad j\in\{1,2,3\}
\eeq

\noindent Henceforth, lower-case Roman letters mean that the associated quantities are expressed in terms of a Lorentz frame $e_a$, $a\in\{0,1,2,3\}$. This is a set of four global sections of the tangent bundle which fulfil the relation $g(e_a,e_b)=\eta_{ab}$, where $\eta$ stands for the flat Minkowski metric. All lower-case Roman indices are thus raised and lowered via $\eta_{ab}$. Furthermore, $\nabla$ denotes the covariant derivative on (tensor products of) the full Dirac bundle (see definition 2.10 and lemma 2.2 in \cite{DHP}), its explicit form thus encompasses the spin connection coefficients.

\remark{As a notational convention, whenever a $\ga$-matrix appears with a definite index, this refers to its expression in the non-holonomic basis, {\it e.g.} $\ga^0\doteq \ga^a|_{a=0}$. If we choose the conformal coordinates displayed in \eqref{metric2}, then the $\ga$-matrices $\ga^\mu$ are related to the former via a multiplicative factor $a(\tau)^{-1}$.}

Let us now recall that, according to theorem 2.3 in \cite{DHP}, the Dirac operator $D$ admits unique advanced and retarded fundamental solutions $S^\pm$ which are continuous linear maps from $\mD(DM)$, the set of compactly supported smooth sections of the Dirac bundle $DM$ into $\mE(DM)$, the space of smooth sections. Furthermore, it holds for all $f\in\mD(DM)$ that
$$
DS^\pm=\text{id}_{\mD(DM)}=S^\pm D\,,\qquad \supp (S^\pm f)\subseteq J^\pm(\supp (f)).
$$
Hence, we can introduce the {\bf causal propagator} $S\doteq S^--S^+$ associated to $D$, which in turn allows us to characterise the space of smooth solutions of the Dirac equation with compactly supported initial data as
\beq\label{spaceofsol}
\gS(M)\doteq\left\{\psi_f\;|\; D\psi_f=0\;\textrm{and}\;\exists f \in\mD(DM)\;\textrm{such that}\;\psi_f=Sf\right\}.
\eeq
The set $\gS(M)$ carries a natural Hermitian structure, namely, for all $f_1,f_2\in\mD(DM)$ we can define an inner product as 
$$
\mathfrak{s}_M(\psi_{f_1},\psi_{f_2})\doteq
i\int\limits_{\Sigma}\psi_1^\dagger\displaystyle{\not}n\,\psi_2\,d\mu(\Sigma)
=
-i\int\limits_M f_1^\dagger S f_2\;d\mu(M),
$$
where $\Sigma$ is any Cauchy surface of the underlying background whose unit normal vector is denoted by $n$. Such choice plays no role since independence of the integral from $\Sigma$ was already proven in proposition 2.2 of \cite{Dimock}. Here, the superscript $\dagger$ refers to the {\bf Dirac conjugation map} which, for all $\psi\in\mE(DM)$ is defined as $\psi^\dagger\doteq\psi^*\beta$ where $*$ denotes adjoint with respect to the standard inner product on $\bC^4$, whereas $\beta$ is the Dirac conjugation matrix, that is, the unique element of $SL(4,\bC)$ which fulfils
$$
\beta^*=\beta\,,\quad \gamma^*_a=-\beta\gamma_a\beta^{-1}\,,
$$
and $\quad -i\beta n^\mu \ga_\mu$ is a positive definite matrix for all timelike and future-pointing vector fields $n$.
In the representation \eqref{repgamma} we have chosen, $\beta=-i\ga_0$.

\remark{We only concentrate on Dirac spinors since the behaviour of cospinors can be inferred from the one of spinors by applying the Dirac conjugation map to all relevant equations.}

We now specialise our treatment to any FRW spacetime $(M, g_{FRW)}$ with flat spatial sections whose scale factor $a(\tau)$ is of the form \eqref{scalef}. To analyse the behaviour of a Dirac field living thereon, we first discuss the choice of a Lorentz frame, and two natural possibilities exist. On the one hand, since the spacetime $(M, g_{FRW})$ can be conformally embedded\footnote{In the asymptotically de Sitter case, the FRW spacetime at hand can even be isometrically embedded in a larger space which, however, differs from the Einstein static universe $(M_E,g_E)$.} in a larger spacetime $(M_E,g_E)$ which contains the horizon $\Im^-$, we could simply choose the frame $e_a$ as the global one on $M_E$. The advantage of this choice would be that this Lorentz frame would be automatically well defined on $\Im^-$. On the other hand, in the following discussion it will be crucial to exploit the conformal flatness of $g_{FRW}$ displayed in \eqref{metric2}. Hence, we shall choose the standard Minkowski frame, say $\tilde{e}_a$, and then define the one on $(M, g_{FRW})$ as
$$
e_a = a(\tau)^{-1} \tilde{e}_a.
$$
Notice that, due to the divergent factor $a$, the frame $e_a$ cannot neither be extended to $\Im^-$ nor to $(M_E,g_E)$.
Although we will be able to circumvent this problem implicitly, we refer the reader interested in further explicit details to \cite{Thomas}. In the chosen conformally flat Lorentz frame, the covariant derivative on $DM$, and, consequently, the Dirac equation assume a somehow more manageable form, {\it viz.}
\beq\label{Diraceq}
{a^{-\frac{5}{2}}}\left(-\gamma^0\partial_\tau-\gamma^i\nabla_i+am\right)a^{\frac{3}{2}}\;\psi=0\,,
\eeq
where $\nabla_i$ is the Cartesian gradient along the spatial directions.

As in the scalar case, we would like to have a more explicit Fourier mode characterisation of solutions of the Dirac equation on $(M,g_{FRW})$ with compactly supported smooth initial data. To this avail, and in view of the high symmetries of a FRW spacetime, it would be desirable to reduce \eqref{Diraceq} to a diagonal form.  The natural approach which calls for diagonalising $D'D$ is somehow not well-suited to our aims, since, after performing the Fourier transform along the spatial directions, the resulting ODE displays a term which is both complex and linear in $k$, a situation which leads to potential practical difficulties when constructing a state. A different and ultimately more effective diagonalisation procedure has been introduced in \cite{Barut} and we shall now discuss an improved and more clear version of this approach. To start, let us introduce the modified operator

\beq\label{defmod}
\gD\doteq -\ga^0 D=
a^{-\frac{5}{2}}\left( \begin{array}{cc}
\pa_\tau+iam & -\vec{\nabla}\cdot\vec{\sigma} \\
-\vec{\nabla}\cdot\vec{\sigma} & \pa_\tau-iam
\end{array} \right)a^{\frac{3}{2}},
\eeq

\noindent where $\si_i$ are the Pauli matrices and $\cdot$ denotes the inner product in $\bR^3$. The equations $D\psi=0$ and $\gD\psi=0$ are equivalent, as $-\ga^0$ is invertible. If we define $\gD^\prime\doteq-aD^\prime\ga^0$, then we can compute
\begin{gather}\label{defsecondorder}
\gD\gD^\prime= a^{-\frac{5}{2}}\left( \begin{array}{cc}
(-\pa^2_\tau+\De-ia^\prime m+a^2m^2)\bI_2 & 0\\
0 & (-\pa^2_\tau+\De+ia^\prime m+a^2m^2)\bI_2 \end{array} \right)
a^{\frac{3}{2}}\,,
\end{gather}
where $\prime$ denotes a derivative with respect to $\tau$ and $\Delta$ is the Laplace operator on $\bR^3$. Notice that, on Minkowski spacetime, the above procedure is equivalent to considering $DD^\prime$, which is already diagonal in this special case. The following lemma summarises the above discussion and will play a key role in the study of the fall-off behaviour at infinity of solutions of the Dirac equation:

\begin{lemma}\label{Cauchy}
The following Cauchy problems for a spinor $\psi:I\times\bR^3\to\bC^4$ are equivalent:
$$
\begin{array}{lll}
1)\quad\left\{
\begin{array}{l}
D\psi=0\\
\psi|_{t=0}=f\in C^\infty_0(\bR^3)
\end{array}
\right.,& & 2)\quad \left\{
\begin{array}{l}
DD'u=(-\nabla_\mu\nabla^\mu+\frac{R}{4}+m^2)u=0\\
u|_{t=0}=0\\
(\partial_t u)|_{t=0}=\gamma_0 f
\end{array}
\right.,\\
& & \\
3)\quad\left\{
\begin{array}{l}
\gD\psi=0\\
\psi|_{t=0}=f\in C^\infty_0(\bR^3)
\end{array}
\right.,& & 4)\quad \left\{
\begin{array}{l}
P_\gD\widetilde u\doteq (\gD\gD')\widetilde u=0\\
\widetilde u|_{t=0}=0\\
(\partial_t \widetilde u)|_{t=0}=-\frac{f}{a(0)}
\end{array}
\right.,
\end{array}$$
where $\psi=D'u$ and $\psi =\gD'\widetilde u$.
\end{lemma}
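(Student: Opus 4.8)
The plan is to reduce the asserted equivalences to two ingredients: the well-posedness of the relevant Cauchy problems, and the two intertwining identities relating the first-order operators to their second-order ``squares''. I would organise the argument as $1)\Leftrightarrow 3)$, then $1)\Leftrightarrow 2)$, and finally $3)\Leftrightarrow 4)$. The first of these is purely algebraic: since $\gD=-\ga^0 D$ and $-\ga^0$ is an invertible constant matrix, the equations $D\psi=0$ and $\gD\psi=0$ have identical solution sets, and the two problems carry the very same Cauchy datum $\psi|_{t=0}=f$. Hence $1)$ and $3)$ are literally the same problem written in two ways.

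For $1)\Leftrightarrow 2)$ I would use that (i) the Dirac Cauchy problem $1)$ is well-posed, which is exactly the content of the existence, uniqueness and support properties of the fundamental solutions $S^\pm$ recalled above (theorem 2.3 of \cite{DHP}), so that $1)$ has a unique smooth solution for each datum $f$; and (ii) the operator $DD'=-\nabla_\mu\nabla^\mu+\frac{R}{4}+m^2$ appearing in $2)$, i.e. the Lichnerowicz--Weitzenb\"ock identity, is normally hyperbolic, so that its Cauchy problem on $\{t=0\}$ is well-posed as well. The easy implication is $2)\Rightarrow 1)$: if $u$ solves $2)$ and one sets $\psi\doteq D'u$, then $D\psi=DD'u=0$, so the Dirac equation holds, and it remains to match the datum. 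Since $u|_{t=0}=0$, all derivatives of $u$ tangent to the slice vanish there, so upon applying the first-order operator $D'$ only its normal-derivative term survives on $\{t=0\}$; the choice $(\pa_t u)|_{t=0}=\ga_0 f$ is made precisely so that, using $\ga_0^2=-\bI_4$ in the representation \eqref{repgamma}, this term reproduces $\psi|_{t=0}=f$. The converse $1)\Rightarrow 2)$ is then forced by uniqueness: given the solution $\psi$ of $1)$, let $u$ be the unique solution of the well-posed problem $2)$ with datum built from $f=\psi|_{t=0}$; by the implication just proved $D'u$ solves $1)$ with the same datum, whence $\psi=D'u$, which is the asserted dictionary.

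The equivalence $3)\Leftrightarrow 4)$ follows by the identical scheme, with $\gD$, $\gD'$ and $P_\gD=\gD\gD'$ in place of $D$, $D'$ and $DD'$. The role of the Lichnerowicz identity is now taken over by the explicit computation \eqref{defsecondorder}, which exhibits $\gD\gD'$ as a \emph{diagonal} second-order operator of normally hyperbolic type and thereby guarantees well-posedness of $4)$; the easy direction is again $\psi\doteq\gD'\widetilde u\Rightarrow\gD\psi=P_\gD\widetilde u=0$. The only genuinely new point is the bookkeeping of the conformal weights $a^{\pm}$ present in the definition \eqref{defmod} of $\gD$ and $\gD'$: evaluating $\gD'\widetilde u$ on $\{t=0\}$ with $\widetilde u|_{t=0}=0$, only the $\pa_\tau$-term contributes, and, using $\pa_\tau=a\,\pa_t$ together with $\ga_0^2=-\bI_4$, the normalisation $(\pa_t\widetilde u)|_{t=0}=-f/a(0)$ is exactly what absorbs the $a$-factors and yields $\psi|_{t=0}=f$.

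I expect the main obstacle to be computational rather than structural: carefully matching the initial data, i.e. verifying that the boundary term produced by $D'$ (respectively $\gD'$) on the Cauchy slice equals $f$ once the conjugation conventions for $\ga^0$ and the powers of $a(\tau)$ in \eqref{defmod} are tracked with the correct signs and weights. Everything else -- the intertwining identities, the normal hyperbolicity of the two second-order operators, and the uniqueness needed to run the converse implications -- is either recalled above or immediate.
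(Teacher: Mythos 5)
Your proposal is correct and follows essentially the same route as the paper: $1)\Leftrightarrow 3)$ by invertibility of $-\ga^0$, the first-order/second-order equivalences via the easy direction ($\psi\doteq\gD'\widetilde u$ solves the first-order problem) combined with uniqueness of the hyperbolic Cauchy problems for the converse, and the same initial-data matching on the slice using $\widetilde u|_\Sigma=0$ and $\ga_0\ga^0=\bI_4$. The only difference is that the paper simply cites theorem 2.2 of \cite{DHP} for $1)\Leftrightarrow 2)$, whereas you sketch that argument explicitly; your sign and weight bookkeeping agrees with the paper's.
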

\begin{proof}$1)\Longleftrightarrow 2)$ has been already proven in theorem 2.2 of \cite{DHP}. $1)\Longleftrightarrow 3)$ is immediate once we recall that $\gD$ differs from $D$ by a multiplicative pre-factor which is an element of $SL(2,C^\infty(\bR,(0,\infty)))$. $3)\Longleftrightarrow 4)$ can be proven as follows: suppose $4)$ holds true, and let us introduce $\psi=\gD'\widetilde u$, then $\psi(0)=\gD'\widetilde u(0)=a(0)\gamma_0 D\widetilde u|_\Sigma$. The restriction of $\widetilde u$ and of its derivatives to $\Sigma$ is meaningful since $P_\gD$ is, per direct inspection of \eqref{defsecondorder}, a set of four second order hyperbolic differential operators. Thus, $\widetilde u$ is a unique smooth solution on the whole $M$. In order to achieve compatibility with the initial condition of $3)$, one sets $\psi(0)=f$ and thus $a(0)\gamma_0 \gamma^\mu\nabla_\mu\widetilde u|_\Sigma=f$. Since $\nabla_\mu\widetilde u|_\Sigma=-n_\mu\frac{\partial\widetilde u}{\partial n}$ where $n$ is the normal vector field to the Cauchy surface, then, owing to $n\equiv \pa_t$, one obtains per direct substitution $-a(0)\gamma_0 \gamma^0\frac{\partial\widetilde u}{\partial t}|_\Sigma=f$, which, thanks to the assigned initial condition for $4)$ and to the identity $\gamma_0\gamma^0= 1$ yields
 the sought result. Notice that we have just proved that $4)$ implies $3)$, but uniqueness of the solution of the Cauchy problem also entails the converse. The transitivity property of the proven equivalences suffices to infer the statement of the lemma.
 \end{proof}

\noindent The main advantage of this lemma is that, in order to construct a solution of the Dirac equation in a cosmological spacetime, we can just focus on the fourth Cauchy problem. Furthermore, using the fact that the operator $\gD\gD'$ is diagonal, we can discuss the four components of the solutions $\widetilde{u}$ independently. It turns out that a convenient mode basis of solutions of $\gD\gD'$ is given by
\beq\label{buildblock}
p_{\vec{k},l}(\tau,\vec{x})\doteq
\frac{u_{k,l}(\tau)
e^{i\vec{k}\cdot\vec{x}}}
{(2\pi a)^{\frac{3}{2}}},
\eeq
where $k\doteq |\vec{k}|$ and
$$
u_{k,1}\doteq\left(\begin{array}{c}\chi_{k,1}\\0\\0\\0\end{array}\right), \quad u_{k,2}\doteq\left(\begin{array}{c}0\\\chi_{k,1}\\0\\0\end{array}\right),\quad u_{k,3}\doteq\left(\begin{array}{c}0\\0\\\overline{\chi_{k,1}}\\0\end{array}\right),\quad u_{k,4}\doteq\left(\begin{array}{c}0\\0\\0\\\overline{\chi_{k,1}}\end{array}\right)
$$
$$u_{k,5}\doteq\left(\begin{array}{c}\chi_{k,2}\\0\\0\\0\end{array}\right), \quad u_{k,6}\doteq\left(\begin{array}{c}0\\\chi_{k,2}\\0\\0\end{array}\right),\quad u_{k,7}\doteq\left(\begin{array}{c}0\\0\\\overline{\chi_{k,2}}\\0\end{array}\right),\quad u_{k,8}\doteq\left(\begin{array}{c}0\\0\\0\\\overline{\chi_{k,2}}\end{array}\right).$$
Here, $\chi_{k,1}$ and $\chi_{k,2}$ constitute two linearly independent solutions of the ODE
\beq\label{ODE}
\cP\chi_{k,j}\doteq \left(\pa^2_\tau+k^2+a^2m^2-ia^\prime m\right)\chi_{k,j}(\tau)=0\,. \quad j\in\{1,2\}
\eeq

\remark{Although $\cP$ is not a real differential operator and, hence, $\chi_{k,1}$ and $\chi_{k,2}$ cannot be related by complex conjugation, we are still free to choose $\overline{\chi_{k,1}}$ and $\overline{\chi_{k,2}}$ as a basis of the space of solutions of $\overline{\cP}$.}

As in the study of the solutions of the Klein-Gordon equation conformally coupled to scalar curvature, we would like to use \eqref{buildblock} to construct a mode decomposition of each solution $\psi$ of the Dirac equation. Yet, in view of lemma \ref{Cauchy}, which guarantees that a solution of the Dirac equation can be found from one of $\gD\gD'$ by application of $\gD'$, we can expect that not all 8 basis modes provided by \eqref{buildblock} are needed to construct a mode expansion of $\psi$. As a matter of fact, the following result implicitly entails that already the first four modes listed in \eqref{buildblock} are complete.

\begin{proposition}\label{Diracmod}
Let $\chi_{k}$ be a smooth solution of \eqref{ODE} which satisfies the following normalisation condition
\beq\label{normalisation}
|(\pa_\tau+iam)\chi_{k}|^2+k^2|\chi_{k}|^2\equiv1.
\eeq
Suppose, furthermore, that for any fixed $\tau$ and $k_1>0$, both the functions $k\mapsto\chi_k(\tau)$ and $k\mapsto\pa_\tau\chi_k(\tau)$ are in $L^2{([0,k_1], k^2 dk)}$ and they grow at most polynomially in $k$.
Then, for any solution $\psi\in\gS(M)$, it holds
\beq\label{diracexpansion}
\psi(\tau,\vec{x})=\sum\limits_{l=1}^4\int\limits_{\bR^3}\widetilde\psi_l(\vec{k}) \psi_{\vec{k},l}(\tau,\vec{x})\,d^3k\,,
\eeq
where
\beq\label{diracmodes}
\qquad
\psi_{\vec{k},l}(\tau,\vec{x})\doteq
\gD^\prime
\frac{u_{k,l}(\tau)
e^{i\vec{k}\cdot\vec{x}}}
{(2\pi a)^{\frac{3}{2}}}.
\eeq
\end{proposition}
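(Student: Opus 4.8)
The plan is to reduce \eqref{diracexpansion} to a statement in a single spatial Fourier sector and then close the argument with the uniqueness of the Dirac Cauchy problem. As anticipated by Lemma \ref{Cauchy}, a solution of the Dirac equation is produced by applying $\gD'$ to a solution of the diagonal operator $\gD\gD'$. Accordingly, I would first check that each candidate mode $\psi_{\vec{k},l}=\gD' p_{\vec{k},l}$ is a smooth solution of $D\psi=0$: the $p_{\vec{k},l}$ solve $\gD\gD' p_{\vec{k},l}=0$ since the $u_{k,l}$ are assembled from the solutions $\chi_k,\overline{\chi_k}$ of the diagonal operator \eqref{defsecondorder}, so $\gD\psi_{\vec{k},l}=\gD\gD' p_{\vec{k},l}=0$ and, $-\ga^0$ being invertible, $D\psi_{\vec{k},l}=0$. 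Because $\gD'$ replaces $\vec\nabla$ by $i\vec k$, it preserves the factor $e^{i\vec{k}\cdot\vec{x}}$, so $\psi_{\vec{k},l}=\widehat\psi_{\vec{k},l}(\tau)\,e^{i\vec{k}\cdot\vec{x}}$ for a $\bC^4$-valued $\widehat\psi_{\vec{k},l}$ solving the Fourier-transformed Dirac equation, which is first order in $\tau$.

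The crux is to prove that, for every $\vec k$, the four spinors $\widehat\psi_{\vec{k},1}(\tau),\dots,\widehat\psi_{\vec{k},4}(\tau)$ are linearly independent in $\bC^4$. Since they satisfy a common first-order linear ODE in $\tau$, independence at one instant is equivalent to independence at every instant, so it suffices to evaluate, at a convenient $\tau$, the determinant of the matrix whose columns are these four spinors. Each $u_{k,l}$ with $l=1,\dots,4$ carries a single non-zero entry equal to $\chi_k$ or $\overline{\chi_k}$, so the two-by-two block-and-Pauli structure of $\gD'$ arranges this matrix into block form, with diagonal blocks $(\pa_\tau+iam)\chi_k$ and its conjugate times $\bI_2$, and off-diagonal blocks proportional to $\chi_k\,(\vec k\cdot\vec\sigma)$ and $\overline{\chi_k}\,(\vec k\cdot\vec\sigma)$. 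Using $(\vec k\cdot\vec\sigma)^2=k^2\bI_2$, the determinant should collapse to $\big(|(\pa_\tau+iam)\chi_k|^2+k^2|\chi_k|^2\big)^2$, which the normalisation \eqref{normalisation} fixes to $1$; hence the four modes are everywhere linearly independent. This computation, which requires pinning down the exact form of $\gD'$ and checking that the signs assemble into precisely the normalised combination, is the step I expect to be the main obstacle.

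Granting independence, the rest is bookkeeping. In the sector $\vec k$ the space of solutions of $D\psi=0$ is four-complex-dimensional, since a solution is fixed by its $\bC^4$-valued datum on a Cauchy surface $\Sigma$, so $\widehat\psi_{\vec{k},1},\dots,\widehat\psi_{\vec{k},4}$ form a basis; in particular the modes built from the second solution $\chi_{k,2}$ already lie in their span, which is the precise sense in which four modes of \eqref{buildblock} suffice. For a given $\psi\in\gS(M)$, finite propagation speed makes $\psi|_\Sigma$ smooth and compactly supported, so its spatial Fourier transform is Schwartz in $\vec k$; expanding this datum in the basis $\{\widehat\psi_{\vec{k},l}(\tau_*)\}_{l=1}^4$ at one time $\tau_*$ yields the coefficients $\widetilde\psi_l(\vec k)$. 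Because the normalisation \eqref{normalisation} bounds the entries of the coefficient matrix while its determinant is the nonzero constant $1$, the inverse is controlled, so the polynomial growth of $\chi_k$ and $\pa_\tau\chi_k$ forces $\widetilde\psi_l$ to decay rapidly at large $k$, whereas the hypothesis $\chi_k\in L^2([0,k_1],k^2dk)$ secures integrability of the mode integral as $k\to0$. These bounds also license differentiation under the integral, so $\sum_{l=1}^4\int\widetilde\psi_l(\vec k)\,\psi_{\vec{k},l}\,d^3k$ is a smooth solution of the Dirac equation that agrees with $\psi$ on $\Sigma$. Well-posedness of the Dirac Cauchy problem (theorem 2.3 of \cite{DHP}) then identifies it with $\psi$, yielding \eqref{diracexpansion}.
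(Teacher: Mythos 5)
Your argument is correct in outline and, in fact, more explicit than what the paper provides: Proposition \ref{Diracmod} is stated without its own proof, its verification being absorbed into the proof of Proposition \ref{diracprop}, where inserting the inversion formula $\widetilde{\psi_l}(\vec{k})=\gs_M(\psi_{\vec{k},l},\psi)$ into \eqref{diracexpansion} is declared to yield an identity by ``a long, tedious, but direct computation''. The paper's implicit route is thus to produce the coefficients via the conserved Hermitian form $\gs_M$ and to check the resulting completeness relation of the modes by hand, which has the advantage of delivering the explicit inversion formula of Proposition \ref{diracprop} in the same stroke. Your route is genuinely different and more structural: you reduce to a fixed spatial Fourier sector, where the Dirac equation is a first-order ODE on $\bC^4$, prove that the four spinors $\widehat\psi_{\vec{k},l}(\tau)$ form a basis by a Wronskian argument whose determinant collapses, via $(\vec{k}\cdot\vec{\si})^2=k^2\bI_2$, to $\bigl(|(\pa_\tau+iam)\chi_k|^2+k^2|\chi_k|^2\bigr)^2=1$ --- thereby exposing \eqref{normalisation} as precisely the condition making the mode matrix invertible with uniformly bounded inverse --- and you then close with well-posedness of the Dirac Cauchy problem. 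This buys a conceptual explanation of why only four of the eight modes in \eqref{buildblock} are needed and of where the normalisation comes from, at the price that your coefficients, defined by expanding the Cauchy datum at one instant $\tau_*$, are identified with the intrinsic expression $\gs_M(\psi_{\vec{k},l},\psi)$ only after the fact, by uniqueness of the expansion. The one step you rightly flag as delicate --- pinning down the sign conventions in $\gD'$ so that the diagonal blocks are $(\pa_\tau+iam)\chi_k$ and its conjugate --- is indeed the only place where the computation could go astray, but the very form of \eqref{normalisation} makes clear that this is the intended outcome, so I see no genuine gap.
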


\noindent If one takes into account \eqref{normalisation}, it is also possible to invert the above decomposition as

\begin{proposition}\label{diracprop}
Under the hypotheses of the preceding proposition, the mode decomposition of any Dirac field $\psi\in\gS(M)$ can be inverted as
\beq\label{diracmodeinversion}
\widetilde{\psi_l}(\vec{k}) = -i\int\limits_{\bR^3}  a^3(\tau) \,
\psi^\dagger_{\vec{k},l}(\tau,\vec{x}) \ga_0 \, \psi(\tau,\vec{x}) \,d^3x
= \gs_M(\psi_{\vec{k},l},\psi)\, .
\eeq
Furthermore, each $\widetilde{\psi_l}(\vec{k})$ is a square integrable function which decays faster then any inverse power of $k$.
\end{proposition}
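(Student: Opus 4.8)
The plan is to prove \eqref{diracmodeinversion} in three moves: first recast the intrinsic Hermitian form $\gs_M$ as the stated $d^3x$ integral, thereby establishing the equality of the two displayed expressions; then show that the modes $\psi_{\vec{k},l}$, $l=1,\dots,4$, are orthonormal with respect to $\gs_M$; and finally test the expansion \eqref{diracexpansion} against $\psi_{\vec{k},l}$ to read off $\widetilde\psi_l$. For the first move I would evaluate $\gs_M(\psi_{\vec{k},l},\psi)=i\int_\Sigma\psi^\dagger_{\vec{k},l}\,\displaystyle{\not}n\,\psi\,d\mu(\Sigma)$ on a constant-$\tau$ Cauchy surface $\Sigma_\tau$. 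There the future-pointing unit normal is $n=a^{-1}\pa_\tau=e_0$, so that $\displaystyle{\not}n$ is a multiple of $\ga_0$, while the induced volume element is $d\mu(\Sigma_\tau)=a^3(\tau)\,d^3x$; inserting $\beta=-i\ga_0$ together with $\ga_0^2=-\bI_4$ turns $\psi^\dagger_{\vec{k},l}\ga_0$ into a positive-definite $L^2$-type pairing and reproduces exactly the $-i\int_{\bR^3}a^3\,\psi^\dagger_{\vec{k},l}\ga_0\psi\,d^3x$ of \eqref{diracmodeinversion}. Since each $\psi_{\vec{k},l}=\gD' p_{\vec{k},l}$ satisfies $\gD\psi_{\vec{k},l}=\gD\gD' p_{\vec{k},l}=0$, hence $D\psi_{\vec{k},l}=0$, the Cauchy-surface independence of $\gs_M$ holds on these modes and lets me pick whichever $\tau$ is most convenient.

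The heart of the argument is the orthonormality relation $\gs_M(\psi_{\vec{k},l},\psi_{\vec{k}',l'})=\delta_{ll'}\,\delta^{(3)}(\vec{k}-\vec{k}')$. Writing both factors via \eqref{buildblock} and \eqref{diracmodes}, the spatial integral $\int_{\bR^3}e^{i(\vec{k}'-\vec{k})\cdot\vec{x}}\,d^3x=(2\pi)^3\delta^{(3)}(\vec{k}-\vec{k}')$ immediately forces $\vec{k}=\vec{k}'$ and absorbs the $(2\pi)^{-3/2}$ normalisations. On the plane waves the spatial part of $\gD'$ acts algebraically ($\vec\nabla\to i\vec{k}$), whereas its $\pa_\tau$ part, together with the factors $a^{\pm 3/2}$ and the measure $a^3$, leaves a purely $\tau$-dependent spinor contraction built from $\chi_k$, $(\pa_\tau+iam)\chi_k$ and their conjugates. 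The explicit action of $\gD'$ on the building blocks, combined with the mutual orthogonality of the component spinors $u_{k,l}$ in $\bC^4$ and the block structure of $\gD\gD'$ in \eqref{defsecondorder}, annihilates the off-diagonal terms $l\neq l'$, while for $l=l'$ the contraction collapses to $|(\pa_\tau+iam)\chi_k|^2+k^2|\chi_k|^2$, which is pinned to $1$ by \eqref{normalisation}.

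This is the step I expect to be the main obstacle: one must verify that $\gD'$ acting on the building blocks reproduces precisely this quadratic form — so that the apparently ad hoc normalisation \eqref{normalisation} is exactly what makes the modes an orthonormal family — and in particular that the $\tau$-derivatives generated by $\gD'$ reassemble into $(\pa_\tau+iam)\chi_k$ rather than into a genuinely $\tau$-dependent remainder (the latter being forbidden anyway by the already established $\tau$-independence of $\gs_M$). Granting orthonormality, the inversion is then immediate: applying $\gs_M(\psi_{\vec{k},l},\,\cdot\,)$ to \eqref{diracexpansion} and exchanging it with the $\vec{k}'$-integral gives $\gs_M(\psi_{\vec{k},l},\psi)=\sum_{l'=1}^4\int_{\bR^3}\widetilde\psi_{l'}(\vec{k}')\,\delta_{ll'}\delta^{(3)}(\vec{k}-\vec{k}')\,d^3k'=\widetilde\psi_l(\vec{k})$, which is \eqref{diracmodeinversion}; the interchange is legitimised by the polynomial bounds and the near-origin $L^2([0,k_1],k^2dk)$ integrability of $k\mapsto\chi_k,\pa_\tau\chi_k$ assumed in Proposition \ref{Diracmod}.

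For the concluding regularity and decay statements I would use that $\psi=Sf$ with $f\in\mD(DM)$, so that the restrictions of $\psi$ and of its normal derivative to $\Sigma_\tau$ are smooth and compactly supported. Consequently $\widetilde\psi_l(\vec{k})$ is, up to the at most polynomially growing factors $\chi_k$ and $(\pa_\tau+iam)\chi_k$, the Fourier transform in $\vec{x}$ of compactly supported smooth data. The Paley--Wiener-type rapid decay of such transforms, multiplied by the polynomial growth of the mode functions, yields decay faster than any inverse power of $k$, while the $k^2dk$-integrability near $k=0$ secures square integrability, completing the proof.
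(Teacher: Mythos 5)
Your proposal is correct and follows essentially the same route as the paper: the paper dismisses the inversion identity as "a long, tedious, but direct computation" (which your orthonormality relation $\gs_M(\psi_{\vec{k},l},\psi_{\vec{k}',l'})=\delta_{ll'}\delta^{(3)}(\vec{k}-\vec{k}')$, pinned down by the normalisation \eqref{normalisation}, is precisely the organised form of), and its regularity argument is the same as yours — $\widetilde\psi_l(\vec{k})$ is a product of the rapidly decreasing Fourier transform of compactly supported smooth Cauchy data with the polynomially bounded, locally $L^2([0,k_1],k^2dk)$ factors $ima\chi_k$, $(\vec\si\cdot\vec{k})\chi_k$ and $\pa_\tau\chi_k$. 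Your identification of $\gs_M$ with the $\int_{\bR^3}a^3\psi^*_{\vec{k},l}\psi\,d^3x$ pairing on a constant-$\tau$ surface likewise matches Lemma \ref{usefullemma}.
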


\begin{proof}
In order to show that either \eqref{diracmodeinversion} inserted in \eqref{diracexpansion} or vice versa yields an identity is a matter of a long, tedious, but direct computation. However, the statement on the regularity of $\widetilde\psi_l(\vec{k})$ requires a closer look. It descends both from the hypotheses on $\chi_k$ formulated in the previous proposition and from the fact that, at fixed $\tau$, $\psi(\tau,\vec{x})$ is an $\vec{x}$-dependant four-vector whose components are compactly supported smooth functions. Hence, they are square integrable and their Fourier transform is rapidly decreasing in $k$. If one combines this result with the fact that $\widetilde\psi_l(\vec{k})$ is a linear combination of the product of the Fourier transform of compactly supported smooth functions together with either $ima \chi_k(\tau)$, $(\sigma\cdot \vec{k})\chi_k(\tau)$ or $\pa_\tau \chi_k(\tau)$, then the sought result follows.
\end{proof}

\noindent Notice that, as in the case of a scalar field, the $\tau-$independence of $\widetilde{\psi_l}(\vec{k})$ stems from both the Cauchy surface independence of $\gs_M$ and the fact that $\psi_{\vec{k},l}$ is a solution of the Dirac equation.

\remark{
For the class of spacetimes under investigation, a set of solutions of \eqref{ODE} $\chi_k$, which fulfil the properties required in the above two propositions, are constructed in lemma \ref{modesregularity}. They are uniquely determined by the initial conditions we shall assume to be fulfilled in the following, {\it viz.} 
$$
\lim_{\tau\to-\infty} e^{ik \tau }\chi_{k}(\tau) = \frac{1}{\sqrt{2}k}\; ,\qquad
\lim_{\tau\to-\infty} e^{ik \tau } \pa_\tau \chi_{k}(\tau) = -\frac{i}{\sqrt{2}} \;.
$$
}

\noindent  As a first profitable consequence of the mode decomposition we can obtain a convenient expression for the conserved Dirac Hermitian product on FRW spacetimes with flat spatial sections.

\begin{lemma}\label{usefullemma} For every $\psi_1$ and $\psi_2$ in $\gS(M)$, it holds
\beq\label{hermitianproduct}
\gs_M (\psi_1,\psi_2 ) = \sum_{l=1}^4\;\int\limits_{\bR^3}\, \overline{\widetilde\psi_{l,1}(\vec{k})}\; \widetilde\psi_{l,2}(\vec{k})
\;d^3k,
\eeq
where
$\widetilde\psi_{l,j}(\vec{k}) \doteq \gs_M(\psi_{\vec{k},l}, \psi_j) $ for $j\in \{1,2\}$.
\end{lemma}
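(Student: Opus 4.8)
The plan is to obtain \eqref{hermitianproduct} as a direct consequence of the inversion formula \eqref{diracmodeinversion} together with the antilinearity, linearity and Hermiticity of $\gs_M$; the only genuine analytic point is the justification of an interchange of integrations. First I would record the two algebraic facts I need. The product $\gs_M$ is antilinear in its first entry and linear in its second, and it is Hermitian, $\overline{\gs_M(\psi_1,\psi_2)}=\gs_M(\psi_2,\psi_1)$. The latter follows directly from the defining relations of the Dirac conjugation matrix, $\beta^*=\beta$ and $\ga^*_a=-\beta\ga_a\beta^{-1}$: since $\psi_1^\dagger\dnot n\,\psi_2$ is a scalar and $n$ has real frame components, one has $(\dnot n)^*=-\beta\dnot n\,\beta^{-1}$, so that conjugating $i\int_\Sigma\psi_1^\dagger\dnot n\,\psi_2\,d\mu$ reproduces $i\int_\Sigma\psi_2^\dagger\dnot n\,\psi_1\,d\mu$. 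Because this is a pointwise identity under the integral sign, it continues to hold for any pairing in which at least one entry has compact spatial support on $\Sigma$, in particular for the mixed pairings $\gs_M(\psi_1,\psi_{\vec k,l})$ appearing below.

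Next I would fix a Cauchy surface $\Sigma=\{\tau=\text{const}\}$ and insert the mode expansion \eqref{diracexpansion} of $\psi_2$ into $\gs_M(\psi_1,\psi_2)=i\int_\Sigma\psi_1^\dagger\dnot n\,\psi_2\,d\mu(\Sigma)$. Pulling the finite sum over $l$ and the integral over $\vec k$ outside the surface integral then gives
\beq
\gs_M(\psi_1,\psi_2)=\sum_{l=1}^4\int_{\bR^3}\widetilde\psi_{l,2}(\vec k)\,\gs_M(\psi_1,\psi_{\vec k,l})\,d^3k.
\eeq
Applying Hermiticity to the inner factor and then the inversion formula \eqref{diracmodeinversion}, $\gs_M(\psi_1,\psi_{\vec k,l})=\overline{\gs_M(\psi_{\vec k,l},\psi_1)}=\overline{\widetilde\psi_{l,1}(\vec k)}$, turns the right-hand side into $\sum_{l=1}^4\int_{\bR^3}\overline{\widetilde\psi_{l,1}(\vec k)}\,\widetilde\psi_{l,2}(\vec k)\,d^3k$, which is precisely \eqref{hermitianproduct}.

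The step that requires care, and which I expect to be the main obstacle, is the exchange of the $\vec k$-integral with the integral over $\Sigma$. This is an application of Fubini's theorem whose hypotheses are guaranteed by the regularity already established. On the slice $\Sigma$ the solution $\psi_1=Sf_1$ has compact spatial support, since $J^\pm(\supp f_1)\cap\Sigma$ is compact, so the $\Sigma$-integration is effectively over a fixed compact set; the modes $\psi_{\vec k,l}$ are smooth and, by the hypotheses on $\chi_k$ in Proposition \ref{Diracmod}, grow at most polynomially in $k$ and are square integrable against $k^2\,dk$ near $k=0$; and the coefficients $\widetilde\psi_{l,2}(\vec k)$ decay faster than any inverse power of $k$ by Proposition \ref{diracprop}. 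Consequently the double integrand is absolutely integrable over $\Sigma\times\bR^3$ and the interchange is legitimate. An equivalent but less economical route would insert the expansions of both $\psi_1$ and $\psi_2$ and rely on the formal orthogonality relation $\gs_M(\psi_{\vec k,l},\psi_{\vec k',l'})=\delta_{ll'}\,\delta(\vec k-\vec k')$; since this relation is itself nothing but a restatement of the computation behind the inversion formula, it offers no real saving over the argument above.
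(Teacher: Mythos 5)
Your proposal is correct and follows essentially the same route as the paper: insert the mode expansion \eqref{diracexpansion} of $\psi_2$ into $\gs_M(\psi_1,\psi_2)$, interchange the $\vec k$- and $\Sigma$-integrations using the compact spatial support of $\psi_1$ and the rapid decay of the coefficients, and conclude via sesquilinearity together with $\gs_M(\psi_1,\psi_{\vec k,l})=\overline{\widetilde\psi_{l,1}(\vec k)}$. Your explicit verification of the Hermiticity of $\gs_M$ from the defining properties of $\beta$ is a point the paper leaves implicit, but the argument is the same.
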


\begin{proof}
Let us rewrite
$$
\gs_M (\psi_1,\psi_2 ) =
-i\int\limits_{\Si}\psi_1^\dagger\displaystyle{\not}n\,\psi_2\,d\mu(\Si)= \int\limits_{\bR^3} \psi_1^*\psi_2\; a^3(\tau)\;d^3x
$$
where we have used both $\psi^\dagger=\psi^*\beta$ and the freedom to choose $\Si$ in such a way that $n\equiv\pa_t$ and 
$-i\beta \displaystyle{\not}{n}=\bI_4$.
We now insert the expansion
$$
\psi_2(\tau_x,\vec{x}) = \int\limits_{\bR^3} d\vec{k} \sum_{l=1}^4  \gs_M(\psi_{\vec{k},l}, \psi_2)\;\psi_{\vec{k},l}(\tau_x,\vec{x})
$$
into $\gs_M (\psi_1,\psi_2 )$. 
Since the $x$-integration is over a compact set while the integrand is rapidly decreasing for large $k$, we can use the theorem of  dominated convergence to switch the order of the $k$- and $x$-integration. Afterwards, by using the sesquilinearity of $\gs_M$ and by noticing that
$
\gs_M(\psi_1,\psi_{\vec{k},l} ) = \overline{\widetilde\psi_{l,1}(\vec{k})}\;,
$
the sought result follows.
\end{proof}

To conclude the section, we remark that the following mode decomposition of the causal propagator can be obtained: Provided that the modes $\chi_k$ fulfil the conditions imposed in the two main propositions of this section, one finds \cite{Thomas}
$$
S(\tau_x,\vec{x},\tau_y,\vec{y})= i \sum_{l=1}^4 \int\limits_{\bR^3}   \psi^{\phantom{\dagger}}_{\vec{k},l}(\tau_x,\vec{x}) \,\psi^\dagger_{\vec{k},l}(\tau_y,\vec{y}) \;d^3k \;.
$$
In the last section of this paper we will use this expression to derive a mode expansion for the two-point function of the states we shall introduce.

\section{Quantum Field Theory in the Bulk and on the Cosmological Boundary}

The aim of this section is to use the constructions of the previous section in order to achieve two main results. On the one hand, we will show that it is possible to set up a genuine Fermionic quantum field theory on past null infinity $\Im^-$. On the other hand, we shall prove that one can encode the information of a bulk quantum field theory into the counterpart on $\Im^-$.

As in the previous section, we shall first discuss the case of a massive real scalar field conformally coupled to scalar curvature in order to recapitulate the main features of the construction we aim for in a scenario where it is already known to work. Subsequently, we prove that the same procedure can be successfully carried out in the case of a Dirac field as well.

\subsection{The Bulk-to-Boundary Correspondence for Real Scalar Fields}

This subsection recollects some of the results already proven in \cite{DMP2, DMP3, Nicola} concerning the projection of real scalar fields to $\Im^-$. The key step is the following: while the classical theory on $(M,g_{FRW})$ naturally contains a symplectic space of solutions of the equation of motion, there is no dynamical content on the cosmological horizon. Therefore, we are forced to perform a choice of a symplectic space on the boundary which can be only justified {\it a posteriori}. To wit, as in the above-mentioned references, we introduce
$$
\cS(\Im^-)=\left\{ \Phi\in C^\infty(\Im^-) \;|\;\Phi\;\textrm{and}\;\pa_v \Phi  \in L^2(\Im^-;dv d\bS^2(\theta,\varphi)) \right\}\;,
$$
where $d\bS^2(\theta,\varphi)$ is the standard measure on the $2$-sphere. This is a strongly non degenerate symplectic space if endowed with the symplectic form
$$
\si_{\Im^-}(\Phi_1,\Phi_2) = -\int\limits_{\bR\times\bS^2}\at \Phi_1\pa_v \Phi_2 -  \Phi_2\pa_v \Phi_1 \ct  dv d\bS^2 \;.\qquad\forall\; \Phi_1,\Phi_2\in\cS(\Im^-)
$$
The motivation to choose $(\cS(\Im^-),\si_{\Im^-})$ is the existence of a symplectomorphism
$$
\Ga_S:\cS(M)\to \cS(\Im^-)
$$
whose explicit action on the elements of $\cS(M)$ is defined as
\begin{gather*}
\Ga_S(\phi)(v,\theta,\varphi)\doteq
\lim_{u\to -\infty} - u\, a\, \phi(\tau(u,v), \vec{x}(u,v,\theta,\varphi)).
\end{gather*}
Here, we have first implicitly switched from the Cartesian coordinates $\vec{x}=(x,y,z)$ on the bulk Cauchy surface isomorphic to $\bR^3$ to the spherical ones $(r=|\vec{x}|,\theta,\varphi)$ and introduced the null coordinates $v=\tau+r$ and $u=\tau-r$ afterwards.
The above limit was already computed in the proof of theorem 4.4 in \cite{DMP2} for the asymptotic de Sitter case, but the same computation can be trivially repeated for the asymptotic power-law case, as it relies on the fall-off behaviour of the classical modes towards $\Im^-$ which is the better the larger $\de$ in \eqref{scalef} is. The result found in \cite{DMP2} is
\begin{gather*}
\Ga_S(\phi)(v,\theta,\varphi)=
\frac{1}{\sqrt{2\pi}}
\int_0^\infty \at e^{-ikv} \,  \widetilde{\phi}(k,\pi-\theta,\pi+\varphi) + e^{+ikv} \,   \overline{\widetilde{\phi}(k,\pi-\theta,\pi+\varphi)} \ct \sqrt{\frac{k}{2}}\, dk\;.
\end{gather*}

\remark{The {\it ansatz} for the map $\Ga_S$ is based on the following preliminary considerations. In Minkowski spacetime, smooth solutions $\phi$ of the massless Klein-Gordon equation are known to decay as $(-u)^{-1}$ towards $\Im^-$, that is, in the limit $u\to-\infty$. To wit, due to the fact that Minkowski spacetime can be conformally embedded into the Einstein static universe and that the scalar field has conformal weight $1$, $$\widetilde \phi\doteq \frac{2}{\sqrt{(1+u^2)(1+v^2)}}\,\phi$$
\noindent is a solution of the massless, conformally coupled Klein-Gordon equation in $M_E$ \cite{DMP}. Hence, $\widetilde \phi$ is manifestly finite on $\Im^-$ and the decay behaviour of $\phi$ follows. The effect of the projection map $\Ga_S$ can now be understood as follows. The multiplication with the conformal factor $a$ transforms the massive, conformally coupled field $\phi$ on $(M,g_{FRW})$ into a scalar field on Minkowski spacetime with mass $am$. This mass term is, however, not essential for the fall-off behaviour of the associated solutions, as it is finite or vanishing on $\Im^-$. Hence, the additional factor of $-u$ in $\Ga_S$ is both necessary and sufficient to obtain a finite projection of solutions in $(M,g_{FRW})$ to $\Im^-$.}

\vsp

Since $\widetilde{\phi}$ decays faster than any inverse power of $k$ and since it is square integrable,
$\Ga_S(\phi)$ is an element of $\cS(\Im^-)$. Furthermore, as anticipated above, the following proposition (see theorem 4.2 in \cite{DMP2})
holds
\begin{proposition}\label{conservscal}
The symplectic form $\si_M$ is conserved under the projection $\Ga_S$, that is, for every $\phi_1$ and $\phi_2$ in $\cS(M)$, it holds
$$
\si_M(\phi_1,\phi_2) = \si_{\Im^-}(\Ga_S \phi_1,\Ga_S \phi_2)\;.
$$
Thus, the map $\Ga$ is an injective symplectomorphism from  $\cS(M)$ into $\cS(\Im^-)$.
\end{proposition}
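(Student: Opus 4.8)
My plan is to verify the claimed identity by direct substitution, reducing the boundary symplectic form to a single integral over $k\in(0,\infty)$ and the sphere, and then comparing the result with the time-independent mode expression of $\si_M$ established above. The injectivity is afterwards a formal consequence of non-degeneracy. First I would insert the explicit form of $\Ga_S(\phi_j)$ into $\si_{\Im^-}$. Writing $\Phi_j\doteq\Ga_S(\phi_j)$, each $\Phi_j$ is a real function on $\Im^-$ of the shape $A_j+\overline{A_j}$, where $A_j(v,\theta,\varphi)=\frac{1}{\sqrt{2\pi}}\int_0^\infty e^{-ikv}\,\widetilde\phi_j(k,\pi-\theta,\pi+\varphi)\sqrt{k/2}\,dk$ collects the $e^{-ikv}$ part. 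The crucial structural fact is that, at fixed angles, $\Phi_j$ and $\pa_v\Phi_j$ lie in $L^2(\bR,dv)$; this is exactly what the square-integrability and rapid decay of $\widetilde\phi$ from the second proposition above guarantee, and it is what makes $\Ga_S(\phi_j)\in\cS(\Im^-)$ in the first place.

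Because of this, the $v$-integral in $\si_{\Im^-}$ can be evaluated by Plancherel's theorem in the variable $v$: it pairs the $v$-Fourier content of $\Phi_1$ with that of $\pa_v\Phi_2$, and since both frequencies range only over $(0,\infty)$, only the ``diagonal'' contributions ($e^{-ikv}$ against $e^{+ik'v}$) survive, while the ``$e^{\mp ikv}e^{\mp ik'v}$'' terms integrate to zero. Carrying this out collapses the double $k$-integral to a single one and yields, at fixed angles,
\[
\int\limits_{\bR}\at\Phi_1\pa_v\Phi_2-\Phi_2\pa_v\Phi_1\ct dv
= i\int\limits_0^\infty k^2\at \widetilde\phi_1\,\overline{\widetilde\phi_2}-\overline{\widetilde\phi_1}\,\widetilde\phi_2\ct dk ,
\]
where the arguments of $\widetilde\phi_j$ are $(k,\pi-\theta,\pi+\varphi)$.

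Integrating over the sphere and including the prefactor in the definition of $\si_{\Im^-}$, I would then change angular variables by the antipodal map $(\theta,\varphi)\mapsto(\pi-\theta,\pi+\varphi)$. Since this map preserves $d\bS^2$, it removes the reflection in the arguments of $\widetilde\phi_j$ and converts $\int_{\bS^2}\int_0^\infty k^2(\,\cdot\,)\,dk\,d\bS^2$ into $\int_{\bR^3}(\,\cdot\,)\,d^3k$ in spherical coordinates. Comparing with the mode expression $\si_M(\phi_1,\phi_2)=-i\int_{\bR^3}(\overline{\widetilde\phi_1}\widetilde\phi_2-\widetilde\phi_1\overline{\widetilde\phi_2})\,d^3k$ then gives the asserted equality $\si_M(\phi_1,\phi_2)=\si_{\Im^-}(\Ga_S\phi_1,\Ga_S\phi_2)$, once one has tracked the overall constant and sign carefully. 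Injectivity follows immediately: if $\Ga_S\phi=0$ then $\si_M(\phi,\phi')=\si_{\Im^-}(\Ga_S\phi,\Ga_S\phi')=0$ for every $\phi'\in\cS(M)$, and the weak non-degeneracy of $\si_M$ forces $\phi=0$; combined with the preservation of the symplectic form this exhibits $\Ga_S$ as an injective symplectomorphism.

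The main obstacle I anticipate is the analytic justification of the $v$-integration, i.e. the legitimacy of interchanging the $v$- and $k$-integrals, equivalently of the distributional identity $\int_{\bR}e^{i(k-k')v}dv=2\pi\delta(k-k')$ when paired against the $k$-integrals. This is precisely where one must invoke that $\widetilde\phi$ is square integrable and decays faster than any inverse power of $k$, so that $\Phi_j,\pa_v\Phi_j\in L^2(\Im^-)$ and Fubini together with Plancherel genuinely apply; everything else is the careful bookkeeping of the factor $\sqrt{k/2}$, of the overall sign, and of the antipodal reflection.
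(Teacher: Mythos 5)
Your argument is correct and is essentially the proof the paper relies on: the paper itself gives no proof of this proposition but defers to theorem 4.2 of \cite{DMP2}, where the identity is established exactly as you propose --- by passing to the Fourier--Plancherel representation of $\Ga_S\phi_j$ along $v$, noting that only the diagonal pairings of the positive-frequency parts survive, collapsing to a single $k$-integral, and matching against the mode expression for $\si_M$ after the measure-preserving antipodal change of angular variables, with injectivity then following from the weak non-degeneracy of $\si_M$ just as you say. The one caution is that the ``careful tracking of the overall constant and sign'' you defer is genuinely nontrivial here: if one takes the displayed formula for $\Ga_S(\phi)$ in the paper at face value together with the displayed mode formula for $\si_M$, the naive bookkeeping does not close, so the prefactor of the boundary Fourier coefficients must be re-derived from the asymptotic evaluation of the limit defining $\Ga_S$ (i.e.\ from the integration by parts in $\cos\hat\theta$ and the limit $-u/r\to 2$) rather than read off from the schematic expression.
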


\noindent This proposition relating classical field theories entails a close relationship between the bulk and the boundary algebra of quantum observables. As a matter of fact, let us recall that to every real vector space $S$ endowed with a weakly non-degenerate symplectic form $\sigma_S$ one can associate a unique (up to isometric $*$-isomorphisms) Weyl $C^*$-algebra whose generators $W(s)$, $s\in S$, fulfil the defining relations:
$$ a)\;W(s)^*=W(-s)\qquad b)\;W(s)W(s')=e^{\frac{i}{2}\sigma_S(s,s')}W(s+s')\,,\quad\forall s,s'\in S\,.$$
In the scenario at hand, it is thus clear that it is possible to associate a Weyl $C^*$-algebra both to the boundary -- $\cW(\Im^-)$ -- and to the bulk, say $\cW(M)$. Furthermore, in view of the above proposition it holds (see theorem 4.2  in \cite{DMP2}) that

\begin{proposition}\label{scalarbtb}
The symplectomorphism $\Ga_S$ induces an injective $*-$homomorphism of $C^*-$algebras $\imath_S:\cW(M)\to\cW(\Im^-)$, fully determined by
$$
\imath_S W(\phi)\doteq W(\Ga_S \phi)\,, \qquad \forall \phi \in \cS(M)\;.
$$
Furthermore, for every state, that is, for every continuous, positive, normalised, linear functional $\omega:\cW(\Im^-)\to\mathbb{C}$, there is  a counterpart $\omega_M:\cW(M)\to\mathbb{C}$ unambiguously defined as
$$
\omega_M\left(W(\phi)\right)\doteq\omega\left(\imath_S(W(\phi)\right)=\omega\left(W(\Ga_S\phi)\right)\,.
$$
\end{proposition}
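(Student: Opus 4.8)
\noindent The plan is to build $\imath_S$ from the universal property of the Weyl $C^*$-algebra: $\cW(M)$ is, up to isometric $*$-isomorphism, the unique $C^*$-algebra generated by unitaries $W(\phi)$, $\phi\in\cS(M)$, subject to the relations a) and b) with $\sigma_S=\si_M$, and likewise for $\cW(\Im^-)$ with $\sigma_S=\si_{\Im^-}$. Hence it suffices to exhibit inside $\cW(\Im^-)$ a family of elements indexed by $\cS(M)$ that obeys the Weyl relations of $(\cS(M),\si_M)$; the natural candidate is $\{W(\Ga_S\phi)\}_{\phi\in\cS(M)}$, and the universal property then produces a unique $*$-homomorphism $\imath_S$ with $\imath_S W(\phi)=W(\Ga_S\phi)$.

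First I would check that this family indeed satisfies the two defining relations. Relation a) is immediate from the linearity of $\Ga_S$, since $W(\Ga_S\phi)^*=W(-\Ga_S\phi)=W(\Ga_S(-\phi))$. For relation b) the key input is the conservation of the symplectic form proven in Proposition \ref{conservscal}, namely $\si_{\Im^-}(\Ga_S\phi_1,\Ga_S\phi_2)=\si_M(\phi_1,\phi_2)$, together with the additivity $\Ga_S\phi_1+\Ga_S\phi_2=\Ga_S(\phi_1+\phi_2)$; combining these gives
$$
W(\Ga_S\phi_1)W(\Ga_S\phi_2)=e^{\frac{i}{2}\si_{\Im^-}(\Ga_S\phi_1,\Ga_S\phi_2)}W(\Ga_S(\phi_1+\phi_2))=e^{\frac{i}{2}\si_M(\phi_1,\phi_2)}W(\Ga_S(\phi_1+\phi_2)),
$$
which is exactly the image under the prescription of the relation $W(\phi_1)W(\phi_2)=e^{\frac{i}{2}\si_M(\phi_1,\phi_2)}W(\phi_1+\phi_2)$. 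This establishes the existence of the unital $*$-homomorphism $\imath_S$.

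The step I expect to be the main obstacle is injectivity, because injectivity of the linear map $\Ga_S$ does not by itself transfer to the completed $C^*$-algebras. The clean route is to invoke the simplicity of the Weyl algebra over a non-degenerate symplectic space: since $\si_M$ is weakly non-degenerate, $\cW(M)$ is a simple $C^*$-algebra, so the kernel of $\imath_S$ -- a closed two-sided ideal -- is either $\{0\}$ or all of $\cW(M)$. As $\imath_S$ maps the identity $W(0)$ to $W(0)\neq 0$, it is nonzero, hence $\ker\imath_S=\{0\}$ and $\imath_S$ is injective (indeed isometric, since an injective $*$-homomorphism of $C^*$-algebras is isometric). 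One might alternatively avoid simplicity by constructing a left inverse out of a symplectic complement of $\Ga_S(\cS(M))$, but the simplicity argument is the most economical.

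For the second assertion I would simply set $\omega_M\doteq\omega\circ\imath_S$; this is unambiguous precisely because $\imath_S$ is a bona fide $*$-homomorphism. Linearity is clear, normalisation follows from $\omega_M(W(0))=\omega(\imath_S W(0))=\omega(W(0))=1$, and positivity from the homomorphism property, $\omega_M(A^*A)=\omega(\imath_S(A)^*\,\imath_S(A))\ge 0$, using positivity of $\omega$. Continuity of $\omega_M$ follows from continuity of $\omega$ together with the automatic norm-contractivity of the $*$-homomorphism $\imath_S$. Hence $\omega_M$ is a state on $\cW(M)$, completing the argument.
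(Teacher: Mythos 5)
Your argument is correct and is precisely the standard one that the paper implicitly invokes by citing theorem 4.2 of \cite{DMP2}: the preservation of the symplectic form (proposition \ref{conservscal}) guarantees that $\{W(\Ga_S\phi)\}_{\phi\in\cS(M)}$ realises the Weyl relations of $(\cS(M),\si_M)$ inside $\cW(\Im^-)$, the universal property of the Weyl algebra yields the $*$-homomorphism, and simplicity of $\cW(M)$ (valid since $\si_M$ is weakly non-degenerate) gives injectivity, with the state pull-back then being immediate. No gaps; this matches the intended proof.
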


\subsection{The Bulk-to-Boundary Correspondence for Dirac Fields}

In this subsection we will show how the bulk-to-boundary correspondence introduced for the scalar field can be adapted to Dirac fields. To this avail, the first tool we need is the counterpart of the Dirac bundle on the boundary. Although $\Im^-$ could be considered as a genuine manifold on its own, the question whether it admits a spin structure can not be answered automatically with the results we have invoked up to now, as $\Im^-$ endowed with the Bondi metric is not a globally hyperbolic spacetime. However, if we recall that $\Im^-$ is defined as a codimension $1$ embedded submanifold of the globally hyperbolic Einstein static universe $(M_E,g_E)$, we can exploit definition 5.3 in \cite{Husemoller} and introduce $D\Im^-\doteq\iota^*(DM_E)$, where $\iota:\Im^-\hookrightarrow M_E$ is the natural embedding map. Furthermore, on account of corollary 6.7 in \cite{Husemoller} and of the isomorphism\footnote{Note that $DM_E$ is trivial since $(M_E,g_E)$ is both globally hyperbolic and simply connected.} $DM_E\simeq M_E\times\mathbb{C}^4$, we can conclude that $D\Im^- \simeq\Im^- \times \bC^4$; consequently, the associated smooth sections can be equivalently interpreted as elements of $C^\infty(\Im^-,\bC^4)$.

As in the previous subsection, we have to specify a suitable boundary configuration space whose choice is justified only {\it a posteriori}. To wit, we call
\beq\label{Diracspace}
\gS(\Im^-) \doteq \{ \Psi \in C^\infty(\Im^-,\bC^4)  \cap L^2(D\Im^-;dv\,d\bS^2)\;, \;  \Xi(\vec{a})\Psi(v,\vec{n})=\Psi(v,\vec{n}) \},
\eeq
where $L^2(D\Im^-;dv\,d\bS^2)$ is isometric to $L^2(\Im^-;dv\,d\bS^2)\otimes\bC^4$. Furthermore $\vec{n}\in\bR^3$ has unit Euclidean-norm and thus corresponds to a point on the two-sphere, whereas $\Xi(\vec{n})$ is the following matrix-operator
\beq\label{Xi}
\Xi(\vec{n})\doteq
\begin{pmatrix}
 0 &  \vec{n}\cdot \vec{\si}\\
\vec{n}\cdot \vec{\si} &  \, 0
\end{pmatrix},
\eeq
where $\vec{n}\cdot\si$ is the Euclidean inner product of $\vec{n}$ and the Pauli vector. Notice that, in sharp contrast to the scalar case, a reminiscence of the bulk equation of motion prevails on the boundary via the constraint $\Xi(\vec{n})\Psi(v,\vec{n})=\Psi(v,\vec{n})$.  This identity implies that only two of the four components of an element of $\gS(\Im^-)$ are independent.

$\gS(\Im^-)$ can be naturally endowed with the Hermitian form
$$
\gs_{\Im^-}(\Psi_1,\Psi_2) \doteq    {i} \int\limits_{\bR\times \bS^2}   \Psi_1^\dagger\ga^0 \Psi_2\; dv\,d\bS^2,
$$
where, again, $\Psi^\dagger= -i\Psi^* \ga_0 $ while $*$ is the standard adjoint map on $\bC^4$.
This Hermitian form is non-degenerate on $\gS(\Im^-)$ because $(\vec{n}\cdot \vec{\si})^2 =\bI_2$ and, hence, 
$$
\gs_{\Im^-}(\Psi_1,\Psi_2)  =  2  \int\limits_{\bR\times \bS^2} \sum_{l\in\{1,2\}}   (\Psi_1^*)_l  (\Psi_2)_l\; dv\,d\bS^2,
$$
where only the two independent components of $\Psi_1^*$ and $\Psi_2$ have been summed. Furthermore, since $\Im^-$ is topologically equivalent to  $\bR\times\bS^2$, we are free to Fourier-transform each $\Psi\in\gS(\Im^-)$ as
$$
\widetilde{\Psi}_l(k,\theta,\varphi) \doteq \frac{1}{\sqrt{2\pi}} \int
\limits_{-\infty}^\infty dv\; e^{ikv}\, \Psi_l(v,\theta,\varphi)\;.
 \;\qquad l\in \{ 1,2,3,4\}
$$
Accordingly, the Hermitian product reads
$$
\gs_{\Im^-}(\Psi_1,\Psi_2) =
\int\limits_{\bR\times\bS^2} \sum_{l=1}^4
\overline{\widetilde\Psi_{l,1}(k,\theta,\varphi)}
\widetilde\Psi_{l,2}(k,\theta,\varphi)\, dk\,d\bS^2\,.
$$

The main result of this subsection is the following proposition which justifies the choice for $\gS(\Im^-)$. Regarding coordinates we employ the same notation as in the previous subsection.
\begin{proposition}\label{horizonprojection}
The map $\Ga_D:\gS(M) \to \gS(\Im^-)$
$$
\Ga_D(\psi)\doteq \lim_{u \to -\infty}    - u\; a^{3/2}\; \psi(\tau(u,v),\vec{x}(u,v,\theta,\varphi))
$$
possess the following properties:
\begin{itemize}
\item[a)] $\Ga_D$ is a well-defined, that is, $\Ga_D(\gS(M))\subseteq\gS(\Im^-)$,
\item[b)] If we set $\Psi\doteq\Ga_D{(\psi)}$  and denote the four-vector representing its Fourier transform by $\widetilde\Psi$, it holds
\beq\label{boundaryfield}
\widetilde\Psi(k,\vec{n})  = |k|\;
(\bI_4+\Xi(\vec{n}) ) \; \at
\Theta(k) \;
\begin{pmatrix}
\widetilde\psi_1 \at k,  -\vec{n}\ct \\
 \widetilde\psi_2 \at k,  -\vec{n}\ct  \\
0\\
0
\end{pmatrix}
+
\Theta (-k) \;
\begin{pmatrix}
0\\
0\\
 \widetilde\psi_3 \at -k,  \vec{n} \ct \\
 \widetilde\psi_4 \at -k,  \vec{n} \ct
 \end{pmatrix}\;
 \ct,
\eeq
where $\vec{n}$ is still a unit vector in $\bR^3$, $\Xi$ is the matrix defined in \eqref{Xi}, and
$\widetilde{\psi_l}(\vec{k})$ are the mode expansion coefficients of \eqref{diracexpansion},
\item[c)] $\Ga_D$ preserves the Hermitian forms, that is,
$$
\gs_M(\psi_1,\psi_2) = \gs_{\Im^-}(\Ga_D(\psi_1), \Ga_D(\psi_2))\,.
$$
\end{itemize}
\end{proposition}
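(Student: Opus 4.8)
The plan is to prove part (b) by a direct computation of the boundary limit — this is the substantive claim — and then to read off parts (a) and (c) from the resulting explicit formula \eqref{boundaryfield}. First I would insert the mode expansion \eqref{diracexpansion} of Proposition \ref{Diracmod} into the definition of $\Ga_D(\psi)$ and interchange the limit $u\to-\infty$ with the $\vec{k}$-integration; this step is legitimised by the rapid decay in $k$ of the coefficients $\widetilde\psi_l$ (Proposition \ref{diracprop}) together with the fall-off estimates for the modes proven in the appendix (Lemma \ref{modesregularity}). The task then reduces to computing $\lim_{u\to-\infty}(-u)\,a^{3/2}\,\psi_{\vec{k},l}$, i.e. the boundary limit of a single mode $\gD'\,u_{k,l}\,e^{i\vec{k}\cdot\vec{x}}/(2\pi a)^{3/2}$ weighted by $a^{3/2}$, where the conformal weight $3/2$ is chosen (as the spinorial analogue of the scalar weight $1$) precisely so that the net $a$-dependence cancels in the limit.

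For this I would use that, as $\tau\to-\infty$, the scale factor and the terms $a^2m^2$ and $a'm$ appearing in \eqref{ODE} vanish, so that the solution singled out by the stated initial conditions behaves as $\chi_k(\tau)\sim(\sqrt{2}\,k)^{-1}e^{-ik\tau}$, with $\overline{\chi_{k}}\sim(\sqrt{2}\,k)^{-1}e^{+ik\tau}$ governing the components built from $\overline{\chi_{k,1}}$. Applying $\gD'$ and keeping the leading term in $a$ collapses the matrix structure: the two blocks of $u_{k,l}$ carry, respectively, the phases $e^{-ik\tau}$ and $e^{+ik\tau}$, while the off-diagonal $\vec\nabla\cdot\vec\sigma$ of $\gD'$, after Fourier transformation and evaluation at the surviving direction of $\vec{k}$, assembles the matrix $\Xi(\vec{n})$ of \eqref{Xi}. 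The remaining angular integral $\int_{\bS^2}d\Omega_{\vec{m}}\,e^{ikr\,\vec{m}\cdot\vec{n}}f(\vec{m})\sim\frac{2\pi}{ikr}\big(e^{ikr}f(\vec{n})-e^{-ikr}f(-\vec{n})\big)$ over the directions of $\vec{k}$ is then evaluated by the standard large-$r$ asymptotics of a plane wave on the sphere. Since $r=(v-u)/2$ and $\tau=(u+v)/2$, one has $r\mp\tau=-u$ or $v$, so that only the combination producing the $u$-independent phase $e^{\mp ikv}$ survives the limit, the rapidly oscillating $e^{\mp iku}$ piece being killed, and the prefactor $-u$ exactly compensates the $1/r$ fall-off, leaving a finite limit. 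This is the very mechanism already used for the scalar field in \cite{DMP2}; here it additionally forces evaluation at the antipodal point $-\vec{n}$ for the positive-frequency part (and at $\vec{n}$ for its conjugate), and, through the matrix acting on $u_{k,l}$, produces the projector $(\bI_4+\Xi(\vec{n}))$ and the overall factor $|k|$ of \eqref{boundaryfield}.

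Granting \eqref{boundaryfield}, part (a) is almost immediate: smoothness and square-integrability of $\Psi$ on $\Im^-$ follow from the rapid decay of the $\widetilde\psi_l$, while the constraint $\Xi(\vec{n})\Psi=\Psi$ holds because $\Xi(\vec{n})^2=\bI_4$ (as $(\vec{n}\cdot\vec\sigma)^2=\bI_2$), so that $\Xi(\bI_4+\Xi)=\bI_4+\Xi$ and $\Xi$ leaves the image of $\bI_4+\Xi$ pointwise fixed. For part (c) I would compare the two Hermitian forms in Fourier space: the bulk side is supplied by Lemma \ref{usefullemma}, and into the boundary expression $\gs_{\Im^-}=\int_{\bR\times\bS^2}\sum_{l=1}^4\overline{\widetilde\Psi_{l,1}}\,\widetilde\Psi_{l,2}\,dk\,d\bS^2$ I substitute \eqref{boundaryfield}. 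Using that $\bI_4+\Xi$ is Hermitian with $(\bI_4+\Xi)^2=2(\bI_4+\Xi)$ and $\Xi(\bI_4+\Xi)=\bI_4+\Xi$, the matrix sandwich reduces to the plain $\bC^2$ product of the two surviving components; the prefactor $|k|^2$ combines with $dk\,d\bS^2$ to reproduce $k^2dk\,d\bS^2=d^3k$, the Heaviside factors $\Theta(\pm k)$ split the integral over $k\in\bR$ into the $l\in\{1,2\}$ and $l\in\{3,4\}$ contributions, and the antipodal change of variables $\vec{n}\mapsto-\vec{n}$ (which preserves $d\bS^2$) matches the arguments. The remaining constants are fixed by the normalisation \eqref{normalisation}, so that the bulk expression of Lemma \ref{usefullemma} is reproduced.

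I expect the genuine difficulty to lie entirely in part (b), and more precisely in justifying the interchange of the limit with the integrations and in controlling the subleading terms discarded in the stationary-phase step: one must ensure that the contributions of $a^2m^2$ and $a'm$, and the error terms in the asymptotics both of $\chi_k$ and of the spherical plane wave, vanish in the limit uniformly enough to be integrated against the $\widetilde\psi_l$. This is where the sharpened fall-off behaviour of the Dirac modes established in the appendix — and the extra regularity parameter $\epsilon$ in \eqref{scalef} for the borderline case $\de=0$ — becomes indispensable; by contrast, the algebraic identities underlying (a) and (c) are routine once \eqref{boundaryfield} is in hand.
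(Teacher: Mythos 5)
Your proposal is correct and follows essentially the same route as the paper: insert the mode expansion, use the asymptotics of $\chi_k$ from appendix \ref{modeanalysis} to reduce to free phases, evaluate the angular integral over the directions of $\vec{k}$ by its large-$r$ asymptotics (the paper implements your ``plane wave on the sphere'' step as an integration by parts in $c=\cos\hat\theta$ followed by Riemann--Lebesgue, which kills the $e^{-iku}$ boundary term and the bulk term and leaves the antipodally evaluated $e^{-ikv}$ piece, with $-u/(2r)\to1$ absorbing the prefactor), assemble $|k|(\bI_4+\Xi(\vec n))$ from the matrices $\widetilde D'_\pm$, and then obtain (a) and (c) from the explicit formula exactly as you describe, with (c) checked in Fourier space against Lemma \ref{usefullemma}. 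The only loose phrase is that the limit of a single mode $(-u)a^{3/2}\psi_{\vec k,l}$ does not exist pointwise because of the oscillatory factor --- it only exists after the angular integration against the smooth, rapidly decaying $\widetilde\psi_l$ --- but since you carry out precisely that integration, this is presentational rather than a gap.
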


\begin{proof}
To prove a), we follow once more the proof of theorem 4.4 in \cite{DMP2}. We start from the identity\begin{gather*}
\lim_{u\to-\infty}(-u) a^{3/2} \psi(u,v,\theta,\varphi)
=
\lim_{u\to-\infty}\frac{-u}{(2\pi)^{3/2}}\int\limits_{\bR^3}
\sum_{l=1}^4\;\hat\gD' u_{k,l}(\tau) e^{i\vec{k}\vec{x}}    \widetilde\psi_l(\vec{k})\;d^3k\;,
\end{gather*}
where $\hat\gD'=a^{\frac{3}{2}}  \gD' a^{-\frac{3}{2}}$. If we write $\vec{k}$ in spherical coordinates $k,\hat\theta,\hat\varphi$ where $\hat\theta$ measures the angle between $\vec{x}$ and $\vec{k}$ and if we restore the radial coordinate so that $v-u=2r$, we obtain
\begin{gather*}
\Psi(v,\theta,\varphi)=
\lim_{u\to-\infty}\frac{-u}{(2\pi)^{3/2}}\int\limits_{\bR^+\times\bS^2} \hat\gD' u_{k,l}(\tau) e^{ik {r} \cos (\hat\theta)}    \widetilde\psi_l(\vec{k})\; k^2\sin(\hat\theta) \; dk\, d\hat{\theta}\, d\hat{\varphi}\;.
\end{gather*}

On account of the properties and estimates of $\chi_k$ discussed in \ref{classicalpart} and appendix \ref{modeanalysis} and, setting  $c=\cos(\hat\theta)$, we obtain
\begin{gather*}
\Psi(v,\theta,\varphi)=
\lim_{u\to-\infty}\frac{-u}{(2\pi)^{3/2}}\int\limits_{0}^{2\pi} \int\limits_{-1}^1\int\limits_{0}^\infty
\sum_{s=\{+,-\}}
\at \widetilde{D}'_{s}(\vec{k})
e^{s\, (-ik\tau)} e^{ik {r} c}  \;
+O(|u|^{-\epsilon})
\ct
\widetilde\psi_s(\vec{k})
  \; \frac{k}{2} \; dk\, dc\, d\hat{\varphi},\;
\end{gather*}
where $\widetilde{\psi}_+\doteq(\widetilde{\psi}_1,\widetilde{\psi}_2,0,0)$,
$\widetilde{\psi}_-\doteq(0,0,\widetilde{\psi}_3,\widetilde{\psi}_4)$ and
$\widetilde{\psi} = \widetilde{\psi}_++\widetilde{\psi}_-$.
In the preceding expression $\widetilde{D}'_{\pm}(\vec{k})$ is the $4\times 4$ matrix
\beq\label{matricesboundary}
\widetilde{D}'_{\pm}(\vec{k})
\doteq  - i
\begin{pmatrix}
 \pm\, k\, \bI_2 & -\vec{k}\cdot \vec{\si}\\
-\vec{k}\cdot \vec{\si} &  \pm\,  k\, \bI_2
\end{pmatrix},
\eeq
while the $O(|u|^{-\epsilon})$ contribution descends from the estimates in appendix \ref{modeanalysis} and, in the asymptotic de Sitter case, from an expansion of $\chi^0_{k}(\tau)$ valid for large $\tau$
that can be derived from equation 8.421(9) in \cite{Grad}. In the asymptotic power-law case, the $O(|u|^{-\epsilon})$ contribution is identically vanishing.\\
\indent The multiplicative $u$ factor in front of the limits can be cancelled via an integration by parts in $c$. Afterwards, also the term proportional to $O(|u|^{-\epsilon})$ vanishes in the limit of divergent $u$ thanks to dominated convergence.
What is left after these operations consists of three terms, each to be evaluated as $u\to -\infty$:
\begin{gather*}
\frac{-u}{2r(u,v)} \frac{i}{(2\pi)^{3/2}}
\int\limits_{0}^{2\pi} \int\limits_{0}^\infty
\sum_{s\in\{+,-\}}
\widetilde{D}'_{s}(-\vec{k})
 e^{s(-ik\tau)}
 e^{-ik {r} }  \widetilde\psi_s(k,\pi,\hat\varphi)
   \; dk\, d\hat{\varphi}+\;
\\
-
\frac{-u}{2r(u,v)} \frac{i}{(2\pi)^{3/2}}
\int\limits_{0}^{2\pi} \int\limits_{0}^\infty
\sum_{s\in\{+,-\}}
\widetilde{D}'_{s}(\vec{k})
e^{s(-ik\tau)} e^{ik {r} }  \widetilde\psi_s(k,0,\hat\varphi)
   \; dk\, d\hat{\varphi}+\;
\\
+
\frac{-u}{2r(u,v)} \frac{i}{(2\pi)^{3/2}}
\int\limits_{0}^{2\pi} \int\limits_{-1}^1\int\limits_{0}^\infty
\sum_{s\in\{+,-\}}
\widetilde{D}'_{s}(\vec{k})
e^{s(-ik\tau)} e^{ik {r} c}
 \,\pa_c  \widetilde\psi_s(k,c,\hat\varphi) \; dk\, dc\, d\hat{\varphi}\;.
\end{gather*}
The second and the third term can be seen to vanish in the large $u$-limit by an application of the Riemann-Lebesgue theorem since the ratio $-u/(2r)$ tends to $1$  and $\pa_c \widetilde\psi_l$ is a regular integrable function. Notice that, in the last term, there is a potential obstruction in the application of the Riemann-Lebesgue theorem due to the behaviour of the integrand at $c=-1$. Yet, this problem can be overcome if one splits the domain of integration in $[-1,-1+\epsilon]$ and $[-1+\epsilon,1]$. The second contribution vanishes for large $-u$ and for all $\ep>0$, while the first one yields a finite result whose value is regulated by $\epsilon$ itself and, hence, vanishing as $\epsilon\to 0$.
We are left with the first of the three integrals and, since both $\widetilde\psi_s(k,0,\hat\varphi)$ and $\widetilde\psi_s(k,\pi,\hat\varphi)$ do not depend on $\hat\varphi$, this variable can be integrated out, yielding
\begin{gather*}
\Psi(v,\theta,\varphi)=
\frac{i}{\sqrt{2\pi}}
\int\limits_{0}^\infty
dk\;
\widetilde{D}'_{+}(-\vec{k})
 e^{-ikv}
 \widetilde\psi_+(-\vec{k})
 -
\widetilde{D}'_{-}(\vec{k})
 e^{ikv}
 \widetilde\psi_-(\vec{k}),
\end{gather*}
where $\vec{k}=(k,\theta,\varphi)$. Due to the form of $\widetilde{D}'_\pm$, the previous expression can be rewritten as
\begin{gather*}
\Psi(v,\vec{n})=
(\bI_4+\Xi(\vec{n}))\frac{1}{\sqrt{2\pi}}
\int\limits_{0}^\infty
dk\;k\;
\at
 e^{-ikv}\;
 \widetilde\psi_+ \at k,  -\vec{n}\ct
 +
 e^{ikv} \;
\widetilde\psi_-\at k,  \vec{n}\ct
\ct,
\end{gather*}
where $\Xi(\vec{n})$ is defined in \eqref{Xi}. Thanks to the regularity of $\widetilde{\psi}$ stated in proposition \ref{diracprop} and regardless of the multiplication by $k$, every component of $\Psi$ is a smooth square-integrable function.
This allows to conclude the proof of a) by noticing that the constraint in the definition of $\gS(\Im^-)$ is fulfilled by $\Psi$ since $\Xi(\vec{n})(\bI_4+\Xi(\vec{n}))=\bI_4+\Xi(\vec{n})$. Statement b) holds because the Fourier transform of the previous expression becomes equal to \eqref{boundaryfield} if we change the variable $k$ in $-k$ in the second contribution to the integrand and introduce the appropriate Heaviside step functions.\\
\indent In order to prove c), we insert the Fourier decomposition of $\Ga_D(\psi_1)$ and $\Ga_D(\psi_2)$ found in b) in the definition of $\gs_{\Im^-}$ to obtain
$$
\gs_{\Im^-}(\Ga_D{(\psi_1)},\Ga_D{(\psi_2)}) =   \int\limits_{\bR^+\times \bS^2} \sum_{l=1}^4 \overline{\widetilde\psi_{l,1}(k,\theta,\varphi) } \widetilde\psi_{l,2}(k,\theta,\varphi)    k^2 dk\,d\bS^2\,.
$$
The right hand side of the above equality coincides with $\gs_M(\psi_1,\psi_2)$ evaluated in Fourier space and in spherical coordinates.
\end{proof}

\remark{In this case, the {\it ansatz} for $\Ga_D$ is motivated as follows. As the conformal weight of a Dirac field is $\frac{3}{2}$, the factor $a^\frac32$ in $\Ga_D$ transforms the massive Dirac field in $(M,g_{FRW})$ into a Dirac field in Minkowski spacetime with mass $am$. The fall-off behaviour of corresponding solutions with compactly supported initial data is again independent of this time-dependent mass term. Since the massless Dirac field in Minkowski spacetime can be understood as a collection of four massless scalar fields, the factor $-u$ in $\Ga_D$ is, as already discussed in the context of $\Ga_S$, necessary and sufficient to obtain a well-defined projection of solutions to $\Im^-$.}

The last proposition is the analogue of proposition \ref{conservscal} for the Dirac fields, namely, the key building block to establish a relation between the bulk and the boundary algebra of quantum observables. Hence, let us start introducing these algebras following the discussion presented in \cite{Araki70} and further developed in \cite{Sanders, DHP, Sanders3, Thomas}. The essential idea is that, in order to describe the field algebra of Dirac fields $\cB(M)$, it is convenient to treat spinor and cospinor fields as a single and combined object. We refer to \cite{Sanders, DHP} for an in-depth analysis of the construction and analysis of this double-algebra and only summarise the key steps in the following. To wit, by $\cB(M)$ we indicate the field $*-$algebra generated by the identity and the linear functionals $B(f)$ with
$f=f_1\oplus f_2\in\cD(DM\oplus DM^*)$ subject to 
$$B(Df\oplus D^\prime h)=0,\qquad B^*(f)=B(\Gamma f)\,,$$
where $\Gamma f\doteq(f_2^\dagger,f_1^\dagger)$ with $\dagger$ denoting the Dirac conjugation.  Furthermore, all generators are required to satisfy the anticommutation relations
$$
\{ B(f),B(h)\}=iS^\oplus(f,h)
$$
where  $S^\oplus(f_1\oplus f_2,h_1\oplus h_2)=iS(h_2,f_1)+iS(f_2,h_1)\doteq \gs_M(\psi_{f_1},\psi_{h_2})+\gs_M(\psi_{f_2},\psi_{h_1})$
and where $S$ is the causal propagator of the Dirac equation. In this framework, we can recover spinors and cospinors as $\psi^\dagger(f_1)\doteq B(f)$ with $f=f_1\oplus 0$ and $\psi(f_2)=B(f)$ with $f=0\oplus f_2$.

The assignment of a field algebra to the theory on the horizon is slightly more involved. Let us start noticing that, if we define $\Psi^\dagger$ on $\Im^-$ as $-i \Psi^*\ga_0$ for all $\Psi\in\gS(\Im^-)$, we can mimic the construction of $\cB(M)$ to construct its boundary counterpart $\cB(\Im^-)$. To wit, following \cite{Thomas}, we define $\gS^\oplus(\Im^-)\doteq\gS(\Im^-)\oplus \gS(\Im^-)^\dagger$ where $\gS(\Im^-)^\dagger$ is the set formed by $\Psi^\dagger$ for every $\Psi$ in $\gS(\Im^-)$. $\gS^\oplus(\Im^-)$ can be equipped with the inner product
$$
\gs_{\Im^-}^\oplus\left((a_1,a_2^\dagger),(b_1,b_2^\dagger)\right) \doteq  \gs_{\Im^-}(b_2,a_1)+\gs_{\Im^-}(a_2,b_1)
$$
and the conjugation $j$
$$
j(a_1,a_2^\dagger)\doteq (a_2,a_1^\dagger)\;.
$$
We summarise the above discussion in the following definition.
\begin{definition}
The boundary algebra of Dirac fields $\cB(\Im^-)$ is the topological $*-$algebra generated by the linear functionals $f\mapsto B(f)$ with $f\in \gS^\oplus(\Im^-)$ together with the following conditions.
\begin{itemize}
\item[a)] The fields fulfil the canonical anticommutation relations (CAR)
$$
\{B(f),B(g)\} = \gs_{\Im^-}^\oplus(f,g)\quad\forall f,g\in\gS^\oplus(\Im^-)\,.
$$
\item[b)] The $*$-operation is specified by the antilinear involution $*:\cB(\Im^-)\to\cB(\Im^-)$ defined as 
$$
B(f)^*=B(j(f))\;\quad\forall f\in\gS^\oplus(\Im^-)\,.
$$
\item[c)] $\cB(\Im^-)$ is endowed with the quotient topology which descends from the local Fr\'echet topology of $\bigoplus_n (C^\infty(\Im^-,\bC^4))^{\otimes n}$.
\end{itemize}
\end{definition}

\noindent On the horizon, there is no equation of motion to be taken into account, but a reminiscence of the Dirac equation has been implemented in the construction of $\gS^\oplus(\Im^-)$.

We are now ready to formulate the main proposition of this subsection which establishes a relation between the boundary and the bulk algebra. The proof is a straightforward application of the previous definitions and of proposition \ref{horizonprojection}.

\begin{proposition}\label{Dirachom}
The map $\imath_D:\cB(M)\to \cB(\Im^-)$ unambiguously determined by its action on the fields $B(f)$, $f\in\cD(DM\oplus D^*M)$ as 
$$
\imath_D(B(f_1\oplus f_2^\dagger))\doteq B(\Ga_D(\psi_{f_1})\oplus\Ga_D(\psi_{f_2})^\dagger)\quad\forall f=f_1\oplus f_2\in\cD(DM\oplus D^*M)
$$
is an injective $*-$homomorphism. Furthermore, for every state, that is, for every continuous, positive, normalised, linear functional $\omega:\cB(\Im^-)\to\mathbb{C}$, there is a counterpart $\omega_M:\cB(M)\to\mathbb{C}$ unambiguously defined as
$$
\omega_M(B(f))\doteq\omega(\imath_D(B(f_1\oplus f_2))\quad\forall f=f_1\oplus f_2\in\cD(DM)\oplus\cD(D^*M)\,.
$$
\end{proposition}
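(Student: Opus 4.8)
The plan is to check, in turn, that $\imath_D$ respects the defining relations of $\cB(M)$, that it intertwines the anticommutation relations and the $*$-operations, that it is injective, and that the pullback of states then follows formally. First I would note that $\imath_D$ depends on a test function $f_1\oplus f_2$ only through the solutions $\psi_{f_1}=Sf_1$ and $\psi_{f_2}$, so it automatically factors through the equation-of-motion relation $B(Df_1\oplus D'f_2)=0$: since $SD=(S^--S^+)D=\mathrm{id}-\mathrm{id}=0$, a test function of the form $Df_1$ gives $\psi_{Df_1}=0$ and hence $\Ga_D(\psi_{Df_1})=0$, and likewise in the cospinor slot. Thus $\imath_D$ annihilates precisely the ideal of relations defining $\cB(M)$ and extends to a unital algebra homomorphism.

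Next I would verify the two structural properties. For the anticommutation relations, substituting the definition of $\imath_D$ into $\gs^\oplus_{\Im^-}$ and applying Proposition \ref{horizonprojection}(c), namely $\gs_{\Im^-}(\Ga_D\psi,\Ga_D\psi')=\gs_M(\psi,\psi')$, shows that $\gs^\oplus_{\Im^-}(\imath_D f,\imath_D h)$ reproduces the bulk form $S^\oplus(f,h)$ term by term, so the boundary CAR pulls back to the bulk CAR; here part of the task is simply to match the numerical conventions relating the two anticommutators. For the $*$-operation I would reduce the identity $\imath_D(B(f)^*)=\imath_D(B(f))^*$ to the claim that $\imath_D$ carries the bulk conjugation $\Gamma$ to the boundary involution $j$, which in turn amounts to the single fact that $\Ga_D$ commutes with Dirac conjugation. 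This holds because $\dagger$ acts fibrewise and commutes both with multiplication by the real factor $-u\,a^{3/2}$ and with the limit $u\to-\infty$ that defines $\Ga_D$.

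For injectivity, Proposition \ref{horizonprojection}(c) makes $\Ga_D$ an isometry for $\gs_M$, which is non-degenerate (indeed positive-definite); hence $\Ga_D(\psi)=0$ implies $\gs_M(\psi,\psi')=0$ for all $\psi'$ and therefore $\psi=0$, so $\Ga_D$ is injective on one-particle data. Injectivity of $\imath_D$ then follows from the simplicity of the CAR algebra built over a space carrying a non-degenerate Hermitian form (as in \cite{Araki70}, after passage to the $C^*$-closure of the self-dual field algebra), so that the unital $*$-homomorphism $\imath_D$ has trivial kernel. The pullback statement is then automatic: for a state $\omega$ on $\cB(\Im^-)$, the functional $\omega_M=\omega\circ\imath_D$ is linear and normalised, and positive because $\imath_D$ is a $*$-homomorphism, so $\omega_M(A^*A)=\omega(\imath_D(A)^*\imath_D(A))\geq 0$, with continuity inherited from that of $\omega$ and $\imath_D$. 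I expect the main obstacle to be not any single estimate but the bookkeeping of the doubled spinor-cospinor structure: one must confirm that the reality constraint encoded by $\Gamma$ maps exactly onto the boundary involution $j$ and that the factors of $i$ in $S^\oplus$ and $\gs^\oplus_{\Im^-}$ are consistent. Once the commutation of $\Ga_D$ with Dirac conjugation and Proposition \ref{horizonprojection}(c) are in hand, the remaining checks are routine.
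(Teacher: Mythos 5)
Your proposal is correct and follows exactly the route the paper intends: the paper's own proof is the one-line remark that the statement is ``a straightforward application of the previous definitions and of proposition \ref{horizonprojection}'', and your argument simply carries out that application in detail (compatibility with the relation $B(Df_1\oplus D'f_2)=0$ via $SD=0$, preservation of the CAR via Proposition \ref{horizonprojection}(c), intertwining of $\Gamma$ with $j$ via commutation of $\Ga_D$ with Dirac conjugation, injectivity from non-degeneracy of $\gs_M$ plus the standard CAR-algebra argument, and the formal pullback of states). The points you flag as requiring bookkeeping -- the factors of $i$ relating $S^\oplus$ to $\gs^\oplus_{\Im^-}$ and the matching of the two conjugations -- are indeed the only places where care is needed, and you resolve them correctly.
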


\section{Boundary States and their Hadamard Bulk Counterpart}

In the previous section, we have proven that it is possible to induce a state for a scalar and Dirac quantum field theory on the bulk spacetime $(M,g_{FRW})$ by a state defined on the relevant quantum theory living on past null infinity. However, this construction of a distinguished state would be moot if the result could not be shown to be physically meaningful. Particularly, it is mandatory to check if the constructed state, say $\omega_M$, fulfils the so-called Hadamard condition. From a physical point of view, this condition guarantees that the UV behaviour of $\omega_M$ mimics the one of the Poincar\'e-invariant Minkowski vacuum and, hence, that the quantum fluctuations of observables, such as the smeared components of the stress-energy tensor, are bounded. From a formal point of view, the requirement for $\omega_M$ to be Hadamard boils down to a constraint on the form of the wave front set of the integral kernel proper to the truncated two-point function associated of the state $\omega_M$ \cite{Radzikowski, Radzikowski2, Kratzert, Hollands, SahlmannVerch, Sanders, Sanders2}. Since a cohesive and fully comprehensible explanation of all the mathematical tools underlying these notions would require a lengthy appendix, we deem it more appropriate to point an interested reader to the just mentioned references.

Our goal is to follow the same procedure employed in \cite{DMP,DMP2,DMP3, Nicola}: we will show, that, mostly thanks to the huge symmetry displayed by past null infinity $\Im^-$, it is possible to assign a distinguished state $\omega_{\Im^-}$ to the field algebra on $\Im^-$ whose pull-back to the bulk algebra is of Hadamard form. Furthermore, we will show that, both in the scalar and in the Dirac case, $\omega_{\Im^-}$ can be 
chosen to satisfy an exact KMS condition on the horizon, a property which will turn out to be preserved in the bulk under suitable circumstances. All states we construct are automatically quasi-free, that is, their structure is completely determined by their two-point functions. However, as recently proven in \cite{Sanders2}, this requirement is irrelevant for the discussion of the Hadamard condition.

In the case of a scalar field, the assignment of a two-point function $\om_2$ is tantamount to providing a distribution on $\mD(M\times M)$. However, in discussing (charge-conjugation invariant) Hadamard states for Dirac fields, we are concerned with two non-vanishing distributions $\gw^+(f,h)\doteq\om(\psi(h)\psi^\dagger(f))$ and $\gw^-(f,h)\doteq \om(\psi^\dagger(f)\psi(h))$, where $f\in\mD(DM)$ and $h\in\mD(D^*M)$. In other words, both $\gw^+$ and $\gw^-$ lie in $\mD'(DM\boxtimes D^*M)$ and they are related by the anticommutation relations
$$
\gw^-(f,h)+\gw^+(f,h)= i S(h,f)
$$
A quasifree state on the boundary algebra $\cB(\Im^-)$ is analogously determined by the choice of two distributions which are related by the boundary CAR.

\subsection{The Scalar Field Case}

This case is certainly not a novel one as the quest to construct a bulk Hadamard state by means of a boundary one was already pursued in \cite{DMP2, DMP3, Nicola}. Yet, in these papers, the possibility of defining a state on $\Im^-$  which fulfils a KMS condition was neither mentioned, nor analysed. Hence, we shall first briefly recollect the construction in the case of vanishing temperature $T=\beta^{-1}=0$ case and afterwards extend it to $\beta<\infty$.

To wit, we adopt here the definition of a quasi-free state introduced in \cite{Kay} and consider $\omega:W(\Im^-)\to\mathbb{C}$ unambiguously determined by
\beq\label{groundstate}
\omega(W(\Phi))=e^{-\frac{\mu(\Phi,\Phi)}{2}},\quad\forall\Phi\in\mathcal{S}(\Im^-)
\eeq
where
$$\mu(\Phi,\Phi')\doteq\int\limits_{\bR\times\bS^2}2k\Theta(k)\overline{\widehat{\Phi}(k,\theta,\varphi)}\widehat{\Phi}'(k,\theta,\varphi)\;dk\,d\bS^2\,,$$
and $\Theta(k)$ denotes the Heaviside step function. Furthermore, $\widehat\Phi$ stands for the Fourier-Plancherel transform (see appendix C in \cite{Moretti06}) defined for all $\Phi\in\cS(\Im^-)$ as
$$
\widehat\Phi(k,\theta,\varphi) = \frac{1}{\sqrt{2\pi}} \int\limits_\bR e^{ik v}\; \Phi({v,\theta,\varphi})\; dv \,.
$$
Notice that $\mu$ satisfies the constraint that every quasifree state (see \cite{Kay}, \cite[app. A]{Moretti06}) has to fulfil, namely
\beq\label{const}
|\sigma_{\Im^-}(\Phi,\Phi')|\leq 4|\mu(\Phi,\Phi)|^{\frac{1}{2}}|\mu(\Phi',\Phi')|^{\frac{1}{2}}\,.
\eeq
One of the most striking properties of $\omega$ descend from the study of its interplay with the boundary group of isometries $G_{\Im^-}$, namely, the BMS group in the asymptotic power-law case, or the horizon symmetry group \eqref{group} in the asymptotic de Sitter case. In fact, it turns out that $\om$ is invariant under the $*$-automorphisms $\alpha_\gg:\mW(\Im^-)\to\mW(\Im^-)$ on the boundary Weyl algebra induced by every element $\gg\in G_{\Im^-}$ via $\alpha_\gg(W(\Phi))\doteq W(\Phi\circ \gg^{-1})$ for all $\Phi\in\mathcal{S}(\Im^-)$. Particularly, summarising the content of theorem 4.1 in \cite{DMP2}, it turns out that $\omega$ is the unique pure and quasifree state on $\mW(\Im^-)$ which is invariant under the automorphic action of the one-parameter subgroup of $G_{\Im^-}$ generated via the exponential map by $\partial_v$, the generator of the rigid translations on $\Im^-$ along the $\bR$-direction. Moreover, one finds that $\om$ is a ground state with respect to the $v$-translations.

However, as found in \cite{DMP4}, the simple construction of the boundary ground state on the level of Fourier modes allows to define a boundary KMS state with respect to the translations along $v$ by means of the inner product
\beq\label{KMSboundary}
\mu_\beta(\Phi_1,\Phi_2)= \int\limits_{-\infty}^\infty \frac{2k}{1-e^{-\beta k}}\;   \overline{{\widehat\Phi_1}(k,\theta,\varphi)} \widehat\Phi_2(k,\theta,\varphi)   \;  dk\,d\bS^2
\eeq
for any $\beta>0$. In more detail, the following statements can be proven.

\begin{proposition}
The two-point function $\mu_\beta$ defined in \eqref{KMSboundary} induces a quasifree state $\om_\beta:\cW(\Im^-)\to\bC$ via $\om_\beta(W(\Phi))\doteq e^{-\frac{\mu_\beta(\Phi,\Phi)}{2}}$ which enjoys the following properties:
\begin{itemize}
\item[a)] For every $\be>0$, $\om_\beta$ is the unique pure quasifree state which is a KMS state with respect to $v-$translations.
\item[b)] $\om_\beta$ is invariant under the automorphic action of the subgroup $G_{\Im^-}$  generated by rotations and $v-$translations  on $\cW(\Im^-)$.
\item[c)] In the limit $\beta\to\infty$, $\mu_\beta$ converges weakly to the two-point function of \eqref{groundstate}.
\end{itemize}
\end{proposition}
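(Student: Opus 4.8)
The plan is to dispatch the three items in order, after first checking that $\mu_\beta$ is an admissible two-point function. I would begin by writing the thermal factor as $\frac{2k}{1-e^{-\beta k}}=k+k\coth(\beta k/2)$. This makes the kernel manifestly strictly positive for every $k\in\bR$ (with finite value $2/\beta$ at $k=0$), so $\mu_\beta(\Phi,\Phi)\ge 0$; moreover the odd-in-$k$ part of the kernel is the $\beta$-independent function $k$, which reproduces $\si_{\Im^-}$ in Fourier--Plancherel variables and hence guarantees the CCR consistency relation $\mathrm{Im}\,\mu_\beta=\tfrac12\si_{\Im^-}$, while $k\coth(\beta k/2)$ is the symmetric part. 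From positivity together with this relation, the bound \eqref{const} follows by the usual Cauchy--Schwarz argument, so $\om_\beta(W(\Phi))=e^{-\mu_\beta(\Phi,\Phi)/2}$ indeed defines a quasifree state on $\cW(\Im^-)$.

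The core of a) is the KMS property. Since $v$-translations act by $\widehat{\alpha_t\Phi}(k)=e^{ikt}\widehat\Phi(k)$, I would examine the two-point quantity $G(t)\doteq\mu_\beta(\Phi_1,\alpha_t\Phi_2)=\int \frac{2k}{1-e^{-\beta k}}\,e^{ikt}\,\overline{\widehat\Phi_1}\,\widehat\Phi_2\,dk\,d\bS^2$. Putting $t=s+i\eta$ inserts a factor $e^{-k\eta}$, and from the asymptotics of the thermal kernel ($\sim 2k$ as $k\to+\infty$ and $\sim -2k\,e^{\beta k}\to 0$ as $k\to-\infty$) one checks that the integrand decays exponentially at both ends exactly when $0<\eta<\beta$; thus $G$ extends to a bounded analytic function on the open strip, continuous up to the boundary. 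Evaluating at $\eta=\beta$ produces the kernel $\frac{2k\,e^{-\beta k}}{1-e^{-\beta k}}=\frac{2k}{e^{\beta k}-1}$, and the substitution $k\mapsto -k$ combined with the reality relation $\widehat\Phi(-k)=\overline{\widehat\Phi(k)}$ for the real boundary fields yields the detailed-balance identity $G(s+i\beta)=\mu_\beta(\alpha_s\Phi_2,\Phi_1)$. Because $\om_\beta$ is quasifree, this one-particle relation lifts to the full algebra: writing $F(t)\doteq\om_\beta(\alpha_tW(\Phi_1)W(\Phi_2))$ through the Weyl relations expresses it as the exponential of a linear combination of $\si_{\Im^-}$-terms and $G$-type functions, each analytic in the strip with matching boundary values, whence $F(s+i\beta)=\om_\beta(W(\Phi_2)\,\alpha_sW(\Phi_1))$. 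I expect this passage from the one-particle level to the Weyl algebra, along with the careful control of the strip analyticity and of the exchange $\Phi_1\leftrightarrow\Phi_2$, to be the main technical obstacle.

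For the uniqueness claim I would invoke that every KMS state is automatically stationary, hence $v$-translation invariant, so any competing quasifree KMS state has a two-point function given by a Fourier multiplier $\rho(k)$; positivity together with the fixed commutator pins its odd part to $k$, while the detailed-balance relation derived above forces $\rho(k)\,e^{-\beta k}$ to equal the reversed kernel, and solving these two constraints recovers $\rho(k)=\frac{2k}{1-e^{-\beta k}}$, which proves uniqueness among quasifree KMS states (and a fortiori among any pure ones). Item b) is then immediate, since the measure $dk\,d\bS^2$ and the kernel depend on neither the translation parameter (the phases $e^{ikt}$ cancel in $\overline{\widehat\Phi_1}\widehat\Phi_2$) nor the rotation (which merely permutes the $\bS^2$-integration); hence $\mu_\beta$, and so $\om_\beta$, is invariant under the subgroup generated by rotations and $v$-translations.

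Finally, for c) the kernel converges pointwise, $\frac{2k}{1-e^{-\beta k}}\to 2k\,\Theta(k)$ as $\beta\to\infty$. For $\beta\ge\beta_0>0$ I would dominate it by a fixed kernel that is bounded near $k=0$ and behaves like $2|k|$ (respectively decays) for $k\to+\infty$ (respectively $k\to-\infty$), which is integrable against $|\widehat\Phi|^2$ because $\Phi\in\cS(\Im^-)$ gives $k\,\widehat\Phi\in L^2$. Dominated convergence then yields $\mu_\beta(\Phi,\Phi)\to\mu(\Phi,\Phi)$, and therefore the weak convergence of $\om_\beta$ to the ground state \eqref{groundstate}.
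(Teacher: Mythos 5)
Your proposal is correct, and for the central claim — the KMS property — it follows the same route as the paper: verify the condition at the level of the two-point function in Fourier space via the detailed-balance relation between the kernels $\frac{2k}{1-e^{-\beta k}}$ and $\frac{2k}{e^{\beta k}-1}$, then use quasifreeness to lift it to the algebra. Where you differ is in being more self-contained. For admissibility, the paper simply observes $\mu(\Phi,\Phi)\le\mu_\beta(\Phi,\Phi)$ and inherits \eqref{const} from the ground state, whereas your split $\frac{2k}{1-e^{-\beta k}}=k+k\coth(\beta k/2)$ verifies positivity and the commutator constraint directly — a cleaner argument that also makes transparent why the antisymmetric part is $\beta$-independent. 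For uniqueness and invariance, the paper delegates to theorem 4.1 and section 4.2 of \cite{DMP2}, while you sketch a direct multiplier argument (stationarity of KMS states, the commutator fixing the odd part, detailed balance fixing the rest); this is sound, though you should note that writing the two-point function as a scalar multiplier $\rho(k)$ presupposes diagonality over $\bS^2$, which is not given by translation invariance alone but does follow once the two constraints are imposed, since the commutator is sphere-diagonal. You also spell out the strip analyticity for $0<\eta<\beta$ and the passage through the Weyl relations, which the paper compresses into a citation of section 5.3 of \cite{Bratteli2}, and you actually prove item c) by dominated convergence, which the paper's proof omits entirely. Finally, be aware that the proposition's wording ``unique \emph{pure} quasifree KMS state'' sits uneasily with the paper's own proof, which calls $\om_\beta$ a mixed state; your formulation ``unique among quasifree KMS states'' is the defensible one.
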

\begin{proof}
As $|\mu(\Phi,\Phi)|\lvertneqq|\mu_\beta(\Phi,\Phi)|$ for all $\beta>0$ and for all $\Phi\in\mathcal{S}(M)$, $\om_\beta$ is a well-defined quasifree (mixed) state. Invariance under the translations generated by $\pa_v$ can be either directly inferred from the explicit form of the two-point function or deduced by repeating the proof of theorem 4.1 in \cite{DMP2}, which additionally entails the uniqueness property. Moreover, invariance under the automorphic action of the rotations and 
$v-$translations 
%full $G_{\Im^-}$ 
arises from the same reasoning used in section 4.2 in \cite{DMP2}.

Since the state is quasifree, it is sufficient to verify the KMS condition at the level of two-point functions. Let us consider $\Phi,\Phi'\in\cS(\Im^-)$, and let us introduce the two functions
$$
F(t)\doteq \mu_\beta(\Phi,\alpha_t(\Phi^\prime)) \,,\qquad G(t)\doteq\mu_\beta(\alpha_t(\Phi),\Phi^\prime)\,,
$$
where $\alpha_t(\Phi)(v,\theta,\varphi)=\Phi(v-t,\theta,\varphi)$.
A direct computation shows that both $F(t)$ and $G(t)$ can be seen as the Fourier-Plancherel transform of suitable square-integrable functions $\widehat{F}(E)$ and $\widehat{G}(E)$ which decay faster then any inverse power of $E$. Thus they are continuous functions and, on account of the explicit form of $\mu_B$ \eqref{KMSboundary}, they are related as 
$$
\widehat{F}(E)=e^{\beta E}\widehat{G}(E)\,.
$$
This in turn entails $F(t+i\beta)=G(t)$ and, hence, the validity of the KMS condition as shown in section 5.3 of \cite{Bratteli2}.
\end{proof}

\noindent As discussed in proposition \ref{scalarbtb}, each boundary state induces a bulk counterpart, particularly, we can define
$$
\om^M_\beta\doteq \om_\beta\circ \imath_S\;,
$$
where $\imath_S$ is the injective $*$-homomorphism introduced in proposition \ref{scalarbtb}. As already anticipated, we shall now show that $\om^M_\beta$ fulfils the Hadamard condition as formulated in definition 3.3 of \cite{DHP}, which is valid both for scalar \cite{Radzikowski} and for Dirac fields \cite{Hollands, Kratzert, SahlmannVerch}.

\begin{theorem}\label{Hadamard}
The state $\om^M_\beta:\cW(M)\to\bC$ is a Hadamard state, that is, the integral kernel of its two-point function $$\om_{\be,2}(f,g)\doteq \mu_\be(\Phi_f,\Phi_g)$$ is a distribution in $\mD'(M\times M)$ which enjoys the Hadamard property.
\end{theorem}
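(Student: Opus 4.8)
The plan is to reduce the claim to the already-established Hadamard property of the zero-temperature state, together with a smoothness statement for the purely thermal correction. Recall that in the limit $\be\to\infty$ the boundary two-point function $\mu_\be$ reduces to $\mu$ of \eqref{groundstate}, and that the corresponding bulk state $\om\circ\imath_S$ has been shown in \cite{DMP2,DMP3,Nicola} to be a bona fide Hadamard state; in particular its two-point function, which we denote by $\om_{\infty,2}(f,g)\doteq\mu(\Phi_f,\Phi_g)$, is a distribution in $\mD'(M\times M)$ whose wave front set is of the prescribed Hadamard form. Since the microlocal characterisation of the Hadamard condition only constrains the wave front set, and since the latter is stable under the addition of a smooth kernel, it suffices to prove that $\om_{\be,2}$ is a well-defined bidistribution and that the difference $\om_{\be,2}-\om_{\infty,2}$ is represented by a smooth function on $M\times M$.

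First I would make the difference explicit at the level of the boundary kernels. Both $\mu_\be$ and $\mu$ are diagonal in the Fourier-Plancherel variable $k$ conjugate to $v$, so that
$$
\om_{\be,2}(f,g)-\om_{\infty,2}(f,g)=\int\limits_{\bR\times\bS^2}\left(\frac{2k}{1-e^{-\be k}}-2k\,\Theta(k)\right)\overline{\widehat\Phi_f(k,\th,\ph)}\,\widehat\Phi_g(k,\th,\ph)\;dk\,d\bS^2\,.
$$
A direct manipulation shows that the weight in brackets equals $\frac{2|k|}{e^{\be|k|}-1}$ for every $k\in\bR$, a non-negative, bounded function which decays faster than any inverse power of $|k|$ as $|k|\to\infty$. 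Thus the thermal correction differs from the ground-state kernel only through an exponentially suppressed, soft reweighting of the boundary modes, while both two-point functions share \emph{exactly} the same high-frequency behaviour.

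The core of the argument, and the step I expect to be most delicate, is to translate this into the bulk and to deduce genuine smoothness of the correction on $M\times M$. Using the explicit form of $\Ga_S$ one expresses $\widehat\Phi_f(k,\th,\ph)$ in terms of the mode coefficients $\widetilde{\phi_f}(k,\cdot)$, which are in turn built from the bulk modes $T_k$ of Proposition \ref{KGmodes}. Because the coefficients $\widetilde{\phi_f}$ decay faster than any inverse power of $k$ (as established in \ref{classicalscalar}) and because $k\mapsto T_k(\tau)$ and $k\mapsto\pa_\tau T_k(\tau)$ are polynomially bounded in $k$ uniformly in $\tau$, the factor $(e^{\be|k|}-1)^{-1}$ dominates every polynomial produced by differentiating the mode functions with respect to the bulk coordinates. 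Hence one may differentiate the $k$-integral representing $\om_{\be,2}-\om_{\infty,2}$ arbitrarily often under the integral sign with absolutely and locally uniformly convergent result, which proves that this difference is a $C^\infty$ function of the bulk points; note that the mild non-smoothness of the weight $\frac{2|k|}{e^{\be|k|}-1}$ at $k=0$ is immaterial here, since position-space regularity is governed solely by the decay of the integrand for large $|k|$, not by its regularity in $k$.

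Finally I would assemble the pieces. Smoothness of $\om_{\be,2}-\om_{\infty,2}$ yields $\WF(\om_{\be,2})=\WF(\om_{\infty,2})$ by stability of the wave front set under the addition of a smooth term, and the right-hand side is of Hadamard form by the cited results. Together with the fact that $\om_{\be,2}$ is the sum of the distribution $\om_{\infty,2}$ and a smooth function, this shows that $\om_{\be,2}\in\mD'(M\times M)$ enjoys the Hadamard property, as claimed.
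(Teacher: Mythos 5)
Your argument is correct, and its overall architecture (reduce to the $\be=\infty$ ground state, whose Hadamard property is known from \cite{DMP2,DMP3,Nicola}, and show the thermal correction with weight $\frac{2|k|}{e^{\be|k|}-1}$ has a smooth kernel) coincides with the paper's. Where you genuinely diverge is in how smoothness of the correction is established. The paper proceeds microlocally: it first proves continuity of the difference as a bidistribution via estimates of the form $\| k^n \widehat\Phi_f \|_{L^2} \leq C(K,n) \sum_{j} \| D^j f \|_{L^\infty}$, and then invokes theorem 8.2.13 of \cite{Hormander} to compute the wave front set of the composition $\gD_\be\circ(\Ga_S(E)\otimes\Ga_S(E))$ -- the rapid decay of the weight forces the wave front set of $\gD_\be$ to have vanishing components along the $v$-direction, which is incompatible with the wave front set of the restriction of $\Ga_S(E)\otimes\Ga_S(E)$ to $\Im^-\times\Im^-$, so the composition has empty wave front set. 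You instead work with the explicit bulk mode representation and differentiate under the integral sign, letting the exponential decay of $(e^{\be|k|}-1)^{-1}$ beat the polynomial growth in $k$ of the differentiated integrand. Your route is more elementary and yields the smooth kernel explicitly (it is essentially the formula the paper writes down later in section 4.3), but it leans on one point you should make explicit: the hypotheses of Proposition \ref{KGmodes} only bound $T_k$ and $\pa_\tau T_k$ polynomially in $k$ uniformly in $\tau$, so control of \emph{arbitrary} $\tau$-derivatives must be obtained by iterating the mode equation $(\pa_\tau^2+k^2+a^2m^2)T_k=0$, which indeed produces only further polynomial factors in $k$ on compact $\tau$-sets. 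The paper's microlocal argument avoids tracking derivatives of the modes altogether and is therefore the more robust template (it is reused verbatim for the Dirac case), whereas yours buys concreteness and an explicit smooth correction term at the cost of this extra bookkeeping.
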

\begin{proof}

Per direct inspection, one can realise that the limit $\beta\to\infty$ of \eqref{KMSboundary} is the two-point function of the state \eqref{groundstate} which is of Hadamard form as already proven in \cite{DMP3, Nicola}\footnote{The arguments used in those papers are based on an early proof presented in \cite{Mo08}.}. Thus, in order to conclude the proof, it is sufficient to show that
$$
\De_{\be}(f,h)\doteq\om_{\be,2}(f,h)-\om_{\infty,2}(f,h),
$$
is an element of $\mD'(M\times M)$ with smooth integral kernel. Aiming to prove the continuity of $\De_\be$ first, we set $\Phi_f\doteq\Ga_S(E(f))$ and $\Phi_h\doteq\Ga_S(E(h))\in\cS(\Im^-)$, and directly compute
$$
\De_{\be}(f,h)=\int\limits_{-\infty}^\infty \frac{2 |k|}{e^{\beta |k|}-1}  \overline{\widehat\Phi_f(k,\theta,\varphi)}
\widehat\Phi_h(k,\theta,\varphi)
  \;dk\, d\bS^2\;\doteq \gD_\be(\Phi_f,\Phi_g)\,,
$$
where $\gD_\be=\mu_\be-\mu_\infty$ is the difference of the relevant two-point functions of the boundary states. Since $\frac{2 |k|}{e^{\beta |k|}-1}$ is bounded, $|\De_\beta(f,h)|$ is dominated by the the $L^2$-norms of $\Phi_f$ and $\Phi_h$. Using the continuity of the causal propagator, seen as a map from $\cD(M)$ to $\cE(M)$, the regularity property of the modes $T_k$, the continuity of the Fourier transform in $L^2$ and, finally, the definition of $\Ga_S$, one obtains that
$$
\| k^n \widehat\Phi_f \|_{L^2} \leq C(K,n) \sum_{j=0}^{q(n)} \| D^j f \|_{L^\infty}\,,
$$
where $\widehat{\cdot}$ is the Fourier-Plancherel transform on $\Im^-$, $C(K,n)$ is a constant which depends both on the compact set $K\subset M$ which contains the support of $f$ and on the order $n$, and $\| D^j f \|_{L^\infty}$ has to be understood as the ${L^\infty}$-norm of the $j$-th partial directional derivative evaluated on the $j$ directions which give the maximal result in an arbitrary but fixed local coordinate system. This observation together with the previous ones yields the sought continuity, {\it viz.}
$$
|\De_\beta(f,h)| \leq C_K \sum_{j=0}^{q} \| D^j f \|_\infty \sum_{i=0}^{q} \| D^i h \|_\infty\,,
$$
where, once more, the constant $C_K$ depends on the compact set $K$ which contains the support of both $f$ and $h$, while $q$ is a fixed constant.

Let us now tackle the problem of proving the smoothness of the integral kernel of $\om_{\beta,2}$ defined as the composition $\gD_\beta \circ (\Ga_S(E)\otimes \Ga_S(E))$. To this avail, we can make use of theorem 8.2.13 in \cite{Hormander} which allows us to control the wave front set of this composite linear functional.  To wit, let us recall that, in $\gD_\beta$, $\frac{2 |k|}{e^{\beta |k|}-1}$ decays faster then any inverse power of $|k|$. Hence, if $(x_1,x_2,k_1,k_2)$ is a point in the wavefront set of $\gD_\beta$, $k_1$ and $k_2$ must have vanishing components along causal directions, {\it i.e.} the $v$-direction, but the restriction of $\Ga_S(E)\otimes \Ga_S(E)$ to $\Im^-\times\Im^-$ has a wave front set which is non-vanishing {\it only} in causal directions. Hence, thanks to the above cited theorem, we can conclude that the wavefront set of $\De_\beta$ is empty or, equivalently, that it is smooth.
\end{proof}

\remark{A more detailed analysis of the scalar case and of the preceding theorem in particular can be found in theorem III.2.2.7 of \cite{Thomas}.}

\subsection{The Dirac Field Case}

The case of a Dirac field can be studied following the general ideas of the previous subsection, that is, we first construct a suitable state for the boundary theory and then prove that the natural bulk counterpart it induces is of Hadamard form.

\begin{proposition}
The single-spinor two-point functions defined as
$$
\gw^\pm_\beta(\Psi_1,\Psi_2^\dagger) \doteq \int\limits_{\bR\times\bS^2}  \frac{1}{1+e^{\mp\beta k}}\,       \widehat{\Psi_1}^*(k,\theta,\varphi) \widehat{\Psi_2}(k,\theta,\varphi)\,  dk\,d\bS^2
$$
are distributions on $\gS(\Im^-)\otimes \gS(\Im^-)$. Via the double-spinor two-point function
$$
\gw_\beta(\Psi_{1,1}\oplus \Psi_{1,2}^\dagger,\Psi_{2,1}\oplus \Psi_{2,2}^\dagger)
\doteq \gw^-_\beta(\Psi_{1,1},\Psi_{2,2}^\dagger) + \gw^+_\beta(\Psi_{2,1},\Psi_{1,2}^\dagger)
$$
they unambiguously determine a quasifree state $\Lambda_\beta:\cB(\Im^-)\to\bC$ which is invariant under the automorphic action of the full boundary symmetry group $G_{\Im^-}$. Furthermore, $\La_\beta$ fulfils the KMS condition at inverse temperature $\beta$ with respect to the $v-$translations on $\Im^-$.
\end{proposition}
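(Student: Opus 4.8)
The plan is to verify the four assertions in turn: that the $\gw^\pm_\be$ are well-defined distributions, that the double-spinor datum determines a unique quasifree state, that this state is invariant under all of $G_{\Im^-}$, and that it is KMS. For the first two, I would argue as follows. Every $\Psi\in\gS(\Im^-)$ lies in $L^2(\Im^-;dv\,d\bS^2)\otimes\bC^4$, so its fibrewise Fourier--Plancherel transform $\widehat\Psi(k,\theta,\varphi)$ is again square integrable, and the Fermi--Dirac weights $(1+e^{\mp\be k})^{-1}$ are bounded between $0$ and $1$. Hence each $\gw^\pm_\be$ is an absolutely convergent sesquilinear form dominated by the product of the $L^2$-norms, so it is continuous in the Fr\'echet topology on $\gS(\Im^-)$ and defines an element of the relevant distribution space. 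To see that $\gw_\be$ fixes a quasifree state on $\cB(\Im^-)$ I would check the two standard requirements. Positivity is immediate since both weights are non-negative. Compatibility with the CAR follows from the elementary identity
$$
\frac{1}{1+e^{-\be k}}+\frac{1}{1+e^{\be k}}=1 ,
$$
which reproduces $\gw^-_\be+\gw^+_\be=\gs_{\Im^-}^\oplus$ once the Plancherel formula is used to rewrite $\gs_{\Im^-}$ in Fourier space exactly as in Lemma \ref{usefullemma}. By the standard characterisation of quasifree states on a CAR algebra (see \cite{Araki70}) these two properties determine $\La_\be$ uniquely.

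For the invariance I would decompose $G_{\Im^-}$ into its generating transformations and test each on the Fourier representation of $\gw^\pm_\be$. The rotations $R\in SO(3)$ act on the base point together with the induced $\mathrm{Spin}(3)$ action on the $\bC^4$-fibre; since they fix the variable $k$ and the measure $dk\,d\bS^2$ and commute with the constraint $\Xi(\vec{n})\Psi=\Psi$, they leave both the weight and the pairing invariant. The supertranslations $v\mapsto v+\be(\theta,\varphi)$ act in Fourier space by the $\vec{n}$-dependent phases $e^{\pm ik\be(\theta,\varphi)}$, which cancel between $\overline{\widehat\Psi_1}$ and $\widehat\Psi_2$ in the sesquilinear integrand; hence $\gw^\pm_\be$ is unchanged. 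The remaining generators are the dilations $v\mapsto e^{\alp(\theta,\varphi)}v$: to preserve $\gs_{\Im^-}$, which carries no $\pa_v$, the correct lift to $\gS(\Im^-)$ must include the half-Jacobian factor $e^{-\alp/2}$, under which $\widehat\Psi(k)\mapsto e^{\alp/2}\widehat\Psi(e^{\alp}k)$. I would then perform the substitution $k\mapsto e^{-\alp}k$ and track how the $e^{\alp}$ produced by the two field factors, the $e^{-\alp}$ from the measure, and the rescaled Fermi argument combine, the goal being to recover the original weight.

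The KMS property I would establish exactly as in the scalar proposition. Fixing $\Psi_1,\Psi_2$ and writing $\alpha_t$ for the pull-back of the constant $v$-translation by $t$, I set $F(t)\doteq\gw_\be(\Psi_1,\alpha_t\Psi_2)$ and $G(t)\doteq\gw_\be(\alpha_t\Psi_1,\Psi_2)$. A direct computation shows that both are Fourier--Plancherel transforms of rapidly decreasing $L^2$-functions $\widehat F(E),\widehat G(E)$, so they extend to continuous functions, and the explicit Fermi weights relate their transforms by a factor $e^{\mp\be E}$ with the sign appropriate to the anticommutator split. This yields $F(t+i\be)=G(t)$ and hence the KMS condition at inverse temperature $\be$ with respect to the $v$-translations, invoking the standard equivalence recorded in section 5.3 of \cite{Bratteli2}.

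I expect the dilation sector of $G_{\Im^-}$ to be the main obstacle. Unlike rotations and supertranslations, a rescaling of $v$ acts non-trivially on the Fourier variable conjugate to $\pa_v$, and the Fermi--Dirac weight $(1+e^{\mp\be k})^{-1}$ is \emph{not} homogeneous in $k$; at face value the naive lift sends $\gw^\pm_\be$ to $\gw^\pm_{\be e^{-\alp}}$, i.e.\ it rescales the inverse temperature. The crux of the argument is therefore to pin down precisely the lift of the dilations to the Dirac bundle $D\Im^-$ that is compatible both with the constraint $\Xi(\vec{n})\Psi=\Psi$ and with the conformal weight $\tfrac32$ of the spinor, and to check that under this lift the scalings of the half-Jacobian factor and of the integration measure absorb the rescaling of the Fermi argument. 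This bookkeeping, rather than the distributional, positivity, or KMS steps, is where I anticipate the real work to lie.
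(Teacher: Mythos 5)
Your treatment of the first three assertions is correct and essentially coincides with the paper's own (much terser) proof: the paper invokes positivity and chapter 5 of \cite{Bratteli2} to get the unique quasifree state, and runs exactly your $F,G$ Fourier--Plancherel argument for the KMS property. Your explicit check of CAR compatibility via $\frac{1}{1+e^{-\be k}}+\frac{1}{1+e^{\be k}}=1$ is a worthwhile detail that the paper leaves implicit.

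The gap is in the dilation sector, and your instinct that it is the dangerous part is right --- but the bookkeeping you hope will save it cannot succeed, for a structural reason. Any lift of $v\mapsto e^{\alp}v$ to $\gS(\Im^-)$ that is compatible with the CAR must preserve $\gs_{\Im^-}$; since $\gs_{\Im^-}$ contains no $\pa_v$, this forces precisely the half-Jacobian factor $e^{-\alp/2}$ you identified, hence $\widehat\Psi(k)\mapsto e^{\alp/2}\,\widehat\Psi(e^{\alp}k)$. The field prefactors $e^{\alp}$ and the measure factor $e^{-\alp}$ (after substituting $k\mapsto e^{-\alp}k$) then cancel \emph{identically} --- they must, because the lift preserves the anticommutator --- and what survives is the rescaled argument inside the Fermi function:
$$
\gw^\pm_\be\bigl(\alpha_g\Psi_1,(\alpha_g\Psi_2)^\dagger\bigr)=\gw^\pm_{\be e^{-\alp}}\bigl(\Psi_1,\Psi_2^\dagger\bigr)\,.
$$
No choice of lift can repair this, because Jacobian factors multiply the integrand from outside while the temperature sits inside the exponential; more structurally, a dilation conjugates the $v$-translation flow into a rescaled flow, so it necessarily maps the $\be$-KMS state to the $e^{-\alp}\be$-KMS state, which for $0<\be<\infty$ is a genuinely different state. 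Consequently, invariance under the \emph{full} group $G_{\Im^-}$ is false at finite temperature; it holds only in the ground-state limit $\be\to\infty$, where the weight degenerates to a homogeneous step function. What is actually true --- and all that is supported by the paper's own one-line justification that ``$\gw^\pm_\be$ are defined via invariant integrals'' --- is invariance under the subgroup generated by rotations and (super)translations, where, as you note, the phases $e^{ik\be(\theta,\varphi)}$ cancel pointwise. This is exactly the statement the paper makes in its scalar analogue (item b) of the scalar proposition, which deliberately restricts to that subgroup); the ``full boundary symmetry group'' wording in the Dirac proposition is an overstatement that its proof does not, and cannot, establish. The correct completion of your argument is therefore not a cleverer lift but a weakening of the invariance claim to that subgroup.
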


\begin{proof}
The two-point functions $\gw^\pm_\beta$ are linear and positive functionals and, hence, they induce a unique quasifree algebraic state by well-known results (see chapter 5 in \cite{Bratteli2}). To verify the KMS condition and since $\gw^\pm$ are manifestly invariant under $v$-translations, we can basically repeat the proof for the scalar case: Let us set $\Psi_i\doteq \Psi_{i,1}\oplus \Psi_{i,2}^\dagger$, $i\in\{1,2\}$ and let us introduce
$$
\gF(t)\doteq \gw_\beta(\Psi_1,\alpha_t (\Psi_2)) \,,\qquad
\gG(t)\doteq \gw_\beta(\alpha_t(\Psi_1),\Psi_2)\,,
$$
where $\alpha_t(\Psi_i)(v,\theta,\varphi)\doteq \Psi_i(v-t,\theta,\varphi)$. Both $\gF(t)$ and $\gG(t)$ are bounded and continuous functions which can be seen as the Fourier-Plancherel transforms of two square-integrable functions $\widehat\gF(E)$ and $\widehat \gF(E)$ that satisfy the condition
$$
\widehat \gF(E) = e^{\beta E}\widehat \gG(E)\;,
$$
which is a restatement of the KMS condition. Finally, invariance under the full automorphic action of $G_{\Im^-} $ follows as in the scalar case from the results of \cite{DMP2}, since $\gw^\pm_\beta$ are defined via invariant integrals.
\end{proof}

\remark{Although we do not prove it explicitly, we presume that one can show the just found KMS states on $\cB(\Im^-)$ to be unique and weakly converging to the unique (and pure) ground state with respect to $v$-translations in the limit $\be\to\infty$.}

\noindent As in the scalar case, we can use proposition \ref{Dirachom} to define an algebraic state $\La_\beta:\cB(M)\to\bC$ induced by the boundary counterpart. Yet, before analysing the outcome, we have to introduce a further feature of a Dirac field theory, namely, the concept of charge conjugation: this operation is implemented by the $*$-automorphism $\alpha_c:\cB(M)\to\cB(M)$ defined as $\alpha_c(B(f))\doteq B(f^c)$ for all $f=f_1\oplus f_2$. Here, $f^c\doteq C^{-1}\overline{f}_1\oplus \overline{f}_2C$ where $C\in SL(4,\bC)$ is the matrix which fulfils $\overline{C}C=\bI_4$ and $\overline{\gamma}_a=C\gamma_a C^{-1}$ for all $a\in\{0,1,2,3\}$. Accordingly, it turns out that a quasifree state $\Lambda:\cB(M)\to\bC$ is charge conjugation-invariant, {\it i.e.} $\La\circ \alp_c=\La$, if the associated two-point functions $\gw^\pm$ fulfil  $\gw^\pm(g^{\dagger c},f^{\dagger c})=\gw^\mp(f,g)$ for all $f\in\mD(DM)$ and for all $g\in\mD(D^*M)$. With this in mind, we proceed to prove the main result of this section.

\begin{theorem}
The state $\La^M_\beta:\cB(M)\to\bC$ defined as
\beq\label{pullbackstateDirac}
\La^M_\beta \doteq  \La_\beta \circ \imath_D\;.
\eeq
possesses the following properties.
\begin{itemize}
\item[a)] The state $\La^M_\beta$ is quasifree and charge conjugation-invariant.
\item[b)] The two-point function $\gw_{M,\beta}$ of $\La^M_\beta$ is an element of $\mD'(DM\boxtimes D^*M)$ of Hadamard form.
\end{itemize}
\end{theorem}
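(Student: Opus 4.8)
The plan is to mirror the structure of the scalar Theorem~\ref{Hadamard}, proving the two claims separately and treating part b) as a smooth thermal perturbation of the zero-temperature case.

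For part a), quasifreeness of $\La^M_\beta$ is inherited directly from the boundary: $\La_\beta$ is quasifree on $\cB(\Im^-)$ by its very construction, and by Proposition~\ref{Dirachom} the homomorphism $\imath_D$ maps each generator $B(f)$ to the single generator $B(\Ga_D(\psi_{f_1})\oplus\Ga_D(\psi_{f_2})^\dagger)$, acting linearly on the generators; hence the higher correlation functions of $\La_\beta\circ\imath_D$ still factor according to the fermionic Wick rule, so that $\La^M_\beta$ is quasifree. To establish charge-conjugation invariance I would verify the criterion recalled before the theorem, namely $\gw^\pm_{M,\beta}(g^{\dagger c},f^{\dagger c})=\gw^\mp_{M,\beta}(f,g)$. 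First I would express the bulk two-point functions $\gw^\pm_{M,\beta}$ through $\Ga_D$ and the boundary kernels $\gw^\pm_\beta$, and then act with $f\mapsto f^c$ on the boundary data. The decisive point is that $\alpha_c$ interchanges the positive- and negative-frequency columns built from $\chi_{k,j}$ and $\overline{\chi_{k,j}}$ in \eqref{buildblock}, which under the projection \eqref{boundaryfield} amounts to the reflection $k\mapsto-k$; since this reflection exchanges the two Fermi factors $\tfrac{1}{1+e^{-\beta k}}$ and $\tfrac{1}{1+e^{+\beta k}}$, it interchanges $\gw^+_\beta$ and $\gw^-_\beta$, which is exactly the required symmetry.

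For part b) the strategy is to set
$$\gD_\beta\doteq\gw_{M,\beta}-\gw_{M,\infty},$$
where $\gw_{M,\infty}$ denotes the bulk two-point function obtained in the limit $\beta\to\infty$, and to show separately that $\gw_{M,\infty}$ is of Hadamard form and that $\gD_\beta$ has a smooth integral kernel. The first fact is the Dirac analogue of the scalar ground-state result used above and can be established by the same bulk-to-boundary microlocal argument, detailed in \cite{Thomas}. For the smoothness of $\gD_\beta$ I would proceed exactly as in the scalar proof: on the boundary the difference is governed by the kernels $\tfrac{1}{1+e^{\mp\beta k}}-\Theta(\pm k)$, both of which decay faster than any inverse power of $|k|$ because of the Fermi--Dirac tails. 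Writing $\gD_\beta$ as the composition of this rapidly decaying boundary kernel with $\Ga_D(S)\otimes\Ga_D(S)$, one first obtains continuity from the estimates on the modes $\chi_k$ and on $\Ga_D$, in complete parallel to the bound on $\|k^n\widehat\Phi_f\|_{L^2}$ in Theorem~\ref{Hadamard}. The wave front set is then controlled through Theorem~8.2.13 of \cite{Hormander}: the rapid decay in $k$ forces every covector in $\WF(\gD_\beta)$ to be non-causal, while the restriction of $\Ga_D(S)\otimes\Ga_D(S)$ to $\Im^-\times\Im^-$ carries a wave front set concentrated only in causal directions, so the composite wave front set is empty and $\gD_\beta$ is smooth.

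The step I expect to be the main obstacle is the charge-conjugation bookkeeping in part a): one must track how the matrix $C$ satisfying $\overline{C}C=\bI_4$ and $\overline{\gamma}_a=C\gamma_aC^{-1}$ interacts with the four-component structure of the modes $u_{k,l}$ and with the projector $\bI_4+\Xi(\vec n)$ in \eqref{boundaryfield}, in order to confirm that $\alpha_c$ really acts on the boundary as the reflection $k\mapsto-k$ together with complex conjugation. Once this identification is in place the exchange of the two Fermi factors is immediate, and the rest of the argument is a routine adaptation of the scalar case, the only genuinely new analytic input being the Dirac ground-state Hadamard property imported from \cite{Thomas}.
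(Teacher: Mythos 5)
Your proposal is correct in substance, and part a) matches the paper's treatment: quasifreeness is inherited trivially through $\imath_D$, and charge-conjugation invariance is reduced to an explicit identity for the single-spinor two-point functions, which the paper simply delegates to section III.4 of \cite{Thomas} rather than spelling out the $k\mapsto-k$ bookkeeping you sketch. For part b), however, you take a genuinely different route. You decompose $\gw_{M,\beta}=\gw_{M,\infty}+\gD_\beta$, import the Hadamard property of the Dirac ground state $\gw_{M,\infty}$ from \cite{Thomas}, and then show that the thermal correction $\gD_\beta$ is smooth because the boundary kernels $\tfrac{1}{1+e^{\mp\beta k}}-\Theta(\pm k)$ decay rapidly in $k$ -- i.e.\ you transplant the strategy of the scalar Theorem~\ref{Hadamard} to the Dirac setting. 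The paper instead works directly with the full thermal two-point function: it writes $\gw^-_{M,\beta}=\gw^-_{\beta}\circ(\Ga_D S\otimes\Ga_D S)$ and estimates the wave front set of this composition via theorem 8.2.13 of \cite{Hormander} (following the scheme of theorem 4.2 in \cite{Moretti06} and the scalar ground-state arguments of \cite{DMP3, Nicola}), showing that it equals the Hadamard set $\{(x,y,k_x,k_y)\,|\,(x,k_x)\sim(y,-k_y),\;k_x\triangleright 0\}$ outright; no splitting into ground state plus remainder occurs. Your approach buys a simpler microlocal step (proving a wave front set is empty is easier than computing it exactly) at the price of needing the $\beta=\infty$ Dirac Hadamard result as a separate input -- note that the paper itself only \emph{presumes}, without proof, that the boundary KMS states converge to a ground state as $\beta\to\infty$, so that input really must be taken from \cite{Thomas} and is not available within the paper. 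The paper's approach is more uniform in $\beta$ but leans on the heavier composition estimate for the thermal kernel, whose positive-frequency part $\tfrac{1}{1+e^{-\beta k}}$ does not decay as $k\to+\infty$ and therefore genuinely contributes to the singular support. Both arguments are sound; only your notational reuse of $\gD_\beta$ (already employed for the scalar boundary difference) and the deferred charge-conjugation computation would need tidying.
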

\begin{proof}
Let us focus on a). As $\La_\beta$ is already quasifree, $\La^M_\beta$ is manifestly quasifree as well. The charge conjugation-invariance can be checked by explicitly computing that the single-spinor two-point function satisfies the necessary identities, see section III.4 in \cite{Thomas} for details.

Let us now focus on b) and show that $\gw^\pm_{M,\beta}\in\mD'(DM\boxtimes D^*M)$ to start with. This result descends from the continuity of the causal propagator $S$ with respect to the appropriate topologies and from the continuity of the Fourier transform with respect to the relevant $L^2$-norms. Since the modes associated to a solution of the Dirac equation display the same regularity as the modes associated to the Klein-Gordon equation conformally coupled to scalar curvature, we can proceed as in the proof of theorem \ref{Hadamard} to obtain
$$
\| k^n \widehat\Psi_f\|_{L^2} \leq \sum_{j=1}^{q(n)} \|D^j f \|_{L^\infty},
$$
where $\Psi_f \doteq  \Ga_D(Sf)$ and the notation $\|D^j f \|_{L^\infty}$ is the same as the one in theorem \ref{Hadamard}. This part of the proof can then be finalised noticing that $\gw^\pm_{\beta}(\Psi_{f_1},\Psi_{f_2})$ are controlled by the $L^2$ norms $$\|\Psi_{f_i}\|_{L^2}=\int\limits_{\bR\times \bS^2}\widehat \Psi_{f_i}^*(h,\th,\phi)\widehat \Psi_{f_i}(h,\th,\phi)\;dv\,d\bS^2\,,\quad i\in\{1,2\}\,.$$

We are left with the verification of the Hadamard property.  To this end, thanks to the results in \cite{Hollands, Kratzert}, it is sufficient to ensure that the wave front sets of the distributions $\gw_{M,\be}^\pm$ fulfil the requirements of definition IV.1 in \cite{Hollands}. Furthermore, the charge conjugation-invariance of $\La^M_\beta$ implies that $\gw_{M,\beta}^+(f,h)^c = \gw_{M,\beta}^-(h^{\dagger c},f^{\dagger c})$. Hence, it suffices to analyse the wavefront set of $\gw^-$ and
we have to prove that
$$
WF(\gw^-_{M,\beta}) = \left\{(x,y,k_x,k_y)\in  T^*M^2\setminus\{0\}\, | \;(x,k_x)\sim (y,-k_y)  , \, k_x  \triangleright 0  \right\}\,.
$$
As $\gw^-_{M,\beta}=\gw^-_{\beta}\circ(\Ga_DS \otimes \Ga_DS)$, where $S$ is the causal propagator and $\Ga_D$ the projection of solutions to the horizon, this equality can be proven in the same way as the counterpart for the scalar field presented in \cite{DMP3, Nicola}. The wave front set of this composition can be estimated along the lines given in the proof of theorem 4.2 in \cite{Moretti06} where theorem 8.2.13 in \cite{Hormander} is shown to be applicable.
\end{proof}

\subsection{The Bulk Remnant of the Boundary KMS Condition}

Our next goal is to discuss further physical properties of the states $\om^M_\beta$ and $\La^M_\beta$, particularly, we would like to analyse the bulk remnant of the exact KMS condition on the boundary. As a starting point, we write the two-point functions of the states introduced in the preceding subsections in a more manageable form using the bulk mode-decompositions of the scalar and the Dirac field quantities. In the former case, the specific form of the causal propagators written at the end of section \ref{classicalscalar} entails that the two-point function of $\om^M_\beta$ can be written as
$$
\om_{2,\beta}(\tau_x,\vec{x},\tau_y,\vec{y})
=
\frac{1}{(2\pi)^3 a(\tau_x)a(\tau_y)}
\int\limits_{\bR^3}
\at  \frac{\overline{T_k}(\tau_x) T_k(\tau_y)}{1-e^{-\beta k}}   +
\frac{{T_k}(\tau_x) \overline{T_k}(\tau_y)}{e^{\beta k}-1}              \ct  e^{-i\vec{k}{(\vec{x}-\vec{y})}}   \;d^3k\;,
$$
where $T_k(\tau)$ are the modes discussed in and after proposition \ref{KGmodes}.
In the Dirac case, we can make use of the analysis in section \ref{classicalpart} to write the single-spinor two-point function defining  $\La^M_\beta$ as
$$
{\gw^+}_{M,\beta}(\tau_x,\vec{x},\tau_y,\vec{y})=
\int\limits_{\bR^3} \sum_{l\in\{3,4\}}
 \frac{\psi_{\vec{k},l}(\tau_x,\vec{x}) \psi^\dagger_{\vec{k},l}(\tau_y,\vec{y})}{e^{-\beta k}+1}
+
\sum_{l\in\{1,2\}}
\frac{\psi_{\vec{k},l}(\tau_x,\vec{x}) \psi^\dagger_{\vec{k},l}(\tau_y,\vec{y})}{e^{\beta k}+1}\;
d^3k,
$$
where $\psi_{\vec{k},l}$ are the modes introduced in proposition \ref{Diracmod} and appendix \ref{modeanalysis}.

Notice that, although in the massless case a Bose-Einstein or a Fermi-Dirac factor at inverse temperature $\beta$ are present
in the integral kernel of the two-point functions, the bulk spacetimes are not invariant under time translations. Hence, the induced bulk states of the massless theories do not satisfy any KMS condition. However, a KMS-like condition can be found exploiting \eqref{metric2} and the conformal invariance of the considered scalar and Dirac theories. Particularly, we can relate each solution of the relevant equations of motion in a FRW spacetime with flat spatial sections to a counterpart in Minkowski spacetime by a rescaling with a suitable power of the scale factor $a(\tau)$. Consequently, the induced states in the FRW spacetime can now be read as conformally rescaled genuine KMS states at inverse temperature $\beta$ with respect to the Minkowskian time translations. This allows to recover a natural thermal interpretation.

In the case of massive fields, the situation is more involved and a nice and simple analysis as the one above is not possible. Yet, it would be nice if the induced bulk states still enjoyed some properties which could be seen as the remnant of the exact KMS on the boundary. It seems natural to approach the issue by estimating how much the regularised two-point functions deviate from those of the conformally invariant case. We shall analyse only the case of a scalar field, and will briefly comment on the Dirac case afterwards. Hence, we are concerned with the difference
$$
(\om_{\beta,2} - \om_{\infty,2})(\tau_x,\vec{x},\tau_y,\vec{y}) = \frac{1}{(2\pi)^3 a(\tau_x)a(\tau_y)}
\int\limits_{\bR^3}
\at
\frac{\overline{T_k}(\tau_x) T_k(\tau_y)+{T_k}(\tau_x) \overline{T_k}(\tau_y)}{e^{\beta k}-1}              \ct  e^{-i\vec{k}{(\vec{x}-\vec{y})}}   \;d^3k\,,
$$
where the dependence on the mass affects only the form of the modes $T_k$. The fact that these are constructed as a convergent perturbation series which fulfils the initial condition of being asymptotically of the form $\sqrt{2 k}^{\,-1}e^{ik\tau}$, {\it i.e.}, resembling massless positive frequency modes, shall allow us to compare massive and massless theories. For the sake of simplicity, we discuss such a relation only for a specific observable, the Wick-regularised squared scalar field $\wick{\phi^2}$. In the massless and, hence, conformally invariant case we obtain
$$
\at \om^M_\beta - \om^M_\infty\ct (\phi^2(\tau_x,\vec{x})) = \frac{1}{12a^2(\tau_x) \beta^2}\;.
$$
In the massive case, in order to compute the expectation value of the same observable, it suffices to notice that the perturbative construction and initial conditions on $T_k$ imply
$$
\left|\at |T_k(\tau)|^2 -\frac{1}{2k} \ct \right|  \leq \frac{1}{2k} \at  e^{m^2 a(\tau)^2}  -1 \ct\,.
$$
This inequality can be used to control the  deviation from thermal equilibrium, as it can be directly seen from the analysis of the expectation value of $\wick{\phi^2}$, {\it viz.}
$$
\left| a^2\at \om^M_\beta - \om^M_\infty\ct (\phi^2)
\right|
\leq
\frac{1}{12\beta^2}    + \frac{1}{12\beta^2} \at e^{m^2a(\tau)^2}-1\ct .
$$
Although we have only discussed the influence of the mass on a rather special observable, the outcome of the same procedure in other cases such as, for example, the balanced derivatives introduced in \cite{BOR} for the flat case and generalised to curved spacetimes in \cite{Schlemmer}:
$$
\Theta({\bf n},\beta,x)\doteq
\lim_{\xi\to0}(\nabla_{n_1}\dots \nabla_{n_{|{\bf n}|}} \phi   )
(\omega_\beta -\omega_\infty)(\tau+\xi,\vec{x},\tau-\xi,\vec{x}),
$$
can be computed in a similar way. In the massless case, the expectation values of these observables turn out to be fixed and simple functions of the temperature. However, in the case of massive fields, the fact that the thermal nature of these states holds only in an approximated way  manifests itself as a modification of the relations existing between different $\Theta$ -- so-called transport equations -- due to the appearance of central terms, {\it i.e.} sources for the transport equations, depending on both the mass, $a(\tau)$, and its derivatives.

A similar analysis as the one above is certainly possible in the case of Dirac fields. However, the observable which seems to be the natural one in this case, the Wick-square $\wick{\psi^\dagger\psi}$ regularised with the suitable single-spinor two-point function of $\La^M_\infty$, is, in contrast to its scalar counterpart, not a good thermometer since its expectation value in the massless case vanishes.

\subsection{Relaxing the KMS Condition on the Boundary}

In the previous subsection we have found that, in the massive case, the induced states, which fulfil an exact KMS condition on $\Im^-$, have an approximate thermal interpretation in the bulk which is quantitatively controlled by $am$. Hence, one can interpret these states as states which are in perfect thermal equilibrium at $a=0$. From this point of view, a natural generalisation of the states constructed above are states which are in thermal equilibrium ``at an instant of time'' where $a=a_0>0$. To wit, we can define a new set of states $\om^M_{\be,a_0}$ and $\La^M_{\be,a_0}$ as the (charge conjugation-invariant) quasifree states induced by the two-point functions

$$
\om_{2,\beta,a_0}(\tau_x,\vec{x},\tau_y,\vec{y})
=
\frac{1}{(2\pi)^3 a(\tau_x)a(\tau_y)}
\int\limits_{\bR^3}
\at  \frac{\overline{T_k}(\tau_x) T_k(\tau_y)}{1-e^{-\beta \sqrt{k^2+a_0^2m^2}}}   +
\frac{{T_k}(\tau_x) \overline{T_k}(\tau_y)}{e^{\beta \sqrt{k^2+a_0^2m^2}}-1}              \ct  e^{-i\vec{k}{(\vec{x}-\vec{y})}}   \;d^3k\;,
$$and
$$
{\gw^+}_{M,\beta,a_0}(\tau_x,\vec{x},\tau_y,\vec{y})=
\int\limits_{\bR^3} \sum_{l\in\{3,4\}}
 \frac{\psi_{\vec{k},l}(\tau_x,\vec{x}) \psi^\dagger_{\vec{k},l}(\tau_y,\vec{y})}{e^{-\beta \sqrt{k^2+a_0^2m^2}}+1}
+
\sum_{l\in\{1,2\}}
\frac{\psi_{\vec{k},l}(\tau_x,\vec{x}) \psi^\dagger_{\vec{k},l}(\tau_y,\vec{y})}{e^{\beta \sqrt{k^2+a_0^2m^2}}+1}\;
d^3k\,.
$$
These states are still Hadamard as one can prove by trivial modifications of the relevant proofs for the case $a_0=0$. However, the corresponding states on the boundary algebra certainly fail to fulfil an exact KMS condition. Nevertheless, from a physical point of view, one can expect that the states with $a_0>0$ are better suited to describe realistic thermal situations in the early universe than the states with $a_0=0$.

\subsection{Approximated Thermodynamics}
As last point, we briefly sketch how approximated thermodynamic relations hold for the above introduced states, both for $a_0=0$ and for $a_0>0$. In order to support the interpretation of these states as approximated equilibrium ones, we analyse whether and to which extent the usual laws enjoyed by any theory with a sensible notion of thermodynamics hold for certain expectation values. For the sake of simplicity, we consider the scalar field states $\om^M_{\be,a_0}$.
Furthermore, in order to avoid potential complications due to well-known anomalies as the trace anomaly, we shall consider the difference $\om^M_{\beta,a_0}-\om^M_\infty$ and, hence, define
$$\langle A \rangle_{\beta,a_0}=\om^M_{\beta,a_0}(A)-\om^M_\infty(A)\,.$$

As we wish to formulate thermodynamic laws with respect to the evolution generated by $e\doteq\pa_t$ where $t$ is the cosmological time, we introduce, following \cite{Dixon78, BOR}, the vector representing the effective inverse temperature as
$$
{\beta_c}^\mu(x) \doteq \beta_c(x) e^\mu ,\qquad  \beta_c(x)\doteq\frac{12}{\langle\phi^2(x)\rangle_{\beta,a_0}}\,.
$$
Since both $\om^M_{\beta,a_0}$ and  $\om^M_\infty$ are of Hadamard form, $\langle\phi^2(x)\rangle_{\beta,a_0}$ is a well-defined smooth function on $M$. Moreover, this definition allows us to decompose the expectation values of the stress-energy tensor $T_{\mu\nu}$ evaluated once more with respect to $\om^M_{\beta,a_0}-\om^M_\infty$ as
$$
\langle{T_{\mu\nu}}\rangle_{\beta,a_0}\doteq Q e_\mu e_\nu + P g_{\mu\nu}\;,
$$
where $Q$ represents the heat and $P$ the pressure. According to \cite{Dixon78}, we can additionally introduce the vector $S^\mu$ representing the thermodynamic entropy as
$$
S(\beta_c)^\mu \doteq  Q(\beta_c) \beta_c  e^\mu\,.
$$
With these definitions for $\beta_c$, $Q$, $P$ and $S$ at hand, we can discuss potential thermodynamic relations among them. In fact, one can show that the following laws hold near $a=0$ and/or for small $m$. 

\vskip .4cm

\noindent {\em 0th law:} $\beta_c$ is constant on the Cauchy surfaces at fixed cosmological time.

\vskip .4cm

\noindent {\em 1st law:} $\langle T_{\mu\nu}\rangle_{\be,a_0}$ is covariantly conserved.

\vskip .4cm

\noindent {\em 2nd law:} The entropy production is positive, that is, 
    $$ \nabla_\mu S^\mu  = -\frac{4}{3} \langle T^{\mu\nu}-\frac{1}{4}T g^{\mu\nu}      \rangle_{\beta,a_0}  \nabla_{(\mu}{\beta_c}_{\nu)}
    +\frac{1}{3} {\beta_c}^\mu\pa_\mu \langle T \rangle_{\beta,a_0}\ge0$$

\vskip .4cm

\noindent {\em 3rd law:} The minimum of the entropy function is reached in the limit $\beta_c\to \infty$.

\vskip .4cm

\noindent
The most difficult relation to prove is the second one. In the massless case, the entropy production is vanishing because $\beta_c$ turns out to be exactly a conformal Killing vector and $\langle T\rangle_{\beta,a_0}$ vanishes. In order to address the massive case with $a_0=0$, we can make an expansion of the relevant quantities in powers of $m$ and finally restrict our attention to the region near  $\Im^-$. Performing such an approximation, we observe that $\beta_c^m$ computed for a mass $m$, differs from the massless one $\beta_c^0$  via $$\beta_c^m-\beta_c^0\doteq(- c_1  m^2 a^3 )+ O(m^3)\,,$$ where $c_1$ is a positive constant.
Furthermore, the first order in $m^2$ of $\langle T \rangle_{\beta,a_0}$ is equal to $ -c_2 \frac{m^2}{a^2}$ where $c_2$ is another positive  constant. Finally, we notice that $|\langle T^{\mu\nu}-\frac{1}{4}T g^{\mu\nu}      \rangle_{\beta,a_0}| \leq C a^{-4}$, and that $\langle T^{\mu\nu}-\frac{1}{4}T g^{\mu\nu}      \rangle_{\beta,a_0}  \nabla_{(\mu}{\beta^0_c}_{\nu)}$  vanishes because $\beta^0_c$ is a conformal Killing vector. With these observations in mind we can compute $a^2 \nabla S$ up to $O(m^3)$ terms and notice that it tends to a positive constant towards $\Im^-$. Hence since $a^2$ is positive and since  $a^2 \nabla S$ is smooth in $M$, $\nabla S$ must be positive near $\Im^+$. In the case $a_0>0$, we have only been able to verify the positivity of the entropy production by numerical methods, as the occurring integrals can not be evaluated analytically any more.

\section{Conclusions}
We have studied the classical and the quantum behaviour both of a real, massive, conformally coupled scalar field
and of a massive free Dirac field in a large class of Friedmann-Robertson-Walker spacetimes $(M,g_{FRW})$ with flat spatial sections. These spacetimes are singled out by criteria which have both a physical and a mathematical origin. Particularly, we have constrained the dependence of the scale factor $a(t)$ on the cosmological time $t$ to be such that the range of the conformal time $\tau$ is unbounded on the negative real axis. Moreover, we have demanded that the dependence of $a(\tau)$ on $\tau$ is of the form \eqref{scalef}, which entails that the corresponding Universe underwent a phase of either exponential or power-law expansion in its early stages. From a geometrical point of view, our requirement on $a(t)$ entails that the underlying manifold can be endowed with a boundary $\Im^-$ which is either a cosmological horizon or a Big Bang hypersurface, and mathematically a null differentiable manifold of codimension $1$. On $\Im^-$ it is possible to intrinsically construct both a scalar and a Dirac field theory whose associated algebras of quantum observables are sufficiently big to contain the bulk counterpart via an injective $*$-homomorphism. The motivation to introduce this injective homomorphism lies in the particular topological structure of $\Im^-$, which allows to define an exact notion of a quasifree KMS state $\omega_\beta$ for the boundary theory with respect to the rigid translations along the complete null geodesics generating $\Im^-$. Furthermore, the above-mentioned injection allows to pull-back $\omega_\beta$ to the bulk, yielding a quasifree state which, both in the scalar and in the Dirac case, has various physically and mathematically relevant properties. To wit, such a state always fulfils the Hadamard condition and, when the mass vanishes, also the KMS one.

From a physical point of view, our results have two immediate possible applications: on the one hand, we can now extend the analysis of \cite{DFP} and look for stable, homogeneous, and isotropic solutions of the semiclassical Einstein equations driven by quantum fields of higher spin. This has been done in \cite{Thomas, dhmp}, and interesting results have been found. On the other hand, the states we have introduced are both Hadamard and possess an approximate thermodynamic interpretation also for massive fields. Hence, one can expect that they provide a natural starting point to analyse the cosmic neutrino background on a genuine quantum field theoretic level. It can be expected that the deviations from thermal equilibrium give rise to modifications of the standard picture, in contrast to the cosmic microwave background, which relies on a well-defined thermodynamic interpretation, as photons are massless. We hope to return to this issue in the near future.

\vspace*{9mm}
{\noindent\bf Acknowledgements}\\
C.D. gratefully acknowledges financial support from the Junior Fellowship Programme of the Erwin Schr\"odinger Institute and from the German Research Foundation DFG through the Emmy Noether Fellowship WO 1447/1-1. The work of T-P.H. is supported by the research clusters SFB676 and LEXI ``Connecting Particles with the Cosmos''. N.P. is supported in part by the ERC Advanced Grant 227458 OACFT ``Operator Algebras and Conformal Field Theory". C.D. and N.P. would like to thank the Erwin Schr\"odinger Institute in Vienna and the organisers of the workshop ``Quantum Field Theory on Curved Space-times and Curved Target Spaces'' for their warm hospitality.

\appendix
\section{On the Definition of Dirac Fields}\label{appdirac}
The role of this section is to provide a short introduction to the mathematical structures which are needed to define the notion of Dirac fields on a four-dimensional globally hyperbolic spacetime. Our goal is not to provide an exhaustive analysis but rather to offer to a potential reader the chance to understand the content of this paper without necessarily resorting to other long introductory manuscripts. Hence, this appendix shall summarise what has already been presented in \cite{Sanders, DHP, Thomas}.

In this paper we only consider Friedmann-Robertson-Walkers spacetimes $M$, which are 
four-dimensional, Hausdorff, connected, smooth, simply connected, oriented and time-oriented globally hyperbolic manifolds endowed with a Lorentzian metric, whose signature is chosen as $(-,+,+,+)$. Therefore, we can introduce two important ingredients:
\begin{itemize}
\item The {\bf spin group} $Spin(1,3)$ as the double cover of
$SO(1,3)$, {\it i.e.}, there exists the following short exact sequence of Lie group homomorphisms:
$$\left\{e\right\}\longrightarrow\mathbb{Z}_2\longrightarrow Spin(1,3)\longrightarrow SO(1,3)\longrightarrow
\left\{e\right\},$$
where $\left\{e\right\}$ stands for the trivial group, whereas $\mathbb{Z}_2\doteq\left\{\pm 1\right\}$ is the
cyclic group of order $2$.

\item The {\bf frame bundle} associated to the tangent bundle $TM$, that is, the principal bundle
$$FM=FM[SO_0(3,1), \pi',  M],$$
where $SO_0(3,1)$ is the component of the Lorentz group connected to the identity. Notice that $FM$ is build from the disjoint union $\bigsqcup_xF_xM$, where $F_xM$ is identified with the typical fibre $SO_0(3,1)$ and where $\pi':FM\to M$ is the projection map.
\end{itemize}

\noindent We are now in the position to introduce the main geometric structure at the heart of the construction and of the analysis of Dirac (co)spinor fields.

\begin{definition}\label{spinstr}
Given an oriented and time oriented spacetime $M$, a
{\bf spin structure} is the pair $(SM,\rho)$ where $SM\doteq SM[Spin_0(3,1),\widetilde\pi,M]$ is a principal  fibre bundle
over $M$ with the identity component of the spin group as typical fibre. Moreover, $\rho$ is a smooth
equivariant bundle morphism from $SM$ to $FM$, that is, the following two conditions hold:
\begin{enumerate}
\item $\rho$ is base point preserving, such that
$$\pi'\circ\rho =\widetilde\pi,$$
\item $\rho$ must be equivariant, i.e., calling $R_{\widetilde\Lambda}$ and $R_\Lambda$ the natural right
actions of $Spin_0(3,1)$ on $SM$ and of $SO_0(3,1)$ on $FM$ respectively, we require that
$$\rho\circ R_{\widetilde\Lambda}=R_{\Lambda}\circ\rho,\qquad\forall\Lambda\in SO_0(3,1),$$
where $\Lambda=\Pi(\widetilde\Lambda)$, being $\Pi$ is the surjective covering from $Spin(3,1)$ to $SO(3,1)$.
\end{enumerate}
\end{definition}

\noindent For the scenario we are interested in, the fact that $M$ is four-dimensional and globally hyperbolic not only assures that a spin structure exists, but this structure is also unique up to equivalence, since $M$ is a simply connected spacetime \cite{Geroch, Geroch2} ({\it c.f.}, \cite{Dimock} for the definition of equivalence in this context). The kinematically allowed configurations can now be readily defined out of the objects at hand.

\begin{definition}We call {\bf Dirac bundle} of a four dimensional Lorentzian spacetime M the $\bC^4$-bundle $DM\doteq SM\times_T\bC^4$ associated to $SM$ with respect to the
representation $T\doteq D^{\left(\frac{1}{2},0\right)}\oplus D^{\left(0,\frac{1}{2}\right)} $ of $SL(2,\bC)\sim Spin_0(3,1)$. This is the set of equivalence classes $[(p,
z)]$, where $p\in SM$, $z\in\bC^4$ and equivalence is defined out of the relation
$$(p_1,z_1)\sim(p_2,z_2),$$
if and only if there exists an element $A$ of $SL(2,\bC)$ such that $R_A(p_1)=p_2$ and $T(A^{-1})z_1=z_2$. The global structure of
$DM$ is that of a fibre bundle over $M$ with typical fibre $\bC^4$, and the projection map $\pi_D$ is traded
from the one of $SM$, namely, $\forall\,[(p,z)]\in DM$, it holds
$$\pi_D[(p,z)]\doteq\widetilde\pi(p).$$
Furthermore, if we endow $\bC^4$ with the standard non-degenerate internal product, we can
construct the {\bf dual Dirac bundle} $D^*M$ as the $\bC^{4*}$-bundle associated to $SM$ requiring that
$(p_1,z_1^*)$ and $(R_A(p_1),z_1^*T(A))$ are equivalent, where $^*$ denotes the adjoint with respect to the
inner product on $\bC^4$ and elements of $\bC^{4*}$ are understood as row vectors. Consequently, the dual
pairing of $\bC^4$ and $\bC^{4*}$ extends in a well-defined way to a fibrewise dual pairing of $DM$ and
$D^*M$.
\end{definition}
\noindent According to this definition, one can introduce the following notions.
\begin{itemize}
\item A {\bf Dirac spinor} is a smooth global section of the Dirac bundle, {\it i.e.},
$\psi\in\cE(DM)$. Since $DM$ is trivial due to $M$ being four-dimensional and simply connected, $\psi$ is (diffeomorphic to) a vector-valued function $\psi:M\to\bC^4$.
\item We call {\bf Dirac cospinor} a smooth
global section of the dual Dirac bundle, namely, $\psi'\in\cE(D^*M)$. On a FRW spacetime, $\psi':M\to\bC^{4*}$.
\end{itemize}

\section{Perturbative Construction and Analysis of the Classical Solutions}\label{modeanalysis}

In order to have full control on the behaviour of the classical solutions of both the scalar and of the Dirac Cauchy problems, we must discuss the particular form of the functions $T_k$ and $\chi_{k}$ which are present in the expansions \eqref{TK} and \eqref{ODE}, respectively. Since the analysis of the scalar case has been already performed in \cite{DMP2,DMP3, Nicola}, the aim of this section will be to only address the Dirac case. If we focus on \eqref{ODE}, then the construction of each $\chi_{k}$ can be performed along the same lines pursued in the scalar case. Moreover, we only discuss the asymptotic de Sitter case specified by $\de=0$ in \eqref{scalef} and briefly mention the modifications necessary to treat the asymptotic power-law case, {\it i.e.} $\de>0$, afterwards.

In other words, we first consider \eqref{ODE} in the de-Sitter Universe and, then, we solve the general case by means of a convergent perturbation series. Hence the starting point is \eqref{ODE} written as
\beq\label{pert}
\left(\frac{d^2}{d\tau^2}+V_0(k,\tau)+V(\tau)\right)\chi_{k}(\tau)=0\,,
\eeq
where $V_0(k,\tau)\doteq k^2+\left(\frac{m^2}{H^2}-i\frac{m}{H}\right)\frac{1}{\tau^2}$ whereas $V(\tau)\doteq
k^2+a(\tau)^2m^2-ia^\prime(\tau)m - V_0(k,\tau)$. Notice, that, according to \eqref{scalef} with $\gamma=-H^{-1}$, the function $V(\tau)$ is either $0$ or at least of order $O(\tau^{-3})$.

Since we would like to treat $V(\tau)$ as a perturbation potential, we obviously start from the unperturbed solution.
If $V(\tau)=0$ we are in a pure de Sitter spacetime and two solutions which satisfy the desired initial conditions are
\beq\label{sol}
\chi^0_{k,1} = A \frac{\sqrt{-\pi \tau}}{2} H_\nu^{(2)}(-k\tau) \; \qquad
\chi^0_{k,2} = \overline{B \frac{\sqrt{-\pi \tau}}{2} H_\nu^{(1)}(-k\tau)},
\eeq
where $H_\nu^{(1)}$ and $H_\nu^{(2)}$ are the first and second Hankel functions with
\beq\label{nu}
\nu\doteq
\frac{1}{2}+i\frac{m}{H} \;.
\eeq
Furthermore, in order to satisfy the desired normalisation \eqref{normalisation} we  have
$$
A= \frac{e^{-i\frac{\pi}{2}\nu -i \frac{\pi}{4} }}{\sqrt{2k}}\;, \qquad
B= \frac{e^{+i\frac{\pi}{2}\nu +i \frac{\pi}{4} }}{\sqrt{2k}}.
$$
Since we have a full control of the solutions of \eqref{ODE} in the cosmological de Sitter spacetime, we can now revert back our attention to \eqref{pert} and we shall simply proceed along the same lines of \cite{DMP2, DMP3} for the real scalar field. Hence we write a formal solution as a Dyson-Duhamel series:
\beq\label{Dysonseries}
\chi_{k}(\tau_0)=\chi_{k,1}^0(\tau_0)+\sum\limits_{n=1}^\infty(-1)^n\int\limits_{-\infty}^{\tau_0} d\tau_1\int\limits_{-\infty}^{\tau_1} d\tau_2...\int\limits_{-\infty}^{\tau_{n-1}} d\tau_n
\prod\limits_{i=1}^{n} S_k(\tau_{i-1},\tau_i)V(\tau_i)\chi^0_{k,1}(\tau_n),
\eeq
where $\tau_0=\tau$ and where $S_k(\tau,\tau')$ is the retarded fundamental solution of \eqref{ODE} in cosmological de Sitter spacetime:
$$
S_k(\tau,\tau')\doteq -i2k(\chi^0_{k,1}(\tau)\overline{\chi^0_{k,2}(\tau')}-\chi^0_{k,1}(\tau')\overline{\chi^0_{k,2}(\tau)}).
$$
Notice that, being antisymmetric, $S_k(\tau,\tau)=0$, while the time derivative is a conserved quantity of the underlying dynamical system and thus we can simply set $\left.\frac{d S_k(\tau,\tau')}{d\tau}\right|_{\tau=\tau'}=1$. Of course \eqref{Dysonseries} would be rather useless if we were not able to prove its convergence and, to this avail, we need to provide suitable estimates for both $\chi^0_{k,j}$ and for the involved integrand.

Let us start from $S_k(\tau,\tau')$ and, since we would like to have an estimate for $\chi_{k}(\tau_0)$ which guaranties square integrability in $k$ and at most polynomial growth for large $k$, we shall give two estimates, for small and large $k$.
Let us start analysing the large $k$ behaviour.
We observe that, being $H^{(2)}_\nu(z)=\overline{H^{(1)}_{\overline{\nu}}(\overline{z})}$, we can make use of the integral representation present in formula 9 in 8.421 of \cite{Grad}, in order to express
both $\chi^0_{k,j}$ in a more manageable form.
We notice that the integral representation can be uniformly estimated in order to give
\beq
|\chi^0_{k,j}|\leq C_{\nu} \frac{1}{k},
\eeq
where $C_\nu$ is some positive constant.
With this result we obtain, by direct inspection that
\beq\label{Slargek}
|S_k(\tau,\tau')| \leq C \frac{1}{k},
\eeq
holds uniformly in $\tau$ and $\tau'$ for some positive constant $C$.
The latter will be suitable for analysing the convergence of the perturbative series for large value of $k$.
We proceed now to derive another estimate in order to control the behaviour of $|\chi_{k}|$ for small values of $k$.
We notice that $S_k(\tau,\tau')$ can be rewritten in terms of the Bessel $J_\nu$ functions in the place of the Hankel one, to obtain that
$$
|S_k(\tau,\tau')| \leq  C_\nu'\sqrt{\tau\tau'}
\left|J_\nu(-k\tau)J_{-\nu}(-k\tau')-J_\nu(-k\tau')J_{-\nu}(-k\tau)\right|,
$$
where $C_\nu$ is again a certain positive constant.
We can now make use of the recursive relations to rewrite $J_{-\nu}(z)$ in terms of the $J_{-\nu+1}$ and its first derivative (formula 1 in 8.472 in \cite{Grad}).
Now both Bessel functions can be estimated analysing their integral representation present in formula 5 in 8.411 in \cite{Grad}.
In that way we obtain the following estimate valid for $|\tau|<|\tau'|$
\beq\label{Ssmallk}
|S_k(\tau,\tau')| \leq  C \at |\tau'| + k\; |\tau'|^2 \ct,
\eeq
where $C$ is some positive constant.

With these estimates in mind, we can now analyse the convergence of the series \eqref{Dysonseries}.
From \eqref{Ssmallk} we obtain that the following uniform estimate holds
\beq\label{smallk}
|\chi_{k}(\tau)-\chi^0_{k,1}(\tau)|
\leq
\at \exp \int\limits_{-\infty}^\tau C \at |\tau_1| + k\; |\tau_1|^2\ct-1\ct
\, |V(\tau_1)|\; d\tau_1 \sup_{\tau_2\in(-\infty,\tau)}|\chi_{k,1}^0(\tau_2)| \;,
\eeq
while, from \eqref{Slargek}, the series \eqref{Dysonseries} can be uniformly estimated by
\beq
|\chi_{k}(\tau)-\chi^0_{k,1}(\tau)|
\leq\label{largek}
\at \exp \int\limits_{-\infty}^\tau \frac{C}{k} -1 \ct
\, |V(\tau_1)|\; d\tau_1 \sup_{\tau_2\in(-\infty,\tau)}|\chi_{k,1}^0(\tau_2)| \;.
\eeq
If $|V(\tau)| \leq C |\tau|^{3+\epsilon}$, both integrals present in the preceding  inequalities are finite.

We proceed now to briefly discuss the approximation we shall use for the time derivative of $\chi_{k}(\tau)$.
Since the employed procedure is the same as the one written above, we simply summarise the result.
We obtain that also the series for the first order derivative of $\chi_{k}$ in the time variable is uniformly bounded by a convergent series, namely
\begin{gather}\label{estimatepachi}
\left|\frac{d}{d\tau} \chi_{k}(\tau) - \frac{d}{d\tau} \chi^0_{k,1}(\tau)   \right| \leq \notag
\\
\leq C \int\limits_{-\infty}^\tau
W_2(k,\tau,\tau_2)
|V(\tau_2)| d\tau_2   \;
\exp \int\limits_{-\infty}^\tau W_1(\tau_1,k)\;|V(\tau_1)|\; d\tau_1 \sup_{\tau_3\in(-\infty,\tau)}|\chi_{k,1}^0(\tau_3)| \;.
\end{gather}
where $W_1(\tau_1,k)$ is a positive real function greater than $|S_k{(\tau,\tau_1)}|$ uniformly in $\tau$.
Hence, according both to \eqref{Ssmallk} and to \eqref{Slargek}, $W_1(\tau_1,k)$ can be either $C(|\tau_1|+k|\tau_1|^2)$ or $k^{-1}$.
Furthermore, $W_2(k,\tau,\tau_2)$ is another positive real function that controls uniformly the derivative of the propagator, that is
$|\frac{\pa}{\pa\tau } S_k{(\tau,\tau_2)}|\leq W_2(k,\tau,\tau_2)$.
Proceeding as above we observe that $W_2$ can be either
$C(1+\frac{1}{|k\tau_2|})$,  which is an appropriate bound for large values of $k$, or
$C(1+ \frac{|\tau_2|}{|\tau|}+{|k\tau_2|}+{|k\tau_2|}^2)$, which is better suited for small $k$.
We summarise the discussion as:

\begin{lemma}\label{modesregularity}
For sufficiently large values of $|\tau|$, the series \eqref{Dysonseries} for both $\chi_{k}$ and its first time derivative $\pa_\tau\chi_{k}$ is uniformly convergent if $|V(\tau)| \leq C|\tau|^{3+\epsilon}$.
Furthermore, for fixed $\tau$, both $\chi_{k}(\tau)$ and $\pa_\tau\chi_{k}(\tau)$ are contained in
$L^2([0,k_1],k^2dk)$ for any $k_1>0$.
For large values of $k$ both $\chi_{k}(\tau)$ and its first time derivative are at most growing polynomially.
\end{lemma}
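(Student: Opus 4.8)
The plan is to read the lemma as the synthesis of the a~priori bounds collected in \eqref{Slargek}--\eqref{estimatepachi}, organised around the Volterra--Dyson representation \eqref{Dysonseries}. Since the unperturbed de Sitter modes \eqref{sol} are known explicitly in terms of Hankel functions, the whole problem reduces to controlling the nested time integrals generated by the retarded Green function $S_k(\tau,\tau')$ acting on the perturbation $V$. The first step I would carry out is the standard simplex domination of the general term of \eqref{Dysonseries}: if $W(\cdot,k)$ is any majorant of $|S_k(\tau,\cdot)|$ that is uniform in the upper argument $\tau$, then, because the $n$-fold integration runs over the ordered simplex $\tau>\tau_1>\dots>\tau_n$, the $n$-th summand is bounded by $\tfrac{1}{n!}\big(\int_{-\infty}^{\tau}W(\tau_1,k)\,|V(\tau_1)|\,d\tau_1\big)^n\,\sup_{\tau_2}|\chi^0_{k,1}(\tau_2)|$. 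Summing over $n$ produces the closed exponential majorants \eqref{smallk} and \eqref{largek}, so uniform convergence in $\tau$ (for $|\tau|$ large) is reduced to the finiteness of a single weighted integral of $|V|$.

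The reason two regimes are unavoidable is that no single bound on $S_k$ is good for both small and large $k$. For large $k$ I would invoke \eqref{Slargek}, $|S_k|\le C/k$, which follows from the integral representation of the Hankel functions (formula~8.421(9) in \cite{Grad}); the exponent $\int_{-\infty}^{\tau}(C/k)|V|\,d\tau_1$ is then finite, giving \eqref{largek}. For small $k$ the Hankel functions must be rewritten through $J_{\pm\nu}$ and their recursion relations (formulae~8.472(1) and 8.411(5) in \cite{Grad}), yielding the polynomially weighted bound \eqref{Ssmallk}, $|S_k|\le C(|\tau'|+k|\tau'|^2)$, whence \eqref{smallk}. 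In both cases the decay of $V$, which by \eqref{scalef} is either zero or $O(\tau^{-3})$ and may be taken to satisfy $|V(\tau)|\le C|\tau|^{-3-\epsilon}$, is exactly what renders $\int_{-\infty}^{\tau}(|\tau_1|+k|\tau_1|^2)|V(\tau_1)|\,d\tau_1$ convergent at the lower endpoint, the weight $|\tau_1|^2$ being the critical one. The first time derivative is then treated by applying the same simplex estimate to the differentiated series, the only new ingredient being a uniform majorant $W_2$ of $|\pa_\tau S_k|$; collecting these gives \eqref{estimatepachi} and the uniform convergence of the series for $\pa_\tau\chi_k$.

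It remains to extract the two qualitative statements at fixed $\tau$. For the membership in $L^2([0,k_1],k^2dk)$ I would argue in the small-$k$ region: \eqref{smallk} bounds $|\chi_k-\chi^0_{k,1}|$ by a factor that stays bounded as $k\to0$ (its exponent tends to $\int_{-\infty}^{\tau}C|\tau_1||V|\,d\tau_1<\infty$) times $\sup|\chi^0_{k,1}|$, while the small-argument asymptotics of $H^{(2)}_\nu(-k\tau)$ with $\nu=\tfrac12+i\tfrac{m}{H}$ from \eqref{nu} give $|\chi^0_{k,1}(\tau)|\sim k^{-1}$, since only $\mathrm{Re}\,\nu=\tfrac12$ governs the modulus. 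Hence $|\chi_k(\tau)|\lesssim k^{-1}$, and $\int_0^{k_1}|\chi_k|^2k^2\,dk\lesssim\int_0^{k_1}dk<\infty$, the weight $k^2$ absorbing precisely the $k^{-2}$ singularity at the origin; the identical reasoning applied to \eqref{estimatepachi} handles $\pa_\tau\chi_k$. For the polynomial growth at large $k$ one combines \eqref{largek}, which shows $|\chi_k-\chi^0_{k,1}|$ is bounded (indeed $O(k^{-1})$), with the large-argument behaviour of $\chi^0_{k,1}$, which decays like $k^{-1}$; the derivative is controlled analogously through \eqref{estimatepachi}.

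The step I expect to be the main obstacle is entirely the one in the second paragraph: producing the two majorants of $S_k$ and of $\pa_\tau S_k$ that are simultaneously uniform in $k$ and in $\tau,\tau'$, since this is where the special-function machinery -- the Hankel integral representation for large $k$, and the Bessel recursion together with an integral representation for small $k$ -- must be deployed carefully so that the constants genuinely do not depend on $\tau$ and $\tau'$. A secondary and more routine point is the asymptotic power-law case $\de>0$: there the reference operator in \eqref{pert} is no longer the de Sitter one, but the perturbation $V$ decays faster at $\tau\to-\infty$, so the same Volterra scheme applies verbatim and all the estimates only improve.
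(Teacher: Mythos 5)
Your proposal is correct and follows essentially the same route as the paper: the paper's proof of the lemma is little more than a pointer back to the Dyson--Duhamel series \eqref{Dysonseries}, the two Green-function majorants \eqref{Slargek} and \eqref{Ssmallk} obtained from the Hankel/Bessel integral representations, the resulting exponential bounds \eqref{smallk} and \eqref{largek}, and the derivative estimate \eqref{estimatepachi} -- precisely the chain of estimates you reconstruct, including the implicit $1/n!$ simplex bound behind the exponential majorants and the (tacitly corrected) reading of the hypothesis as the decay condition $|V(\tau)|\le C|\tau|^{-3-\epsilon}$.
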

\begin{proof}
The convergence of the series descends easily from the discussion presented above.
The local square integrability shown by $\chi_{k}$ descends from
\eqref{smallk} while the polynomial boundedness for large $k$ from
\eqref{largek}. The very same properties shown by $\pa_\tau\chi_{k}$ are obtained from \eqref{estimatepachi}.
\end{proof}

As anticipated, we now briefly mention how analogous results can be obtained in the asymptotic power-law case. In this case, the solutions can be constructed as a perturbative series around the solutions of the massless, conformally coupled Klein-Gordon equation. Hence, $V_0(\tau)\equiv0$, whereas $V(\tau)=ia^\prime(\tau) m+a^2(\tau)m^2$. Suitable estimates for $V(\tau)$ follow trivially from \eqref{scalef}, consequently, the convergence and $k$-regularity properties of the perturbation series can be discussed as above. We refer the reader interested in further details to the proof of lemma II.4.1.5 in \cite{Thomas}.

\end{document}